\documentclass[11pt,twoside]{amsart}
\usepackage{epic}
\usepackage{amsmath}
\usepackage{amssymb}  \usepackage{color}
\usepackage{amscd}
\usepackage{amsfonts}
\usepackage{amsthm} \usepackage[all]{xy}
\usepackage[colorlinks,linkcolor=blue,anchorcolor=blue,citecolor=blue]{hyperref}

\newcommand{\be}{\begin{equation}}
\newcommand{\ee}{\end{equation}}
\newcommand{\bea}{\begin{eqnarray}}
\newcommand{\eea}{\end{eqnarray}}
\newcommand{\ben}{\begin{eqnarray*}}
\newcommand{\een}{\end{eqnarray*}}

\newcommand{\bC}{\mathbb{C}} \newcommand{\bL}{\mathbb{L}} 
\newcommand{\bM}{\mathbb{M}} \newcommand{\bP}{\mathbb{P}} 
\newcommand{\bZ}{\mathbb{Z}} \newcommand{\bR}{\mathbb{R}}
\newcommand{\cS}{\mathcal{S}} \newcommand{\cX}{\mathcal{X}}

\newcommand{\cor}[1]{\langle {#1} \rangle} 

\DeclareMathOperator{\SL}{SL} \DeclareMathOperator{\Img}{Im} \DeclareMathOperator{\diag}{diag}

\newtheorem{prop}{Proposition}[section]
\newtheorem{thm}[prop]{Theorem}
\newtheorem{lemma}[prop]{Lemma}

\newtheorem{example}[prop]{Example}
\newtheorem{conj}[prop]{Conjecture}

\definecolor{A}{rgb}{.75,1,.75}

\begin{document}

\title[Quantum McKay Correspondence for Disc Invariants]
{Quantum McKay Correspondence for Disc Invariants of orbifold vertex}
\author{Hua-Zhong Ke, Jian Zhou}

\address{Department of Mathematical Sciences, Tsinghua University, Beijing, 100084, China}
\email{kehuazh@mail.sysu.edu.cn}
\address{Department of Mathematical Sciences, Tsinghua University, Beijing, 100084, China}
\email{jzhou@math.tsinghua.edu.cn}

\begin{abstract}
For a finite abelian subgroup $G \subset SL(3,\bC)$, we describe a a systematic procedure to find toric crepant resolutions of orbifold vertex $[\bC^3/G]$, and show that the generating series of certain disc invariants of the orbifold vertex can be suitably identified with  the generating series of certain disc invariants of its toric crepant resolutions.
\end{abstract}

% \date{}
\maketitle
\tableofcontents

\section{Introduction}

Let $G \subset SL(n, \bC)$ be a finite subgroup,
and let $\mathcal{X}$ be the orbifold $[\bC^n/G]$.
When $n = 2$ or $3$,
the coarse moduli space $\bC^n/G$ of $\mathcal{X}$ always admits a crepant resolution $Y$
(when $n=3$, $Y$ may not be unique).
McKay correspondence predicts that invariants of $Y$ should coincide with orbifold invariants of $\mathcal{X}$.
See \cite{Reid} for an exposition of such results for classical invariants.
After the introduction of orbifold Gromov-Witten invariants,
Ruan \cite{Ru} made the Crepant Resolution Conjecture that the small quantum cohomology of $Y$ is related to the orbifold small quantum cohomology of $\mathcal{X}$
in a suitable way.
There have recently appeared much work and some refinements of this conjecture.
See e.g. \cite{BG,CCIT,Z08} and references therein.
While most of the work in the existing literature focus on such quantum McKay correspondence for closed string invariants,
in this paper we will study the quantum McKay correspondence for some open string invariants.

We will be content with the case when $n=3$ and $G$ is abelian in this work.
In this case it is well-known that $\bC^3/G$ admits a toric crepant resolution \cite{MOP}.
Toric geometry then provides a unified approach to compute the disc invariants of both $[\bC^3/G]$ and 
the crepant resolutions of $\bC^3/G$.
This is compatible with the physical point of view of
gauged linear sigma model (GLSM) used by Witten to study phase transitions in string theory \cite{W}.
Mathematically,
we understand both the orbifold $[\bC^3/G]$ and its toric crepant resolution as  symplectic reduction of a Hamiltonian $(S^1\big)^k$-action on $\mathbb{C}^{3+k}$ by diagonal matrices,
with moment map $\mu: \mathbb{C}^{3+k} \to \bR^k$.
The regular values of $\mu$ form various open chambers in $\bR^k$,
when $\epsilon \in \bR^k$ moves in one of these chambers,
$Y_\epsilon: = \mu^{-1}(\epsilon)/(S^1)^k$ is a toric crepant resolution of $Y_0:=\mu^{-1}(0)/(S^1)^k$,
which we identify with $\bC^3/G$.
Gromov-Witten invariants of $Y_\epsilon$ should define analytic functions on the complexifications of these chambers near the infinity,
and orbifold Gromov-Witten invariants should define an analytic function on the complexification of a neighborhood of the origin.
The quantum McKay correspondence should then be understood as suitable identification of these functions.
At least in the case of genus zero Gromov-Witten invariants,
this can be checked in may cases because one can use the Picard-Fuchs system (GKZ system)
associated to the GLSM to compute genus zero Gromov-Witten invariants of $Y$ via mirror symmetry \cite{CKYZ,HKTY}.
We will treat this in a separate paper.
In this paper we extend this picture to the open string case when one considers a special Lagrangian submanifold of $Y$ as introduced in \cite{AV}.
For both $\cX = [\bC^3/G]$ and its toric crepant resolution $Y$ one can explicitly compute their disc invariants
by the methods of \cite{BC}  and \cite{FL} respectively.
One can verify that they arise from the same GKZ system, through different charge vectors,
hence they can be identified with each other by suitable change of variables.

The rest of the paper is organized as follows. In Section $2$, we first follow \cite{MOP} to find a projective toric resolution $Y$ of $\mathcal{X}$ and then give the GLSM description of $Y$ to write down the superpotential; in Section $3$, we follow \cite{BC,CCIT} to find the orbifold disc potential; in Section $4$, we give the change of variables and compare the two disc potentials; in Section $5$, we give some examples.

\section{Toric Crepant Resolutions of $\mathbb{C}^3/G$}\label{section 2}

In this section we will describe the toric crepant resolutions of $\bC^3/G$,
where $G$ is a nontrivial finite abelian subgroup of $\SL(3,\mathbb{C})$
acting on $\mathbb{C}^3$ diagonally.

\subsection{Fermionic shifts}

We will denote the three linear coordinates on $\bC^3$ by $z_0, z_1, z_2$ respectively.
For each $g\in G$,
there are unique rational numbers $F_g^{(0)},F_g^{(1)},F_g^{(2)}\in[0,1)$ (called the {\em fermionic shifts} of $g$)
such that
$$g.(z_0,z_1,z_2)=\big(z_0\cdot e^{2\pi\sqrt{-1}F_g^{(0)}} ,z_1\cdot e^{2\pi\sqrt{-1}F_g^{(1)}},z_2\cdot e^{2\pi\sqrt{-1}F_g^{(2)}}\big).$$
The {\em age} of $g$ is defined by
\begin{eqnarray}
a(g):=F_g^{(0)}+F_g^{(1)}+F_g^{(2)}.\label{age}
\end{eqnarray}
We shall use the additive notation for $G$, and in this notation, we have for $j=0,1,2$,
\begin{eqnarray}
 F^{(j)}_{g+h}=\langle F^{(j)}_{g}+F^{(j)}_{h}\rangle,\quad\forall g,h\in G.\label{adno}
\end{eqnarray}
Here for any $x\in\mathbb{R}$, we write $x=\lfloor x\rfloor+\langle x\rangle$,
where $\lfloor x\rfloor\in\mathbb{Z}$ and $0\leqslant\langle x\rangle<1$.
Since $G\neq\{0\}$, it follows that the ages take values in $\{0,1,2\}$.

For $j=0,1,2$,
let $G_j=\{g\in G\mid F^{(j)}_g=0\}$.
Then each $G_j$ is a subgroup of $G$.
Note that the quotient group $G/G_j$ has a natural action on the $z_j$-axis
which is effective,
and hence it is necessarily a cyclic group.
So for any $g \in G$,
$$0 \leq F_g^{(j)} \leq 1 - \frac{1}{|G/G_j|}.$$
When $G_j \neq G$,
one can find $\xi^{(j)}\in G$ such that $F^{(j)}_{\xi^{(j)}}=\frac{1}{\vert G/G_j\vert}$.
Note that $G\neq\{0\}$, and therefore
if both $G/G_1$ and $G/G_2$ are trivial,
i.e., for any $g \in G$,
$F_g^{(1)} = F^{(2)}_g = 0$,
then we also have $F^{(0)}_g = 0$,
and so $g$ is trivial.
Since we are working with nontrivial $G$,
we can assume without loss of generality that  $G/G_1\neq\{0\}$
and choose $\alpha \in G$ with
\begin{eqnarray}
F^{(1)}_{\alpha}=\frac{1}{\vert G/G_1\vert}.\label{0}
\end{eqnarray}
For $i=0,2$,
the $G$-action on the $z_i$-axis factors through $G/G_i$,
so there are integers $b, c$ such that
\ben
&& F^{(0)}_{\alpha} = \frac{b}{|G/G_0|}, \quad 0 \leq b < |G/G_0|, \\
&& F^{(2)}_{\alpha} = \frac{c}{|G/G_2|}, \quad 0 \leq c < |G/G_2|.
\een
In other words,
\be
\alpha = \begin{pmatrix}
e^{\frac{2\pi\sqrt{-1}b}{|G/G_0|}} & 0 & 0 \\
0 & e^{\frac{2\pi\sqrt{-1}}{|G/G_1|}} & 0 \\
0 & 0 & e^{\frac{2\pi\sqrt{-1}c}{|G/G_2|}}
\end{pmatrix}
\ee
Moreover, for each $g\in G_1$, if $F^{(2)}_g=0$,
then $F^{(0)}_g=0$ since we already have $F^{(1)}_g=0$, which implies that $g=0$.
Therefore $G_1$ acts effectively on the $z_2$-axis
and hence is a cyclic group.
So we can choose a generator $\beta$ of $G_1$ such that
\begin{eqnarray}
F^{(2)}_{\beta}=\frac{1}{\vert G_1\vert}-\delta_{\vert G_1\vert,1}.\label{5}
\end{eqnarray}
By the definition of $G_1$,
we have automatically
\be \label{eqn:F(beta_1)}
F^{(1)}_{\beta} = 0.
\ee
There is a unique integer $a$ such that
\be
F^{(0)}_{\beta} = \frac{a}{|G_1|}, \quad 0 \leq a < |G_1|, \quad (a, |G_1|) = 1.
\ee
In other words,
\be
\beta = \begin{pmatrix}
e^{\frac{2\pi\sqrt{-1}a}{|G_1|}} & 0 & 0 \\
0 & 1 & 0 \\
0 & 0 & e^{\frac{2\pi\sqrt{-1}}{|G_1|}}
\end{pmatrix}
\ee

For later use, we prove the following two Lemmas.

\begin{lemma}\label{lem}
Assume that $\vert G\vert>\vert G_0\vert$.
Then there exists $$h\in G_s:=\{g\in G\vert a(g)=1\},$$ such that $F^{(0)}_h=\frac{1}{\vert G/G_0\vert}$.
\end{lemma}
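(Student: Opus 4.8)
The plan is to exhibit the desired element inside a single coset. Using the earlier construction (the $j=0$ case), since $|G|>|G_0|$ means $G_0\neq G$, there is $\gamma\in G$ with $F^{(0)}_\gamma=\frac{1}{m}$, where $m:=|G/G_0|\ge 2$. Every $h=\gamma+g$ with $g\in G_0$ then satisfies $F^{(0)}_h=\frac{1}{m}$, because $F^{(0)}_g=0$ and the addition rule \eqref{adno} gives $F^{(0)}_h=\langle F^{(0)}_\gamma+0\rangle=\frac1m$. Since $F^{(0)}_h\neq 0$ we have $h\neq 0$, so $a(h)\in\{1,2\}$. Hence it is enough to arrange $F^{(1)}_h+F^{(2)}_h<1$: this forces $a(h)=\frac1m+F^{(1)}_h+F^{(2)}_h<1+\frac1m<2$, so $a(h)=1$ and $h\in G_s$ is the element sought. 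I would therefore spend the proof searching for a suitable $g\in G_0$.

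The next step is to pin down the structure of $G_0$. If $g\in G_0$ also has $F^{(1)}_g=0$, then $a(g)=F^{(2)}_g$ lies in $[0,1)\cap\mathbb{Z}=\{0\}$, forcing $g=0$; thus $F^{(1)}\colon G_0\to\mathbb{R}/\mathbb{Z}$ is an injective homomorphism, $G_0$ is cyclic of order $n:=|G_0|$, and its image is exactly $\{0,\tfrac1n,\dots,\tfrac{n-1}{n}\}$. For a nontrivial $g\in G_0$ with $F^{(1)}_g=\frac{k}{n}$ one gets $a(g)=1$, hence $F^{(2)}_g=1-\frac{k}{n}$. Writing $u:=F^{(1)}_\gamma$ and $v:=F^{(2)}_\gamma$, the quantity to control becomes $F^{(1)}_h+F^{(2)}_h=\langle u+\tfrac{k}{n}\rangle+\langle v-\tfrac{k}{n}\rangle$.

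The analytic heart is an elementary fractional–part computation. If $a(\gamma)=1$ already, then $k=0$ (i.e.\ $h=\gamma$) works immediately. Otherwise $a(\gamma)=2$, so $u+v=2-\frac1m$, and examining the integer parts of $u+\frac{k}{n}\in[0,2)$ and $v-\frac{k}{n}\in(-1,1)$ shows that $\langle u+\tfrac{k}{n}\rangle+\langle v-\tfrac{k}{n}\rangle<1$ holds precisely when $\lfloor u+\tfrac{k}{n}\rfloor+\lfloor v-\tfrac{k}{n}\rfloor=1$, which amounts to $\frac{k}{n}\in[\,1-u,\;v\,]$. This interval lies in $(0,1)$ (since $u,v<1$) and has length $u+v-1=1-\frac1m$.

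The step I expect to be the main obstacle is guaranteeing that this interval actually catches one of the admissible points $\frac{k}{n}$. When $n=1$ the coset is the single point $\gamma$; but then $F^{(0)}$ is injective, $G=\langle\gamma\rangle$ is cyclic acting faithfully on the $z_0$-axis, and a direct check (writing $F^{(1)}_\gamma=\frac{r}{m}$, $F^{(2)}_\gamma=\frac{s}{m}$ with $1+r+s$ a multiple of $m$ in $\{1,\dots,2m-1\}$, hence equal to $m$) gives $a(\gamma)=1$, so there is nothing to prove. For $n\ge 2$, combined with $m\ge 2$ this yields $\frac1m+\frac1n\le 1$, i.e.\ the length $1-\frac1m$ of the interval is at least the spacing $\frac1n$ of the points; since the interval is contained in $(0,1)$, it must then contain some $\frac{k}{n}$ with $1\le k\le n-1$. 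The corresponding $g\in G_0$ gives $a(\gamma+g)=1$, and $h:=\gamma+g$ completes the proof.
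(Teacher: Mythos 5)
Your proof is correct and takes essentially the same route as the paper's: both perturb the element $\gamma$ with $F^{(0)}_\gamma=\tfrac{1}{|G/G_0|}$ by an age-one element of $G_0$, and both guarantee a suitable perturbation exists because the relevant interval (your $[1-u,v]$, the paper's $[\,|G_0|(1-F^{(1)}_\xi),\,|G_0|F^{(2)}_\xi\,]$, which differ only by rescaling by $|G_0|$) has length $|G_0|(1-\tfrac{1}{|G/G_0|})\geq 1$ relative to the spacing of admissible points. Your separate treatment of $|G_0|=1$ via the congruence $1+r+s\equiv 0\pmod m$ replaces the paper's direct upper bound $a(\xi)<2$, but is an equivalent computation.
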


\begin{proof}
First  find $\xi\in G$ such that $F^{(0)}_{\xi}=\frac{1}{\vert G/G_0\vert}$.

If $G_0=\{0\}$, then $F^{(0)}_{\xi}=\frac{1}{\vert G\vert}$. Therefore,
\begin{eqnarray*}
0<a(\xi)&=&F^{(0)}_{\xi}+F^{(1)}_{\xi}+F^{(1)}_{\xi}\\
&\leqslant&\frac{1}{\vert G\vert}+\bigg(1-\frac{1}{\vert G/G_1\vert}\bigg)+\bigg(1-\frac{1}{\vert G/G_2\vert}\bigg)\\
&\leqslant&\frac{1}{\vert G\vert}+\bigg(1-\frac{1}{\vert G\vert}\bigg)+\bigg(1-\frac{1}{\vert G\vert}\bigg)\\
&=&2-\frac{1}{\vert G\vert}<2.
\end{eqnarray*}
Hence $a(\xi)=1$ and we can take $h=\xi$.

If $G_0\neq\{0\}$, then we have the following two cases:\\
\noindent (1) $a(\xi)=1$. Take $h=\xi$. \\
\noindent (2) $a(\xi)=2$. Note that $G_0$ acts effectively on the $z_1$-axis and hence is a cyclic group.
So we can find $\eta\in G_0$ such that
$$F^{(0)}_{\eta}=0,\quad F^{(1)}_{\eta}=\frac{1}{\vert G_0\vert},\quad
F^{(2)}_{\eta}=1-\frac{1}{\vert G_0\vert}.$$
Also note that
\begin{eqnarray*}
&&\vert G_0\vert\cdot F^{(2)}_{\xi}-\vert G_0\vert\cdot\Big(1-F^{(1)}_{\xi}\Big) \\
&=&\vert G_0\vert\cdot\bigg(F^{(0)}_{\xi}+F^{(1)}_{\xi}+F^{(2)}_{\xi}-1-F^{(0)}_{\xi}\bigg)\\
&=&\vert G_0\vert\cdot\bigg(1-\frac{1}{\vert G/G_0\vert}\bigg)\\
&\geqslant&\vert G_0\vert\cdot\bigg(1-\frac{1}{2}\bigg)\\
&=&\frac{\vert G_0\vert}{2}\geqslant 1.
\end{eqnarray*}
So we may find an integer $k_0$ such that
$$0<\vert G_0\vert\cdot(1-F^{(1)}_{\xi})\leqslant k_0\leqslant\vert G_0\vert\cdot F^{(2)}_{\xi}
<\vert G_0\vert.$$
With this $k_0$,
we have $$1\leqslant F^{(1)}_{\xi}+\frac{k_0}{\vert G_0\vert}<2,\quad 0\leqslant F^{(2)}_{\xi}
 -\frac{k_0}{\vert G_0\vert}<1,$$
and therefore
\ben
&& \biggl\langle F^{(1)}_{\xi}+\frac{k_0}{\vert G_0\vert}\biggr\rangle
= F^{(1)}_{\xi}+\frac{k_0}{\vert G_0\vert} -1,\\
&& \biggl\langle  F^{(2)}_{\xi} -\frac{k_0}{\vert G_0\vert} \biggr\rangle
=  F^{(2)}_{\xi} -\frac{k_0}{\vert G_0\vert}.
\een
Now we have
\begin{eqnarray*}
a(\xi+k_0\eta)&=&F^{(0)}_{\xi+k_0\eta}+F^{(1)}_{\xi+k_0\eta}+F^{(2)}_{\xi+k_0\eta}\\
&=&F^{(0)}_{\xi}+\langle F^{(1)}_{\xi}+\frac{k_0}{\vert G_0\vert}\rangle
+\langle F^{(2)}_{\xi}-\frac{k_0}{\vert G_0\vert}\rangle\\
&=& F^{(0)}_{\xi}+F^{(1)}_{\xi}+\frac{k_0}{\vert G_0\vert}-1+F^{(2)}_{\xi}-\frac{k_0}{\vert G_0\vert} \\
&=&2-1=1.
\end{eqnarray*}
So we can take $h=\xi+k_0\eta$.
\end{proof}

\begin{lemma}
We have a bijection of sets:
\begin{eqnarray}
 \{0,1,2\dots, |G/G_1|-1\}\times \{0,1, \dots, |G_1|-1\}&\xrightarrow{1:1}&G\nonumber\\
(k, l)&\mapsto& k\alpha + l \beta.\label{1}
\end{eqnarray}
\end{lemma}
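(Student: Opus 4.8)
The plan is to exploit that the domain and codomain are finite sets of the same cardinality, so it suffices to verify injectivity. By Lagrange's theorem applied to the subgroup $G_1\subset G$ we have $\vert G\vert=\vert G/G_1\vert\cdot\vert G_1\vert$, which is precisely the number of pairs $(k,l)$ in the domain; hence the map is bijective as soon as it is injective.

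The key structural point is that reduction modulo $G_1$ decouples the two coordinates. First I would record that the image $\bar\alpha$ of $\alpha$ in the quotient $G/G_1$ is a generator: the group $G/G_1$ acts faithfully on the $z_1$-axis, so it is cyclic of order $\vert G/G_1\vert$, and combining (\ref{0}) with the additivity rule (\ref{adno}) gives $F^{(1)}_{k\alpha}=\langle k/\vert G/G_1\vert\rangle$, so that $k\alpha\in G_1$ if and only if $\vert G/G_1\vert$ divides $k$. In particular $\bar\alpha$ has order $\vert G/G_1\vert$ in $G/G_1$.

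Now suppose $k\alpha+l\beta=k'\alpha+l'\beta$ with both pairs in the domain. Since $\beta\in G_1$, the elements $l\beta$ and $l'\beta$ lie in $G_1$, so passing to $G/G_1$ yields $k\bar\alpha=k'\bar\alpha$ and hence $\vert G/G_1\vert\mid(k-k')$; as $0\leqslant k,k'<\vert G/G_1\vert$ this forces $k=k'$. Cancelling $k\alpha$ leaves $(l-l')\beta=0$, and since $\beta$ generates the cyclic group $G_1$ it has order exactly $\vert G_1\vert$, so $\vert G_1\vert\mid(l-l')$; the range $0\leqslant l,l'<\vert G_1\vert$ then forces $l=l'$. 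This establishes injectivity, and together with the cardinality count, the asserted bijection.

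I expect no serious obstacle here; the only step requiring care is the claim that $\bar\alpha$ generates $G/G_1$, which rests on the faithfulness of the $G/G_1$-action on the $z_1$-axis already noted in the text. An equivalent route, bypassing the cardinality count, is to prove surjectivity constructively: for $g\in G$ its class in $G/G_1$ equals $k\bar\alpha$ for a unique $k\in\{0,\dots,\vert G/G_1\vert-1\}$, whereupon $g-k\alpha\in G_1$ equals $l\beta$ for a unique $l\in\{0,\dots,\vert G_1\vert-1\}$, giving both existence and uniqueness of the decomposition $g=k\alpha+l\beta$ at once.
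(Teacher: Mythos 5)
Your proof is correct and rests on the same structural facts as the paper's: the short exact sequence $0\to G_1\to G\to G/G_1\to 0$ with $\beta$ generating $G_1$ and the class of $\alpha$ generating $G/G_1$. The paper takes exactly the "equivalent route" you sketch in your final paragraph (unique $k$ from the class of $g$ in $G/G_1$, then unique $l$ from $g-k\alpha\in G_1$), while your main argument packages the same content as injectivity plus the Lagrange cardinality count; both are fine.
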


\begin{proof}
We have a short exact sequence of abelian groups:
$$0 \to G_1 \stackrel{i}{\to} G \stackrel{\pi}{\to} G/G_1 \to 0,$$
where $i$ is the inclusion
and $\pi$ is the quotient map.
Furthermore,
both $G_1$ and $G/G_1$ are cyclic,
and we have chosen a generator $\beta$ of $G_1$,
and an element $\alpha$ of $G$ whose class in $G/G_1$
is a generator of $G/G_1$.
Given $g\in G$,
there is a unique $k$
such that $\pi(g) = k \cdot \pi(\alpha)$.
Because $\pi(g- k\alpha) = 0$,
$g-k\alpha \in \ker \pi = \Img i$,
so there is a unique $l$ such that
$$g - k\alpha = l \beta.$$
\end{proof}

Note we have
\be
k\alpha + l \beta = \begin{pmatrix}
e^{2\pi\sqrt{-1}(\frac{k b}{|G/G_0|}+\frac{la}{|G_1|})} & 0 & 0 \\
0 & e^{2\pi\sqrt{-1}\frac{k}{|G/G_1|}} & 0 \\
0 & 0 & e^{2\pi\sqrt{-1}(\frac{kc}{|G/G_2|}+\frac{l}{|G_1|})}
\end{pmatrix}
\ee
I.e.,
\bea
&& F^{(0)}_{k\alpha + l \beta} =
\biggl\langle \frac{k b}{|G/G_0|}+\frac{la}{|G_1|} \biggr\rangle, \label{eqn:FS0}\\
&& F^{(1)}_{k\alpha + l \beta} = \frac{k}{|G/G_1|}, \label{eqn:FS1} \\
&& F^{(2)}_{k\alpha + l \beta} = \biggl\langle \frac{kc}{|G/G_2|}+\frac{l}{|G_1|} \biggr \rangle. \label{eqn:FS2}
\eea

\subsection{Lattice of invariants of $G$}

Consider the lattice of invariants of $G$ defined by
$$M:=\{(m_0,m_1,m_2)\in\mathbb{Z}^3\mid m_0F_g^{(0)}+m_1F_g^{(1)}+m_2F_g^{(2)}\in\mathbb{Z},\forall g\in G\},$$ and let $$\tau:=\{(x_0,x_1,x_2)\in\mathbb{R}^3\mid x_i\geqslant 0, i=0,1,2\}.$$ Then $M\cap\tau$ is a semigroup, closed under addition. Its semigroup ring $\mathbb{C}[M\cap\tau]$ consists of elements of the form $\sum\limits_{i=1}^ma_ie^{\gamma_i}$ with $a_i\in\mathbb{C}$ and $\gamma_i\in M\cap\tau$. The ring structure is given by
\begin{eqnarray*}
\sum\limits_{i=1}^ma_ie^{\gamma_i}+\sum\limits_{i=1}^mb_ie^{\gamma_i}&=&\sum\limits_{i=1}^m(a_i+b_i)e^{\gamma_i},\\
e^{\gamma_i}\cdot e^{\gamma_i'}&=&e^{\gamma_i+\gamma_i'}.
\end{eqnarray*}
Then we have \cite{MOP}: $$\mathbb{C}^3/G\cong\textrm{Spec}\mathbb{C}[M\cap\tau].$$

Let $\bZ_n(a,b,c)$ denote the subgroup generated by the diagonal matrix
$\diag(e^{2\pi\sqrt{-1}\frac{a}{n}}, e^{2\pi\sqrt{-1}\frac{b}{n}},e^{2\pi\sqrt{-1}\frac{c}{n}})$.

\begin{example}
For $G= \bZ_3(1,1,1)$, let $\xi$ be the generator, and we have
\be
F_{\xi}^{(0)} = F_{\xi}^{(1)} = F_{\xi}^{(2)} = \frac{1}{3}.
\ee
Therefore,
\be
M = \{(m_0,m_1,m_2) \in \bZ^3\;|\; m_0+m_1+m_2 \equiv 0 \pmod{3}\}.
\ee
\end{example}

\begin{example}
For $G= \bZ_4(2,1,1)$, let $\xi$ be the generator, and we have
\be
F_{\xi}^{(0)} =  \frac{1}{2}, \;\;\; F_{\xi}^{(1)} = F_{\xi}^{(2)} = \frac{1}{4}.
\ee
Therefore,
\be
M = \{(m_0,m_1,m_2) \in \bZ^3\;|\; 2m_0+m_1+m_2 \equiv 0 \pmod{4}\}.
\ee
\end{example}

\begin{example}
For $G= \bZ_5(3,1,1)$, let $\xi$ be the generator, and we have
\be
F_{\xi}^{(0)} =  \frac{3}{5}, \;\;\; F_{\xi}^{(1)} = F_{\xi}^{(2)} = \frac{1}{5}.
\ee
Therefore,
\be
M = \{(m_0,m_1,m_2) \in \bZ^3\;|\; 3m_0+m_1+m_2 \equiv 0 \pmod{5}\}.
\ee
\end{example}

\begin{example}
For $G= \bZ_2(1,0,1) \times \bZ_2(1,1,0)$, let $\alpha_1,\alpha_2$ be the generators of the first and the second subgroup respectively, and we have
\begin{align}
F_{\alpha_1}^{(0)} & =  \frac{1}{2}, & F_{\alpha_1}^{(1)} & = 0, & F_{\alpha_1}^{(2)} & = \frac{1}{2}, \\
F_{\alpha_2}^{(0)} & =  \frac{1}{2}, & F_{\alpha_2}^{(1)} & = \frac{1}{2}, & F_{\alpha_2}^{(2)} & = 0.
\end{align}
Therefore,
\be
M = \{(m_0,m_1,m_2) \in \bZ^3\;|\; m_0+m_1\equiv 0 \pmod{2}, \;\; m_0+m_2\equiv 0 \pmod{2}\}.
\ee
\end{example}

\subsection{An explicit integral basis of $M$}

Let $\{e_0,e_1,e_2\}$ be the standard basis of $\mathbb{Z}^3$. To determine the charge vectors of the crepant resolutions of $\mathbb C^3/G$, the first step is to find a basis of $M$. We construct a basis $\{\varepsilon_0,\varepsilon_1,\varepsilon_2\}$ of $M$ explicitly as follows.

Since the integers $\frac{|G_1|}{\gcd\big(|G_1|,c|G_2|\big)}$ and $c\vert G_2|$ are relatively prime,
one can find $m^*_1,m^*_2\in\mathbb{Z}$ such that
\begin{eqnarray}
m^*_1\cdot\frac{|G_1|}{\gcd\big(\vert G_1\vert,c\vert G_2\vert\big)}+m_2^*\cdot c\vert G_2\vert=1.\label{3}\end{eqnarray}

\begin{prop}
Define three vectors $\varepsilon_0,\varepsilon_1,\varepsilon_2$ by:
\begin{eqnarray}
[\varepsilon_0,\varepsilon_1,\varepsilon_2]:=[e_0,e_1,e_2]\left[\begin{array}{ccc}0&0&1\\\\\frac{m^*_1\cdot\vert G\vert}{\textrm{gcd}\big(\vert G_1\vert,c\vert G_2\vert\big)}&-c\vert G_2\vert&1\\\\m^*_2\cdot\vert G\vert&\vert G_1\vert&1\end{array}\right]\label{4}
\end{eqnarray}
Then $\{\varepsilon_0,\varepsilon_1,\varepsilon_2\}$ is an integral basis of $M$.
\end{prop}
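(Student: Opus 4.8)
The plan is to establish the two conditions defining an integral basis in turn: that each $\varepsilon_j$ lies in $M$, and that $\{\varepsilon_0,\varepsilon_1,\varepsilon_2\}$ generates all of $M$. The second assertion I will deduce from an index (covolume) comparison, so that the whole proof reduces to three computations: the index $[\mathbb{Z}^3:M]$, the determinant of the transition matrix in \eqref{4}, and the membership $\varepsilon_j\in M$.

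First I would pin down the index. Introduce the lattice $N:=\mathbb{Z}^3+\sum_{g\in G}\mathbb{Z}\cdot\big(F^{(0)}_g,F^{(1)}_g,F^{(2)}_g\big)\subseteq\mathbb{Q}^3$, which contains $\mathbb{Z}^3$. The quotient $N/\mathbb{Z}^3$ is the image of the map $G\to(\mathbb{Q}/\mathbb{Z})^3$, $g\mapsto\big(F^{(0)}_g,F^{(1)}_g,F^{(2)}_g\big)$; since $G\subset\SL(3,\bC)$ acts faithfully and diagonally this map is injective, whence $N/\mathbb{Z}^3\cong G$ and $[N:\mathbb{Z}^3]=|G|$. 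Because $N\supseteq\mathbb{Z}^3$, the defining condition of $M$ is precisely $M=N^\vee$, the dual lattice, which moreover sits inside $(\mathbb{Z}^3)^\vee=\mathbb{Z}^3$; the duality of indices for nested full lattices then gives $[\mathbb{Z}^3:M]=[N:\mathbb{Z}^3]=|G|$. Next, expanding the determinant of the matrix in \eqref{4} along its first row leaves a single $2\times2$ minor, and the B\'ezout relation \eqref{3} makes it collapse to exactly $|G|$. In particular the $\varepsilon_j$ are linearly independent and span a sublattice $L:=\langle\varepsilon_0,\varepsilon_1,\varepsilon_2\rangle$ with $[\mathbb{Z}^3:L]=|G|$.

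Then I would verify $L\subseteq M$. Since $F^{(j)}$ is additive modulo $\mathbb{Z}$ by \eqref{adno}, the condition $\sum_i m_iF^{(i)}_g\in\mathbb{Z}$ is additive in $g$, so by the bijection \eqref{1} it suffices to test the generators $g=\alpha$ and $g=\beta$. Using their explicit shifts recorded above, together with the identities $|G|=|G/G_i|\cdot|G_i|$ and relation \eqref{3}, one checks directly that $\varepsilon_2\cdot F_g=a(g)\in\mathbb{Z}$, that $\varepsilon_1\cdot F_\alpha=0$ and $\varepsilon_1\cdot F_\beta=1$, and that $\varepsilon_0\cdot F_\alpha=1$ and $\varepsilon_0\cdot F_\beta=m_2^*\,|G/G_1|\in\mathbb{Z}$. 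Hence $L\subseteq M$, and combining the three computations, $L\subseteq M\subseteq\mathbb{Z}^3$ with $[\mathbb{Z}^3:L]=|G|=[\mathbb{Z}^3:M]$ forces $L=M$, so $\{\varepsilon_0,\varepsilon_1,\varepsilon_2\}$ is an integral basis of $M$.

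The genuinely conceptual point, and hence the main obstacle, is the index step: identifying $N/\mathbb{Z}^3$ with $G$ uses faithfulness of the diagonal $G$-action in an essential way, and it is this that pins down $[\mathbb{Z}^3:M]=|G|$ and makes the covolume comparison decisive. (The membership $\varepsilon_2\in M$ separately invokes the $\SL$ condition $\det g=1$, i.e.\ integrality of the age $a(g)$.) Everything else is bookkeeping with the index identities $|G|=|G/G_i|\cdot|G_i|$ and the relation \eqref{3}, so I expect no real difficulty beyond careful handling of those numerical identities.
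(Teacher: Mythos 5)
Your proof is correct, and it takes a genuinely different route from the paper's on the harder half of the statement. For membership $\varepsilon_j\in M$ both arguments do essentially the same computation, though you streamline it by invoking additivity of $g\mapsto\langle m,F_g\rangle \bmod \mathbb{Z}$ (from \eqref{adno}) to reduce to the two generators $\alpha,\beta$, where the paper evaluates on a general element $k\alpha+l\beta$ via \eqref{eqn:FS1}--\eqref{eqn:FS2}; these are interchangeable. The real divergence is in proving that the $\varepsilon_j$ generate $M$: the paper explicitly inverts the matrix in \eqref{4}, writes an arbitrary $(n_0,n_1,n_2)\in M$ in the $\varepsilon$-basis, and shows each coordinate is an integer by specializing the defining condition of $M$ at $(k,l)=(1,0)$ and $(0,1)$, with a separate case analysis according to whether $|G_1|=1$. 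You instead identify $M$ as the dual lattice of $N=\mathbb{Z}^3+\sum_g\mathbb{Z}\,F_g$, use faithfulness of the $G$-action to get $N/\mathbb{Z}^3\cong G$ and hence $[\mathbb{Z}^3:M]=|G|$ by index duality, and conclude by comparing with $\det=|G|$ (a computation the paper also performs, for linear independence, and which reappears in the Pick's-theorem argument of Proposition \ref{prop:IntegralPoints}). Your route buys a cleaner conclusion that avoids the explicit inverse matrix and the $|G_1|=1$ case split, at the cost of importing the (standard) lattice-duality index formula; it also makes transparent \emph{why} the index must be $|G|$, which in the paper is only implicit. One cosmetic slip: $\varepsilon_1\cdot F_\beta$ equals $1-|G_1|\,\delta_{|G_1|,1}$, so it is $0$ rather than $1$ when $|G_1|=1$, but it is an integer in all cases, which is all your argument needs.
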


\begin{proof}
First it is clear that $\varepsilon_0,\varepsilon_1,\varepsilon_2$ are linearly independent,
because by using \eqref{3},\eqref{4},
we have
\begin{eqnarray*}
\det \left[\begin{array}{ccc}0&0&1
\\\\ \frac{m^*_1\cdot|G|}{\gcd\big(|G_1|,c|G_2|\big)}&-c|G_2|&1\\\\
m^*_2\cdot |G| &\vert G_1\vert&1\end{array}\right]=\vert G\vert>0.\label{6}
\end{eqnarray*}

Now let us  show that $\varepsilon_0,\varepsilon_1,\varepsilon_2\in M$.
It is clear that $\varepsilon_2 =e_0 + e_1  +e_2 \in M$,
since $G\subset \SL(3,\mathbb{C})$.
For $\varepsilon_0,\varepsilon_1$,
by using $(12),(13)$,
we have for any $(p_1,p_2)\in\mathbb{Z}^2$
and $0 \leq k < |G/G_1|$, $0 \leq l < |G_1|$,
\begin{eqnarray*}
&&p_1\cdot F^{(1)}_{k\alpha + l \beta}
 +p_2\cdot F^{(2)}_{k \alpha + l \beta}\\
&=&p_1\cdot\frac{k}{\vert G/G_1\vert}+p_2\cdot\bigg(k\cdot\frac{c}{\vert G/G_2\vert}
+\frac{l}{\vert G_1\vert}\bigg)\mod{\mathbb{Z}}\\
&=&\frac{k}{\vert G\vert}\bigg(p_1\cdot\vert G_1\vert+p_2\cdot c\vert G_2\vert\bigg)
+l\cdot\frac{p_2}{\vert G_1\vert}\mod{\mathbb{Z}}.
\end{eqnarray*}
In particular, for $(p_1,p_2)=(-c|G_2|, |G_1|)$ so that
$p_1e_1+p_2e_2=\varepsilon_1$,
$$-c|G_2|\cdot F^{(1)}_{k\alpha+l\beta}
+|G_1|\cdot F^{(2)}_{k\alpha+l\beta} = l \in\mathbb{Z}.$$
Therefore, $\varepsilon_1\in M$.
Similarly,
for $(p_1,p_2)=\biggl(\frac{m^*_1\cdot|G|}{\gcd\big(|G_1|,c|G_2|\big)},
m^*_2\cdot |G| \biggr)$ so that
$p_1e_1+p_2e_2=\varepsilon_0$, we have, from \eqref{3},\eqref{4},
\ben
&&\frac{m^*_1\cdot|G|}{\gcd\big(|G_1|,c|G_2|\big)}
\cdot F^{(1)}_{k\alpha+l\beta}
+ m^*_2\cdot |G| \cdot F^{(2)}_{k\alpha+l\beta} \\
& = & k\biggl(m_1^*\cdot \frac{|G_1|}{\gcd(|G_1|, c|G_2|)}
+ m_2^* \cdot c|G_2| \biggr) + lm_2^* \cdot |G/G_1| \in\mathbb{Z}.
\een
Therefore, $\varepsilon_0 \in M$.

Finally, let us prove that $\{\varepsilon_0,\varepsilon_1,\varepsilon_2\}$ generates $M$ over $\mathbb{Z}$.
For any $(n_0,n_1,n_2)\in M$,
we have $\sum\limits_{j=0}^2n_je_j=\sum\limits_{j=0}^2x_j\varepsilon_j$,
where
  \begin{eqnarray*}
  \left[\begin{array}{c}x_0\\x_1\\x_2\end{array}\right]&=&\left[\begin{array}{ccc}0&0&1\\\\\frac{m^*_1\cdot\vert G\vert}{\textrm{gcd}\big(\vert G_1\vert,c\vert G_2\vert\big)}&-c\vert G_2\vert&1\\\\m^*_2\cdot\vert G\vert&\vert G_1\vert&1\end{array}\right]^{-1}\left[\begin{array}{c}n_0\\n_1\\n_2\end{array}\right]\\
&=& \left[\begin{array}{c} (n_1-n_0)\cdot\frac{1}{\vert G/G_1\vert}+(n_2-n_0)\cdot\frac{c}{\vert G/G_2\vert}\\\\-(n_1-n_0)\cdot m^*_2+(n_2-n_0)\frac{m^*_1}{\textrm{gcd}\big(\vert G_1\vert,c\vert G_2\vert\big)}\\\\n_0\end{array}\right].
  \end{eqnarray*}
We need to show that $x_0,x_1,x_2 \in \bZ$.
We already have $x_2 = n_0 \in \bZ$.
Note that
\begin{eqnarray*}
&& \sum_{i=0}^2 n_i \cdot F^{(0)}_{k\alpha+l\beta}
- n_0 \cdot a(k\alpha+l\beta) \\
&=&(n_1-n_0)\cdot F^{(1)}_{k\alpha+l\beta}
+(n_2-n_0)\cdot F^{(2)}_{k\alpha+l\beta} \\
&=&k\bigg((n_1-n_0)\cdot\frac{1}{|G/G_1|}
+(n_2-n_0)\cdot\frac{c}{|G/G_2|}\bigg)
+l\cdot\frac{n_2-n_0}{\vert G_1\vert}
\end{eqnarray*}
is an integer for any $(k,l)\in\bZ^2$.
Choose $(k,l)=(1,0)$, we see that
$$x_0 =(n_1-n_0)\cdot\frac{1}{|G/G_1|}
+(n_2-n_0)\cdot\frac{c}{|G/G_2|} \in\mathbb{Z}.$$
If $\vert G_1\vert=1$, then it is clear that
$$x_1 = -(n_1-n_0)\cdot m^*_2+(n_2-n_0)\cdot m^*_1\in\mathbb{Z};$$
otherwise,
we may choose $(k,l)=(0,1)$ to see that $\frac{n_2-n_0}{\vert G_1\vert}\in\mathbb{Z}$, this implies that
$$x_1=-(n_1-n_0)\cdot m^*_2+m^*_1\frac{(n_2-n_0)}{\gcd(|G_1|,c|G_2|)} \in\mathbb{Z}.$$
\end{proof}

\begin{example}
For $G= \bZ_3(1,1,1)$,
we have $G_0=G_1=G_2 = \{0\}$.
In this case $c = 1$.
We take $m_1^* = 0$, $m_2^* = 1$ to get:
\ben
[\varepsilon_0,\varepsilon_1,\varepsilon_2]
=[e_0,e_1,e_2]
\begin{bmatrix} 0&0&1 \\ 0 &-1&1 \\ 3 & 1 &1
\end{bmatrix}
\een
\end{example}

\begin{example}
For $G= \bZ_4(2,1,1)$,
we have $G_0 \cong \bZ_2$, $G_1 = G_2 = \{0\}$.
In this case $c = 1$.
We take $m_1^* = 0$, $m_2^* = 1$ to get:
\ben
[\varepsilon_0,\varepsilon_1,\varepsilon_2]
=[e_0,e_1,e_2]
\begin{bmatrix} 0&0&1 \\ 0 &-1&1\\ 4 & 1 &1
\end{bmatrix}
\een
\end{example}

\begin{example}
For $G= \bZ_5(3,1,1)$,
we have $G_0=G_1=G_2 = \{0\}$.
In this case $c = 1$.
We take $m_1^* = 0$, $m_2^* = 1$ to get:
\ben
[\varepsilon_0,\varepsilon_1,\varepsilon_2]
=[e_0,e_1,e_2]
\begin{bmatrix} 0&0&1 \\ 0 &-1&1 \\ 5 & 1 & 1
\end{bmatrix}
\een
\end{example}

\begin{example}
For $G= \bZ_2(1,0,1) \times \bZ_2(1,1,0)$,
we have $G_0 = \cor{ \alpha_1+\alpha_2}$,
$G_1 = \cor{\alpha_1}$, $G_2 = \cor{\alpha_2}$.
We take $\alpha = \alpha_2$, $\beta = \alpha_1$, and then $c=1$.
We take $m_1^* =1$ and $m_2^* =0$
to get:
\ben
[\varepsilon_0,\varepsilon_1,\varepsilon_2]
=[e_0,e_1,e_2]
\begin{bmatrix} 0 & 0 & 1 \\ 2 & -2 & 1 \\ 0 & 2 & 1
\end{bmatrix}
\een
\end{example}

\subsection{Toric resolutions of $\mathbb{C}^3/G$}

In order to find toric resolutions of $\mathbb{C}^3/G$, it is necessary to consider the dual picture.
Following \cite{MOP}, write $V=\mathbb{R}^3$. Let $V^\vee$ be the dual space of $V$, $M^\vee$ the dual lattice of $M$ and $\tau^\vee$ the dual cone of $\tau$. Then $M^\vee\subset V^\vee$, and for the basis \eqref{4}, we have in the dual picture
\begin{eqnarray}
[e^\vee_0,e^\vee_1,e^\vee_2]=[\varepsilon^\vee_0,\varepsilon^\vee_1,\varepsilon^\vee_2]\left[\begin{array}{ccc}0&\frac{m^*_1\cdot\vert G\vert}{\textrm{gcd}\big(\vert G_1\vert,c\vert G_2\vert\big)}&m^*_2\cdot\vert G\vert\\\\0&-c\vert G_2\vert&\vert G_1\vert\\\\1&1&1\end{array}\right].\label{7}
\end{eqnarray}
Now the fan describing the affine toric variety $\mathbb{C}^3/G$ is given by the lattice cone $\sigma=\tau^\vee\cap M^\vee$, which is generated by $\{e^\vee_0,e^\vee_1,e^\vee_2\}$ in $M^\vee$. A crepant resolution of $\mathbb{C}^3/G$ can be described by a fan obtained by subdividing $\sigma$ into basic cones by adding more edges lying in $M^\vee$. Recall that a cone is basic if it is spanned by a basis of $M^\vee$. For more details and discussions, see \cite{MOP} and references therein.

From \eqref{7}, we can describe the toric affine variety $\mathbb{C}^3/G$ by a cone in $\mathbb{Z}^3$ given by
\begin{eqnarray} \label{eqn:Cone}
&& \tilde{v}_0:=\left[\begin{array}{c}0\\\\0\\\\1\end{array}\right],\tilde{v}_1:=\left[\begin{array}{c}\frac{m^*_1\cdot\vert G\vert}{\textrm{gcd}\big(\vert G_1\vert,c\vert G_2\vert\big)}\\\\-c\vert G_2\vert\\\\1\end{array}\right],\tilde{v}_2:=\left[\begin{array}{c}m^*_2\cdot\vert G\vert\\\\\vert G_1\vert\\\\1\end{array}\right].
\end{eqnarray}
Let $v_i$ be the integral point in $\mathbb{Z}^2$ by forgetting the last coordinate of $\tilde{v}_i$.
Then $\mathbb{C}^3/G$ can be described by the triangle $\triangle_{v_0v_1v_2}$ with vertices $v_0,v_1,v_2$.
A crepant resolution of $\mathbb{C}^3/G$ is given by a maximal triangulation of $\triangle_{v_0v_1v_2}$
with vertices being the integral points contained in $\triangle_{v_0v_1v_2}$.
So we first need to find all the integral points.
The following method was conjectured in \cite{Z11} by the second author.

Define the ``small'' part of the group as the set
$$G_{s}:=\{g\in G\mid a(g)=1\}\neq\emptyset.$$
We set $s:=\vert G_{s}\vert$ and impose an ordering $G_{s}=\{g_1,\cdots,g_s\}$.
For each $g\in G_s$, set 
\begin{eqnarray} 
\tilde{v}_g:=F_g^{(0)}\cdot \tilde{v}_0+F_g^{(1)}\cdot \tilde{v}_1
+F_g^{(2)}\cdot \tilde{v}_2. \label{8}
\end{eqnarray} 
Then we can check that $\tilde{v}_g$ is an integral point in $\mathbb{Z}^3$. Let $v_g$ be the integral point in $\mathbb{Z}^2$ by forgetting the last coordinate of $\tilde{v}_g$. We can see that $v_g$ is contained in $\triangle_{v_0v_1v_2}$.
Note that $G$ acts effectively on $\mathbb{C}^3$, and so it is easy to see that $v_0,v_1,v_2,v_{g_1},\cdots,v_{g_s}$ are mutually distinct.

\begin{prop}\label{prop:IntegralPoints}
The triangle $\triangle_{v_0v_1v_2}$ contains no integral points other than $v_0,v_1,v_2,v_{g_1},\cdots,v_{g_s}$.
\end{prop}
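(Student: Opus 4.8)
The plan is to lift this two-dimensional counting problem to a lattice-point count in the three-dimensional cone $\sigma$, where the age function becomes a linear \emph{height} and the group $G$ indexes a fundamental domain. First I would record the key numerical input: by \eqref{3}, the determinant of the matrix in \eqref{7}, whose columns are $\tilde v_0,\tilde v_1,\tilde v_2$, equals $|G|$, so the sublattice $N:=\bZ\tilde v_0+\bZ\tilde v_1+\bZ\tilde v_2$ has index $[M^\vee:N]=|G|$. Next, extending \eqref{8} to all of $G$ by $\tilde v_g:=\sum_j F^{(j)}_g\tilde v_j$, I would check $\tilde v_g\in M^\vee$: pairing $\sum_j F^{(j)}_g e_j^\vee$ with an arbitrary $m=(m_0,m_1,m_2)\in M$ gives $\sum_j m_j F^{(j)}_g\in\bZ$ by the very definition of $M$. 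Since $F^{(j)}_g\in[0,1)$, each $\tilde v_g$ lies in the half-open parallelepiped $P=\{\sum_j\lambda_j\tilde v_j:0\leqslant\lambda_j<1\}$, and the $\tilde v_g$ are pairwise distinct because $G$ acts faithfully. As $P\cap M^\vee$ contains exactly $[M^\vee:N]=|G|$ points, it follows that $\{\tilde v_g:g\in G\}=P\cap M^\vee$; that is, the $\tilde v_g$ are precisely the coset representatives of $M^\vee/N$ lying in $P$.

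The heart of the argument is the standard decomposition of cone lattice points: every $p\in\sigma\cap M^\vee$ can be written uniquely as
$$p=\tilde v_g+n_0\tilde v_0+n_1\tilde v_1+n_2\tilde v_2,\qquad g\in G,\ n_0,n_1,n_2\in\bZ_{\geqslant 0},$$
obtained by taking the fractional parts of the cone coordinates of $p$ (which land in $P\cap M^\vee$, hence equal some $\tilde v_g$) and pushing their integer parts into $N$. Because each $\tilde v_j$ has last coordinate $1$ and the last coordinate of $\tilde v_g$ is exactly $a(g)$, the height of $p$ equals $a(g)+n_0+n_1+n_2$.

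I would then observe that integral points of $\triangle_{v_0v_1v_2}$ correspond bijectively, via the lift that forgets the last coordinate, to height-one points of $\sigma\cap M^\vee$: a height-one lattice point $(a,b,1)$ lies in $\sigma$ if and only if $(a,b)$ is a convex combination of $v_0,v_1,v_2$, i.e. if and only if $(a,b)\in\triangle_{v_0v_1v_2}$. Solving $a(g)+n_0+n_1+n_2=1$ with $n_j\geqslant 0$ and $a(g)\in\{0,1,2\}$ now finishes the count: if $a(g)=0$ then $g=0$ and exactly one $n_j$ equals $1$, producing the three vertices $v_0,v_1,v_2$; if $a(g)=1$ then all $n_j=0$ and $g\in G_s$, producing $v_{g_1},\dots,v_{g_s}$; the case $a(g)=2$ is impossible. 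This exhausts the height-one lattice points, and hence the integral points of $\triangle_{v_0v_1v_2}$ (mutual distinctness having already been noted before the statement).

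The only delicate point — the main obstacle — is justifying both the uniqueness of the fundamental-domain decomposition and the fact that $\{\tilde v_g\}$ fills $P\cap M^\vee$; both reduce to the single index computation $[M^\vee:N]=|G|$ from \eqref{3} together with the faithfulness of the $G$-action. Everything after that is forced by the identity $\text{height}=a(g)+\sum_j n_j$ and the constraint that the ages lie in $\{0,1,2\}$.
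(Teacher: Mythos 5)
Your proof is correct, but it takes a genuinely different route from the paper's. The paper stays in two dimensions: it partitions $G\setminus\{0\}$ into $I_1$ (age one, all fermionic shifts nonzero), $I_2$ (age two) and $B$ (age one, some shift zero), uses the involution $g\mapsto -g$ to get $|I_1|=|I_2|$, identifies $I_1$ with interior lattice points and $B$ with boundary lattice points of $\triangle_{v_0v_1v_2}$, and then invokes Pick's theorem together with the area computation $\tfrac{|G|}{2}$ to force the count to be exact. You instead lift to the cone $\sigma$ and run the standard fundamental-parallelepiped (Ito--Reid ``box element'') argument: the index computation $[M^\vee:N]=|G|$ shows that the $|G|$ distinct points $\tilde v_g=\sum_j F_g^{(j)}\tilde v_j$ exhaust $P\cap M^\vee$, every point of $\sigma\cap M^\vee$ decomposes uniquely as $\tilde v_g+\sum_j n_j\tilde v_j$, and the height identity $\mathrm{height}=a(g)+\sum_j n_j$ reduces the proposition to solving $a(g)+\sum n_j=1$. (Both arguments ultimately rest on the same determinant $=|G|$ from \eqref{3}.) Your approach proves more: it gives a bijection $G\leftrightarrow P\cap M^\vee$ and an age-graded stratification of all of $\sigma\cap M^\vee$, which conceptually explains why exactly the age-one elements appear at height one; the paper's Pick's-theorem argument is shorter and entirely two-dimensional but yields only the stated count. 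One cosmetic remark: your $\tilde v_g$ for $g=0\in G$ is the origin (the apex of the cone), which collides notationally with the cone generator $\tilde v_0$; this does not affect the argument, but it is worth flagging when you write ``if $a(g)=0$ then $g=0$ and exactly one $n_j$ equals $1$.''
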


\begin{proof}
First we give a partition of $G\setminus\{0\}$ as follows:
\begin{eqnarray*}
I_1&:=&\{g\in G\setminus\{0\}\mid a(g)=1\textrm{ and }F_g^{(0)}\cdot F_g^{(1)}\cdot F_g^{(2)}\neq 0\},\\
I_2&:=&\{g\in G\setminus\{0\}\mid a(g)=2\},\\
B&:=&\{g\in G\setminus\{0\}\mid a(g)=1\textrm{ and }F_g^{(0)}\cdot F_g^{(1)}\cdot F_g^{(2)}=0\}.
\end{eqnarray*}
Then we can check the following:
\begin{itemize}
  \item $G\setminus\{0\}=I_1\cup I_2\cup B$, and for any $g\in G\setminus\{0\}$, we have $$g\in I_1\Longleftrightarrow -g\in I_2.$$ Therefore, $\vert I_1\vert=\vert I_2\vert$.
  \item $G_s=I_1\cup B$ and for any $g\in G_s$, we have
  \begin{eqnarray*}
  g\in I_1&\Longleftrightarrow&v_g\in\textrm{int}\triangle_{v_0v_1v_2},\\
  g\in B&\Longleftrightarrow&v_g\in\partial\triangle_{v_0v_1v_2}.
  \end{eqnarray*}
\end{itemize}
Recall by Pick's theorem \cite{MT},
for a triangle $\Delta$ with vertices in $\bZ^2$,
\be
\textrm{area of }\Delta = 1+ i + \frac{b}{2},
\ee
where $i$ is the number of lattice points in the interior of $\Delta$,
and $b$ is the number of lattice points on the boundary of $\Delta$.
Hence  we have
$$\textrm{area of }\triangle_{v_0v_1v_2}\geqslant\frac{3+\vert B\vert}{2}+\vert I_1\vert-1.$$
Note that the left hand side of this inequality $=\frac{\vert G\vert}{2}$ from \eqref{6},
and the right-hand side is equal to
$$\frac{1}{2}\big(3+\vert B\vert+2\vert I_1\vert-2\big)=\frac{1}{2}\big(1+\vert B\vert+\vert I_1\vert+\vert I_2\vert\big)=\frac{\vert G\vert}{2}.$$
So the inequality is actually an equality,
and therefore $\triangle_{v_0v_1v_2}$ contains no integral points other than $v_0,v_1,v_2,v_{g_1},\cdots,v_{g_s}$.
\end{proof}

We shall use $$\mathcal{S}:=\{0,1,2,g_1,\cdots,g_s\}$$ to index the lattice points in $\Delta_{v_0,v_1, v_2}$.
Now one can triangulate $\Delta_{v_0v_1v_2}$ into triangles with these lattice points as vertices whose areas are $1/2$,
i.e.,
there are no other lattice points in these smaller triangles.
This can be easily seen by Pick's theorem and induction.
Indeed,
by Pick's theorem a lattice triangle has area $1/2$ if and only if there is no lattice points other
than the vertices on this triangle.
Hence if a lattice triangle has area bigger than $1/2$,
one can use an extra lattice point as vertex to subdivide the triangle into two or three triangles with smaller areas.

Each of such triangulations corresponds to a subdivision of the lattice cone $\sigma = \tau^\vee \cap M^\vee$
into basic cones, and hence provides a crepant resolution of $\bC^3/G$.
Indeed,
suppose that $\Delta_{v_{i_1}v_{i_2}v_{i_3}}$ is one of the smaller triangles in the triangulation,
then the tetrad with vertices $(0,0,0)$, $\tilde{v}_{i_1}$, $\tilde{v}_{i_2}$ and $\tilde{v}_{i_3}$ has volume
$\frac{1}{6}$,
and so $\tilde{v}_{i_1}$, $\tilde{v}_{i_2}$ and $\tilde{v}_{i_3}$ span a basic cone.
In general, the triangulations are not unique and hence crepant resolutions are not unique,
they are related to each other by {\em flops}.
Combinatorially,
a flop is two different ways to triangulate a parallelogram with integral vertices of area one into two triangles:
$$\xy
(-10, 10); (0,0),**@{-};(10,0), **@{-}; (0, 10), **@{-}; (-10, 10), **@{-}; (10, 0), **@{-};
(30, 10); (20,10),**@{-};(30,0), **@{-}; (40, 0), **@{-}; (30, 10), **@{-}; (30, 0), **@{-};
(15,-10)*+{\text{Figure 1}}; \endxy $$

\begin{example}
For $G= \bZ_3(1,1,1)$,
\ben
[e_0^\vee,e_1^\vee,e_2^\vee]= [\varepsilon_0^\vee,\varepsilon_1^\vee,\varepsilon_2^\vee]
\begin{bmatrix} 0&0&3 \\ 0 &-1&1 \\ 1 & 1 &1
\end{bmatrix}
\een
Only $\xi$ has age $1$,
and
$$\tilde{v}_{\xi} = \begin{bmatrix} 1 \\ 0 \\ 1 \end{bmatrix}.$$
There is only one triangulation of $\Delta_{v_0v_1v_2}$ with $v_{\xi}$ added:
$$\xy
(0, -10); (0,0),**@{-};(10,0), **@{-};
(0, -10), **@{-}; (30, 10), **@{-};(10,0), **@{-};
(0, 0), **@{-}; (30,10), **@{-};
(0,0)*+{\bullet};(10,0)*+{\bullet};(0,-10)*+{\bullet};(30,10)*+{\bullet};
(-3,0)*+{v_0}; (33,10)*+{v_2}; (13,-1)*+{v_{\xi}}; (-3,-10)*+{v_1};
(5,-15)*+{\text{Figure 2}}; \endxy $$
\end{example}

\begin{example}
For $G= \bZ_4(2,1,1)$,
\ben
[e_0^\vee,e_1^\vee,e_2^\vee]= [\varepsilon_0^\vee,\varepsilon_1^\vee,\varepsilon_2^\vee]
\begin{bmatrix} 0&0& 4 \\ 0 &-1&1 \\ 1 & 1 &1
\end{bmatrix}
\een
There are two elements of age $1$: $\xi$ and $2\xi$,
and
\begin{align*}
\tilde{v}_{\xi} & = \begin{bmatrix} 1 \\ 0 \\ 1 \end{bmatrix}, &
\tilde{v}_{2\xi} & = \begin{bmatrix} 2 \\ 0 \\ 1 \end{bmatrix}.
\end{align*}
There is only one triangulation of $\Delta_{v_0v_1v_2}$ with $v_{\xi}$ and $v_{2\xi}$ added:
$$\xy
(0, -10); (0,0),**@{-};(20,0), **@{-};
(0, -10), **@{-}; (40, 10), **@{-};(20,0), **@{-};
(0, 0), **@{-}; (40,10), **@{-};
(0, -10); (10,0),**@{-};(40,10), **@{-};
(0,0)*+{\bullet};(10,0)*+{\bullet};(20,0)*+{\bullet};(0,-10)*+{\bullet};(40,10)*+{\bullet};
(-3,0)*+{v_0}; (43,10)*+{v_2}; (13,-1)*+{v_{\xi}}; (-3,-10)*+{v_1}; (24,-1)*+{v_{2\xi}};
(5,-15)*+{\text{Figure 3}}; \endxy $$

\end{example}

\begin{example}
For $G= \bZ_5(3,1,1)$,
\ben
[e_0^\vee,e_1^\vee,e_2^\vee]= [\varepsilon_0^\vee,\varepsilon_1^\vee,\varepsilon_2^\vee]
\begin{bmatrix} 0&0& 5 \\ 0 &-1&1 \\ 1 & 1 &1
\end{bmatrix}
\een
There are two elements of age $1$: $\xi$ and $2\xi$,
and
\begin{align*}
\tilde{v}_{\xi} & = \begin{bmatrix} 1 \\ 0 \\ 1 \end{bmatrix}, &
\tilde{v}_{2\xi} & = \begin{bmatrix} 2 \\ 0 \\ 1 \end{bmatrix}.
\end{align*}
There is only one triangulation of $\Delta_{v_0v_1v_2}$ with $v_{\xi}$ and $v_{2\xi}$ added:
$$\xy
(0, -10); (0,0),**@{-};(20,0), **@{-};
(0, -10), **@{-}; (50, 10), **@{-};(20,0), **@{-};
(0, 0), **@{-}; (50,10), **@{-};
(0, -10); (10,0),**@{-};(50,10), **@{-};
(0,0)*+{\bullet};(10,0)*+{\bullet};(20,0)*+{\bullet};(0,-10)*+{\bullet};(50,10)*+{\bullet};
(-3,0)*+{v_0}; (53,10)*+{v_2}; (13,-1)*+{v_{\xi}}; (-3,-10)*+{v_1}; (24,-2)*+{v_{2\xi}};
(5,-15)*+{\text{Figure 4}}; \endxy $$
\end{example}

\begin{example}
For $G= \bZ_2(1,0,1) \times \bZ_2(1,1,0)$,
\ben
[e_0^\vee,e_1^\vee,e_2^\vee]= [\varepsilon_0^\vee,\varepsilon_1^\vee,\varepsilon_2^\vee]
\begin{bmatrix} 0 & 2 & 0 \\ 0 &-2& 2 \\ 1 & 1 &1
\end{bmatrix}
\een
There are three elements of age $1$: $\alpha_1$, $\alpha_2$, and $\alpha_1+\alpha_2$,
and
\begin{align*}
\tilde{v}_{\alpha_1} & = \begin{bmatrix} 0 \\ 1 \\ 1 \end{bmatrix}, &
\tilde{v}_{\alpha_2} & = \begin{bmatrix} 1 \\ -1 \\ 1 \end{bmatrix}, &
\tilde{v}_{\alpha_1+\alpha_2} & = \begin{bmatrix} 1 \\ 0 \\ 1 \end{bmatrix}.
\end{align*}
There are four triangulations of $\Delta_{v_0v_1v_2}$ in this case:
\begin{center}
\setlength{\unitlength}{1cm}
\begin{picture}(0,10)(1,-7)
\drawline(0,0)(0,2)(2,-2)(0,0)
\drawline(1,0)(0,1)
\drawline(1,0)(0,0)
\drawline(1,0)(1,-1)
\multiput(0,0)(0,1){3}{\circle*{0.15}}
\multiput(0,0)(1,-1){3}{\circle*{0.15}}
\put(1,0){\circle*{0.15}}
\put(-0.5,0){$v_0$}
\put(2,-1.9){$v_1$}
\put(0.1,2){$v_2$}
\put(-0.6,1){$v_{\alpha_1}$}
\put(0.4,-1){$v_{\alpha_2}$}
\put(1.1,0){$v_{\alpha_1+\alpha_2}$}
\put(1,-1.5){\vector(0,-1){1}} \put(1,-2.5){\vector(0,1){1}}

\drawline(0,-4)(0,-2)(2,-6)(0,-4)
\drawline(1,-4)(0,-3)
\drawline(0,-3)(1,-5)
\drawline(1,-4)(1,-5)
\multiput(0,-4)(0,1){3}{\circle*{0.15}}
\multiput(0,-4)(1,-1){3}{\circle*{0.15}}
\put(1,-4){\circle*{0.15}}
\put(1.5,-4){\vector(1,0){1}} \put(2.5,-4){\vector(-1,0){1}}
\put(-1.5,-4){\vector(1,0){1}} \put(-0.5,-4){\vector(-1,0){1}}

\drawline(3,-4)(3,-2)(5,-6)(3,-4)
\drawline(4,-4)(3,-3)
\drawline(4,-5)(3,-3)
\drawline(3,-3)(5,-6)
\multiput(3,-4)(0,1){3}{\circle*{0.15}}
\multiput(3,-4)(1,-1){3}{\circle*{0.15}}
\put(4,-4){\circle*{0.15}}

\drawline(-3,-4)(-3,-2)(-1,-6)(-3,-4)
\drawline(-2,-5)(-3,-2)
\drawline(-2,-5)(-3,-3)
\drawline(-2,-4)(-2,-5)
\multiput(-3,-4)(0,1){3}{\circle*{0.15}}
\multiput(-3,-4)(1,-1){3}{\circle*{0.15}}
\put(-2,-4){\circle*{0.15}} 
\put(-0.5,-6.5){Figure 5}
\end{picture}
\end{center}

\end{example}

\subsection{GLSM for toric crepant resolutions of $\mathbb{C}^3/G$}

In the above subsection we have described the fans that correspond to  
toric crepant resolutions of $\bC^3/G$.
It is well-known that one can describe toric manifolds by symplectic reduction \cite{CK},
and this corresponds to moduli spaces of gauged linear sigma models (GLSM) in physics literature \cite{W}. 
We now recall this in this subsection.
Consider the following $(S^1)^k$-action on $\mathbb{C}^{n+k}$: 
$$(t_1,\dots,t_k).(x_1,\cdots,x_{n+k})=(\prod\limits_{i=1}^k t_i^{l^{(i)}_1}\cdot x_1,\dots,
\prod\limits_{i=1}^k t_i^{l^{(i)}_{n+k}}\cdot x_{n+k}),$$ 
where $l^{(i)}_j\in\mathbb{Z}$. 
This action is Hamiltonian with respect to the standard K\"{a}hler form of $\mathbb{C}^{n+k}$. The vectors $$l^{(i)}=(l^{(i)}_1,\cdots,l^{(i)}_{n+k}),\quad i=1,\cdots,k$$ are called charge vectors of this action. They determine a moment map
\begin{eqnarray*}
\mu:\mathbb{C}^{n+k}&\rightarrow&\mathbb{R}^k\\
(x_1,\dots,x_{n+k})&\mapsto&(\frac{1}{2}\sum\limits_{j=1}^{n+k}l^{(i)}_j\vert x_j\vert^2,\dots,\frac{1}{2}\sum\limits_{j=1}^{n+k}l^{(k)}_j\vert x_j\vert^2).
\end{eqnarray*}
For $\vec{a}\in\mathbb{R}^k$, let $X_{\vec{a}}:=\mu^{-1}(\vec{a})/(S^1 )^k$. Then $X_{\vec{a}}$ is an orbifold for generic $\vec{a}$. 
More precisely, the regular values of $\mu$ form some open chambers. It is known that when $\vec{a}$ moves inside one chamber, 
the topology of $X_{\vec{a}}$ remains unchange, and when $\vec{a}$ crosses the wall from one chamber to another, $X_{\vec{a}}$ changes by a flop. 
If the charge vectors satisfy the Calabi-Yau condition 
$$\sum\limits_{j=1}^{n+k}l^{(i)}_j=0,\quad i=1,\cdots,k,$$ 
then generic $X_{\vec{a}}$'s are Calabi-Yau $n$-orbifolds.

We now recall  a procedure of finding charge vectors of a toric manifold  given in \cite{dOFS}.
Suppose that $X$ is a smooth toric $n$-fold described by a fan $\Sigma$,
whose one-dimensional cones are generated by $v_1, \dots,v_m $ in a lattice $\bM \subset \bR^n$, 
$n<m$. 
Suppose that $v_1, \dots, v_m$ also generate $\bZ^n$ over $\bZ$.
The lattice of relations over $\bZ$ among $v_1, \dots, v_m$ is defined by
$$\bL:=\{(l_1, \dots, l_m) \in \bZ^m: \; l_1v_1 + \cdots + l_m v_m = 0\}.$$
I.e.,
we have the following exact sequence
\be
0 \to \bL \to \bZ^m \to \bM \to 0.
\ee
The charge vectors correspond to a set of generators of $\bL$. 

Let us first construct some vectors in $\bL$.
The vectors $v_1, \dots, v_m$ correspond to toric divisors $D_1, \dots, D_m$, respectively.
Given any $w \in \bM^\vee$,
there is a linear equivalence:
\be \label{eqn:LinEquiv}
\langle v_1,w\rangle D_1 + \cdots + \langle v_m, w \rangle D_m \sim 0.
\ee
Toric curves on $X$ are determined by $(n-1)$-dimensional cones in $\Sigma$.
If a toric curve $C$ is compact, the corresponding cone $\tau$ is the
boundary between two $n$-dimensional cones $\sigma_1$, $\sigma_2$. 
Denote the integer generators
of $\tau$, $\sigma_1$ and $\sigma_2$
by $\{v_{i_1}, . . . , v_{i_{n-1}}\}$, $\{v_{i_1}, . . . , v_{i_{n-1}}, v_{i_n}\}$, 
$\{v_{i_1}, . . . , v_{i_{n-1}}, v_{i_{n+1}}\}$, respectively.
Because $X$ is smooth,
if we write $v_i$ as column vectors then we have
\ben
&& \det (v_{i_1}, . . . , v_{i_{n-1}}, v_{i_n}) = - \det (v_{i_1}, . . . , v_{i_{n-1}}, v_{i_{n+1}}) = \pm 1,
\een
and so 
$$
\det (v_{i_1}, . . . , v_{i_{n-1}}, v_{i_n}+v_{i_{n+1}}) = 0.
$$
So there exists a unique linear relation of the form 
$$l_{i_1}v_{i_1} + \cdots + l_{i_{n+1}}v_{i_{n+1}} = 0$$ 
with $l_{i_n} = l_{i_{n+1}} = 1$ and all $l_{i_j}$ integer.
In \cite{dOFS} the following geometric interpretation of the numbers $l_i$ is given.
The toric curve $C$ corresponding to $\tau$ is given by 
$C = D_{i_1} \cdots D_{i_{n-1}}$.
Then
$$l_{i_j} = C D_{i_j} = D_{i_1} \cdots D_{i_{n-1}} D_{i_j}.$$
These intersection numbers can be computed as follows:
The intersection numbers between $n$ distinct toric divisors $D_{j_1}, \dots, D_{j_n}$
is $1$ if their corresponding vertices $v_{j_1}, \dots, v_{j_n}$ generate a cone in $\Sigma$, and $0$ otherwise.
Hence one can see from this that when $j=n$ or $n+1$,
$l_{i_j} = 1$.
When $1 \leq j \leq n-1$,
say $j=1$,
choose a lattice point $w \in \bM^\vee$ such that
$\langle v_{i_1}, w\rangle = \delta_{j1}$,
then by \eqref{eqn:LinEquiv} one has
$$
D_{i_1} \cdots D_{i_{n-1}} D_{i_1}
= - \sum_{i \neq i_1, \dots, i_{n-1}} \langle v_i, w \rangle \cdot 
D_{i_1} \cdots D_{i_{n-1}} D_i,
$$
where $D_{i_1} \cdots D_{i_{n-1}} D_i$ is either $1$ or $0$,
depending on whether $v_{i_1}, \dots, v_{i_{n-1}}$ and $v_i$ 
generate a cone in $\Sigma$ or not.
Thus we have described the way to associate to a compact toric divisor $C$
a vector $\vec{l}_C \in \bL$.
To find the charge matrix,
take all those curves $C_i$ whose
$\vec{l}^{(i)}:=\vec{l}_{C_i}$ cannot be written as nonnegative linear combinations 
of the other $\vec{l}^{(j)}$.

Applying this procedure to the case of toric crepant resolutions of $\bC^3/G$,
one can obtain the corresponding charge vectors.
In this case we propose the following procedure to find the $C_i$'s.
Consider the dual graph of the regular triangulation of $\Delta_{v_0v_1v_2}$.
It consists of edges (of finite length) and half edges (that go to infinity).
The edges correspond to compact toric curves,
and the half edges correspond to noncompact toric curves.
For $i \in G_s\subset\cS$,
let $\overline{v_iv_{j_1}}, \dots, \overline{v_iv_{j_k}}$ be
all the edges in the triangulation of $\Delta_{v_0v_1v_2}$ emanating at $v_i$. 
The star of $v_i$ consists of those edges in the dual graph, 
one for each of $\overline{v_iv_{j_l}}$.
If $v_i$ lies in the interior of $\Delta_{v_0v_1v_2}$,
then the star of $v_i$ is a closed polygon;
if $v_i$ lies on the boundary of $\Delta_{v_0v_1v_2}$,
then the star of $v_i$ is an ``open polygon" with two parallel sides.
For each $i \in G_s$,
we will choose one edge in the star of $v_i$,
to obtain a collection of the corresponding curves $C_i$,
so that $\{[C_i]:\;i \in G_s\}$ form a basis of $H_2(Y;\bZ)$.

\begin{example} 
For the toric resolution of $\mathbb{C}^3/\mathbb{Z}_3(1,1,1)$, recall that the triangulation of $\Delta_{v_0v_1v_2}$ is 
$$\xy
(0, -10); (0,0),**@{-};(10,0), **@{-};
(0, -10), **@{-}; (30, 10), **@{-};(10,0), **@{-};
(0, 0), **@{-}; (30,10), **@{-};
(0,0)*+{\bullet};(10,0)*+{\bullet};(0,-10)*+{\bullet};(30,10)*+{\bullet};
(-3,0)*+{v_0}; (33,10)*+{v_2}; (13,-1)*+{v_{\xi}}; (-3,-10)*+{v_1};
(5,-15)*+{\text{Figure 7}}; \endxy $$
and its dual graph is
\begin{center}
\setlength{\unitlength}{1cm}
\begin{picture}(0,6.5)(0,-3.5)
\drawline(-1,0)(0,0)(1,-1)(0,1)(0,0)
\drawline(-0.5,2.5)(0,1)
\drawline(1.5,-1.75)(1,-1)
\dottedline[.]{0.2}(-2,0)(-1,0)
\dottedline[.]{0.2}(-0.5,2.5)(-0.75,3.25)
\dottedline[.]{0.2}(1.5,-1.75)(2,-2.5)
\put(-1,0.5){$D_0$}
\put(-1,-0.5){$D_1$}
\put(1,0){$D_2$}
\put(0,0){$D_\xi$}
\put(-0.5,-3){Figure 8}
\end{picture}
\end{center}
Let $D_i$ be the corresponding toric divisor for $i\in\{0,1,2,\xi\}$. 
Then from $(30)$,
we have the following linear equivalence relations:
\be
D_0 - D_1 \sim 0, \;\; D_0 - D_2 \sim 0, \;\;
D_0 + D_1 + D_2 + D_\xi \sim 0.
\ee 
So we have
\be
D_0 \sim D_1 \sim D_2, \;\; D_\xi \sim - 3 D_0 \sim -3 D_1 \sim -3 D_2.
\ee
From the two-dimensional profile above,
we see there are three compact toric curves: $D_\xi \cdot D_i$, $i=0,1,2$.
The intersection numbers are:
\begin{displaymath}
\begin{array}{c|cccc}
&D_0&D_1&D_2&D_{\xi}\\
\hline
D_\xi\cdot D_0&1&1&1&-3\\
D_\xi\cdot D_1&1&1&1&-3\\
D_\xi\cdot D_2&1&1&1&-3
\end{array}
\end{displaymath}
For example,
\ben
(D_\xi \cdot D_0) \cdot D_1 = (D_\xi \cdot D_0) \cdot D_2 = 1
\een
because $\{v_\xi, v_0, v_1\}$ and $\{v_\xi, v_0, v_2\}$ are the the vertices
of two triangles in the triangulation.
Using the linear equivalence as above,
we also have
\ben
&& (D_\xi \cdot D_0) \cdot D_0 = (D_\xi \cdot D_0) \cdot D_1 = 1, \\
&& (D_\xi \cdot D_0) \cdot D_\xi = -3 (D_\xi \cdot D_0) \cdot D_1 = -3.
\een
The star of $v_\xi$ is a triangle.
It corresponds to the fact that $D_\xi \cong \bP^2$.
The three edges of the star of $v_\xi$ corresponds to three coordinate lines
on $\bP^2$,
and they have the same homology classes.
Choose any one of them as  $C_\xi$,
one get the charge vector is 
$$l^{(\xi)}=(C_\xi\cdot D_0, C_\xi \cdot D_1, C_\xi \cdot D_2, C_\xi \cdot D_\xi)
= (1,1,1,-3).$$
\end{example}

For more examples,
see Section \ref{Appendix}.

\subsection{Summary}

In this subsection we summarize the main results of this section.
First of all,
when $G \subset SL(3, \bC)$ is a finite abelian subgroup,
then $\bC^3/G$ is a toric variety described by a cone with generators
$\tilde{v}_0, \tilde{v}_1, \tilde{v}_2$ given by \eqref{eqn:Cone}.
Secondly,
toric crepant resolution of $\bC^3/G$ can be obtained by adding vectors 
$\{\tilde{v}_g: \;g\in G, a(g) = 1\}$ to this cone
and use them to subdivide the cone into simplicial cones.
Thirdly the subdivision determines some charge vectors
which realize the toric crepant resolutions as symplectic quotients
of $T^s$-actions on $\bC^{3+s}$.

\section{Disc potential for crepant resolutions of $\mathbb{C}^3/G$}

In this section we compute the potential function of disc invariants of 
crepant resolutions of $\bC^3/G$ with some special D-branes.
Our main results of this section are contained Proposition \ref{prop:OCMirror} and
\S \ref{sec:SuperPot}. 

\subsection{Charge vectors for of some special D-branes}

Let $Y$ be a crepant resolution of $\mathbb{C}^3/G$.
In the last section we have shown that they can be described as symplectic quotients with charge vectors
\begin{eqnarray*}
l^{(g_1)}&=&\big(l_0^{(g_1)},l_1^{(g_1)},l_2^{(g_1)},l_{g_1}^{(g_1)},\cdots,l_{g_s}^{(g_1)}\big),\\
&\cdots&\\
l^{(g_s)}&=&\big(l_0^{(g_s)},l_1^{(g_s)},l_2^{(g_s)},l_{g_1}^{(g_s)},\cdots,l_{g_s}^{(g_s)}\big).
\end{eqnarray*}
They form a basis of the lattice of relations \cite{S}:$$\mathbb{L}:=\{(l_0,l_1,l_2,l_{g_1},\cdots,l_{g_s})\in\mathbb{Z}^{3+s}\mid \sum\limits_{j=0}^2l_j\tilde{v}_j+\sum\limits_{i=1}^sl_{g_i}\tilde{v}_{g_i}=0\}.$$ From \eqref{8}, the following vectors form a natural $\mathbb{Q}$-basis of $\mathbb{L}\otimes_\mathbb{Z}\mathbb{Q}$:
\begin{eqnarray}
\begin{array}{ccccccc}
(\vert G\vert F_{g_1}^{(0)},&\vert G\vert F_{g_1}^{(1)},&\vert G\vert F_{g_1}^{(2)},&-\vert G\vert,&0,&\cdots,&0),\nonumber\\
(\vert G\vert F_{g_2}^{(0)},&\vert G\vert F_{g_2}^{(1)},&\vert G\vert  F_{g_2}^{(2)},&0,&-\vert G\vert,&\cdots,&0),\nonumber\\
\cdots&\cdots&&&&&\label{9}\\
(\vert G\vert F_{g_s}^{(0)},&\vert G\vert F_{g_s}^{(1)},&\vert G\vert  F_{g_s}^{(2)},&0,&0,&\cdots,&-\vert G\vert).\nonumber
\end{array}
\end{eqnarray}

For $i\in\mathcal{S}=\{0,1,2,g_1,\cdots,g_s\}$,
we shall write $l_i=\big(l_i^{(g_1)},\cdots,l_i^{(g_s)}\big)^T$.
Since the charge vectors are linear combinations of those in \eqref{9},
it follows that the vectors $l_{g_l},i=1,\cdots,s,$ form a basis of $\mathbb{R}^s$, and
\begin{eqnarray}
\sum\limits_{i\in\mathcal{S}}l_i^{(g_l)}=0,\quad l=1,\cdots,s,&&\label{10'}\\
\sum\limits_{g\in G_s}F_g^{(j)}l_g+l_j=0,\quad j=0,1,2.\label{10}
\end{eqnarray}

Consider an outer D-brane $L_0$ in $Y$, with a framing $f\in\mathbb{Z}$, which intersects the non-compact toric curve given by $\overline{v_{i_1}v_{i_2}}\subset\overline{v_1v_2}$, as illustrated below. Here $v_{i_0}$ is the unique integral point such that $\vartriangle_{v_{i_0}v_{i_1}v_{i_2}}$ is an anticlockwise directed triangle in the triangulation. It is easy to check that $i_0\in\{0,g_1,\cdots,g_s\}, i_1\in\{1,g_1,\cdots,g_s\},i_2\in\{2,g_1,\cdots,g_s\}$.
\begin{center}
\setlength{\unitlength}{0.5cm}
\begin{picture}(10,8)(-7.5,-4.5)
\drawline(-2,0)(0,-1)(1,1)(-2,0)
\drawline(-1,-3)(-7,0)(2,3)
\dottedline{0.1}(1,1)(2,3)
\dottedline{0.1}(0,-1)(-1,-3)
\put(-7,0){\circle*{0.1}}
\put(-2,0){\circle*{0.1}}
\put(-1,-3){\circle*{0.1}}
\put(0,-1){\circle*{0.1}}
\put(1,1){\circle*{0.1}}
\put(2,3){\circle*{0.1}}
\put(-2.8,0){$v_{i_0}$}
\put(0,-1.3){$v_{i_1}$}
\put(1.1,1){$v_{i_2}$}
\put(-7.7,0){$v_0$}
\put(-1,-3.3){$v_1$}
\put(2.1,3){$v_2$}
\put(-0.5,0){$\circlearrowleft$}
\put(0.5,0){$\backslash$}
\put(-3,-4.5){Figure 9}
\end{picture}
\end{center}
Then the geometry of $(Y,L_0)$ can be described by the following extended charge vectors:
\begin{displaymath}
\begin{array}{cccccc|cc}
\tilde{l}^{(g_1)}=\big(l_0^{(g_1)},&l_1^{(g_1)},&l_2^{(g_1)},&l_{g_1}^{(g_1)},&\cdots,&l_{g_s}^{(g_1)},&0,&0\big),\\
\cdots&\cdots&&&&&&\\
\tilde{l}^{(g_s)}=\big(l_0^{(g_s)},&l_1^{(g_s)},&l_2^{(g_s)},&l_{g_1}^{(g_s)},&\cdots,&l_{g_s}^{(g_s)},&0,&0\big),\\
\tilde{l}^{(0)}=\big(l^{(0)}_0,&l^{(0)}_1,&l^{(0)}_2,&l^{(0)}_{g_1},&\cdots,&l^{(0)}_{g_s},&1,&-1\big),
\end{array}
\end{displaymath}
where
\begin{eqnarray}
l_{i}^{(0)}=\left\{\begin{array}{cl}1,&i=i_0,\\f,&i=i_1,\\-f-1,&i=i_2,\\0,&i\in\mathcal{S}\setminus\{i_0,i_1,i_2\}.\end{array}\right.\label{11}
\end{eqnarray}

We remark that if $i_0=0$, then $G_0=G$ and $G\subset \SL(2,\mathbb{C})$ acts effectively on the plane given by $z_0=0$. Therefore, we can see that $$l_0^{(0)}=\delta_{i_0,0}=\frac{\vert G_0\vert}{\vert G\vert}\delta_{i_0,0}.$$

\begin{lemma}\label{lemma 2}
If $i_0\neq0$, then $i_0\in G_s$ and $F_{i_0}^{(0)}=\frac{\vert G_0\vert}{\vert G\vert}$.
\end{lemma}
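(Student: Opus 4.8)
The first assertion needs no work: the excerpt already places $i_0\in\{0,g_1,\dots,g_s\}$, and since $G_s=\{g_1,\dots,g_s\}$, the hypothesis $i_0\neq 0$ forces $i_0\in G_s$. The content of the lemma is the value of $F^{(0)}_{i_0}$, and the plan is to extract it from the basic (area $\tfrac12$) triangle $\triangle_{v_{i_0}v_{i_1}v_{i_2}}$ of the triangulation. The two geometric inputs I would use are that $v_{i_1},v_{i_2}$ lie on the edge $\overline{v_1v_2}$, on which the ``barycentric'' coordinate $F^{(0)}$ vanishes, and that $\overline{v_{i_1}v_{i_2}}$ is an edge of a minimal lattice triangle.

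First I would record a determinantal identity. Since $\tilde v_g=F^{(0)}_g\tilde v_0+F^{(1)}_g\tilde v_1+F^{(2)}_g\tilde v_2$ by \eqref{8}, while $v_1,v_2$ have shift-vectors $(0,1,0)$ and $(0,0,1)$, the matrix $[\tilde v_{i_0},\tilde v_{i_1},\tilde v_{i_2}]$ factors as $[\tilde v_0,\tilde v_1,\tilde v_2]$ times the $3\times 3$ matrix of fermionic shifts of $v_{i_0},v_{i_1},v_{i_2}$, whose first row is $(F^{(0)}_{i_0},0,0)$ because $F^{(0)}_{i_1}=F^{(0)}_{i_2}=0$ on $\overline{v_1v_2}$. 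Using $\det[\tilde v_0,\tilde v_1,\tilde v_2]=|G|$ (see \eqref{6}) and cofactor expansion, this gives
\be
\det[\tilde v_{i_0},\tilde v_{i_1},\tilde v_{i_2}]=|G|\cdot F^{(0)}_{i_0}\cdot D,\qquad D:=\det\begin{pmatrix}F^{(1)}_{i_1}&F^{(1)}_{i_2}\\ F^{(2)}_{i_1}&F^{(2)}_{i_2}\end{pmatrix}.
\ee
Because $\triangle_{v_{i_0}v_{i_1}v_{i_2}}$ belongs to the triangulation, the tetrahedron it spans with the origin has volume $\tfrac16$, so the left-hand side equals $\pm1$; since $v_{i_0}\notin\overline{v_1v_2}$ we have $F^{(0)}_{i_0}>0$, whence $F^{(0)}_{i_0}=1/(|G|\,|D|)$.

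It then remains to prove $|D|=1/|G_0|$. I would read $|G|\,|D|=|\det[\tilde v_0,\tilde v_{i_1},\tilde v_{i_2}]|=2\,\mathrm{Area}(\triangle_{v_0v_{i_1}v_{i_2}})$ and compare it with $2\,\mathrm{Area}(\triangle_{v_0v_1v_2})=|G|$: the two triangles share the apex $v_0$ and have bases on the line through $v_1,v_2$, so their areas are in the ratio of the lengths of $\overline{v_{i_1}v_{i_2}}$ and $\overline{v_1v_2}$, equivalently of their lattice lengths. The base $\overline{v_{i_1}v_{i_2}}$ is primitive, being an edge of a minimal lattice triangle (Pick's theorem). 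For $\overline{v_1v_2}$ the crucial step is to identify its interior lattice points with $G_0\setminus\{0\}$: any point of $\mathrm{int}\,\overline{v_1v_2}$ is some $v_g$ by Proposition \ref{prop:IntegralPoints} with $F^{(0)}_g=0$, i.e. $g\in G_0\setminus\{0\}$; conversely each nonzero $g\in G_0$ acts on the $(z_1,z_2)$-plane through $\SL(2,\bC)$, so $F^{(1)}_g+F^{(2)}_g\in\bZ$, forcing $a(g)=1$ with $0<F^{(1)}_g,F^{(2)}_g<1$, which places $v_g$ strictly inside the edge. By the distinctness of the points $v_g$ this gives exactly $|G_0|-1$ interior points, so $\overline{v_1v_2}$ has lattice length $|G_0|$. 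The ratio is therefore $1/|G_0|$, so $|G|\,|D|=|G|/|G_0|$ and finally $F^{(0)}_{i_0}=|G_0|/|G|$.

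I expect the main obstacle to be this last count, namely the clean bijection between $\mathrm{int}\,\overline{v_1v_2}\cap\bZ^2$ and $G_0\setminus\{0\}$; once the $\SL(2,\bC)$-age argument is in place, everything else is determinant bookkeeping. One could instead argue that $\tfrac{1}{|G/G_0|}=\tfrac{|G_0|}{|G|}$ is the least positive value of $F^{(0)}$ on $\bZ^2$ (via Lemma \ref{lem}) and that minimality of $\triangle_{v_{i_0}v_{i_1}v_{i_2}}$ forces $v_{i_0}$ onto this first lattice layer, but the determinant computation packages both facts more economically.
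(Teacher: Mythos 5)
Your proof is correct, but it goes by a genuinely different route than the paper's. The paper first notes (as you do) that $i_0\in G_s$ with $F^{(0)}_{i_0}\neq 0$, hence $F^{(0)}_{i_0}\geq \tfrac{1}{|G/G_0|}$, and then invokes Lemma \ref{lem} to produce $h\in G_s$ with $F^{(0)}_h=\tfrac{1}{|G/G_0|}$; the lattice point $v_h$ pins down the first affine line parallel to $\overline{v_1v_2}$, and if $v_{i_0}$ lay strictly beyond it, the triangle $\triangle_{v_{i_0}v_{i_1}v_{i_2}}$ would have area strictly larger than that of the lattice triangle $\triangle_{v_hv_{i_1}v_{i_2}}$, hence larger than $\tfrac12$, contradicting basicness. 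This is exactly the alternative you sketch in your closing remark. Your main argument instead factors $\det[\tilde v_{i_0},\tilde v_{i_1},\tilde v_{i_2}]=|G|\,F^{(0)}_{i_0}D$ and evaluates $|G|\,|D|$ as a ratio of base lengths, which requires identifying $\mathrm{int}\,\overline{v_1v_2}\cap\bZ^2$ with $G_0\setminus\{0\}$; your verification of that bijection (Proposition \ref{prop:IntegralPoints} in one direction, the $\SL(2,\bC)$-age argument and distinctness of the $v_g$ in the other) is sound, and the rest is correct bookkeeping. What your route buys is independence from Lemma \ref{lem} (whose proof involves the somewhat fiddly choice of $k_0$) together with the extra structural fact that $\overline{v_1v_2}$ has lattice length $|G_0|$; what the paper's route buys is brevity, since Lemma \ref{lem} is already available and the area comparison is a one-line picture.
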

\begin{proof}
It is easy to see that $i_0\in G_s$ with $F^{(0)}_{i_0}\neq 0$. So $i_0\in G\setminus G_0$ and hence $\frac{\vert G\vert}{\vert G_0\vert}>1$. From Lemma \ref{lem}, we can find $h\in G_s$ such that $F^{(0)}_h=\frac{1}{\vert G/G_0\vert}$. Then the affine line $$l:=\{xv_1+(1-\frac{\vert G_0\vert}{\vert G\vert}-x)v_2\mid x\in\mathbb{R}\}$$ contains $v_h$ and is parallel to $\overline{v_1v_2}$.

If $F_{i_0}^{(0)}\neq\frac{1}{\vert G/G_0\vert}$, then we have $F_{i_0}^{(0)}>\frac{1}{\vert G/G_0\vert}$ and therefore $v_{i_0}$ lies on the different side of $l$ from $\overline{v_1v_2}$. So it is clear that $$\textrm{area of }\triangle_{v_{i_0}v_{i_1}v_{i_2}}>\textrm{area of }\triangle_{v_hv_{i_1}v_{i_2}},$$ as illustrated below. This is absurd since the area of the triangle is minimal.

\begin{center}
\setlength{\unitlength}{1cm}
\begin{picture}(3.8,3.5)(-2.5,-1.5)
\drawline(-2,0)(0,-1)(1,1)(-2,0)
\drawline(0.8,1.6)(-0.8,-1.6)
\put(0.9,1.6){$l$}
\put(0,1){$v_{h}$}
\put(1.1,1){$v_{i_2}$}
\put(-2.5,0){$v_{i_0}$}
\put(0,-1.3){$v_{i_1}$}
\put(0.5,1){\circle*{0.1}}
\put(-2,0){\circle*{0.1}}
\put(0,-1){\circle*{0.1}}
\put(1,1){\circle*{0.1}}
\put(0.5,0){$\backslash$}
\put(-0.5,-2){Figure 10}
\end{picture}
\end{center}
\end{proof}

\subsection{Extended Picard-Fuchs system for $(Y,L_0)$}

Let $z_0$, $z_1$, $z_2$, $z_{g_1}$, $\cdots$, $z_{g_s}$, $z_+,z_-$ be some variables and set
\begin{eqnarray}
&&q_{g_l}:=\prod_{i\in\mathcal{S}}z_i^{l^{(g_l)}_i},\quad l=1,\cdots,s,\label{12}\\
&&q_0:=\prod_{i\in\mathcal{S}}z_i^{l^{(0)}_i}\cdot\frac{z_+}{z_-}.\nonumber
\end{eqnarray}
We shall write $\vec{q}=(q_{g_1},\cdots,q_{g_s})$.
Then $\vec{q}$ gives the closed moduli parameters
and $q_0$ gives the open moduli parameter.
In physics literature, $\vec{q}$ is referred to as bulk moduli and $q_0$ is referred to as boundary moduli.  The extended Picard-Fuchs system associated to the extended charge vectors $\tilde{l}^{(0)},\tilde{l}^{(g_1)},\cdots,\tilde{l}^{(g_s)}$ are the following system of partial differential equations:
\begin{eqnarray}
&&\bigg(\prod_{i\in\mathcal{S}:l^{(g_l)}_i>0}\partial_{z_i}^{l^{(g_l)}_i}-\prod_{i\in\mathcal{S}:l^{(g_l)}_i<0}\partial_{z_i}^{-l^{(g_l)}_i}\bigg)\Omega=0,\quad l=1,\cdots,s,\label{13}\\
&&\bigg(\partial_{z_+}\cdot\prod_{i\in\mathcal{S}:l_i^{(0)}>0}\partial_{z_i}^{l_i^{(0)}}-\partial_{z_-}\cdot\prod_{i\in\mathcal{S}:l_i^{(0)}<0}\partial_{z_i}^{-l_i^{(0)}}\bigg)\Omega=0.\label{14}
\end{eqnarray}
Assume that $\Omega=\Omega(q,q_0)$. Then we can rewrite \eqref{13},\eqref{14} as
\begin{eqnarray}
&&\mathcal{D}_l\Omega=0,\quad l=1,\cdots,s,\label{15}\\
&&\mathcal{D}_0\Omega=0,\label{16}
\end{eqnarray}
where
\begin{eqnarray*}
\mathcal{D}_l&=&\prod_{i\in\mathcal{S}:l_i^{(g_l)}>0}\bigg(\sum\limits_{g\in G_s}l_i^{(g)}\Theta_{q_g}+l_i^{(0)}\Theta_{q_0}\bigg)_{l_i^{(g_l)}}\\
&&\qquad-q_{g_l}\cdot\prod_{i\in\mathcal{S}:l_i^{(g_l)}<0}\bigg(\sum\limits_{g\in G_s}l_i^{(g)}\Theta_{q_g}+l_i^{(0)}\Theta_{q_0}\bigg)_{-l_i^{(g_l)}},\quad l=1,\cdots,s,\\
\mathcal{D}_0&=&\Theta_{q_0}\cdot\prod_{i\in\mathcal{S}:l_i^{(0)}>0}\bigg(\sum\limits_{g\in G_s}l_i^{(g)}\Theta_{q_g}+l_i^{(0)}\Theta_{q_0}\bigg)_{l_i^{(0)}}\\
&&\qquad+q_{0}\Theta_{q_0}\cdot\prod_{i\in\mathcal{S}:l_i^{(0)}<0}\bigg(\sum\limits_{g\in G_s}l_i^{(g)}\Theta_{q_g}+l_i^{(0)}\Theta_{q_0}\bigg)_{-l_i^{(0)}}
\end{eqnarray*}
Here we use the logarithmic derivative $\Theta_x=x\partial_x$ and the Pochhammer symbol: $$(x)_n:=\frac{\Gamma(x+1)}{\Gamma(x-n+1)}=x(x-1)(x-2)\cdots(x-n+1),\quad n\in\mathbb{Z}_{\geqslant 0}$$

By Frobenius method,
the fundamental solution to the Picard-Fuchs system \eqref{15},\eqref{16} is:
\be
\Omega(\vec{q},q_0;\vec{r},r_0)
=\sum\limits_{\vec{m},m_0}A_{\vec{m},m_0}(\vec{r},r_0)\cdot \vec{q}^{\vec{m}+\vec{r}}\cdot q_0^{m_0+r_0},
\ee
where
\begin{eqnarray}
&& m_0\in\mathbb{Z}_{\geqslant 0},\quad \vec{m}=(m_{g})_{g\in G_s}\in\mathbb{Z}^s_{\geqslant 0},\nonumber\\
&& r_0\in\mathbb{C},\quad \vec{r}=(r_g)_{g\in G_s}\in\mathbb{C}^s,\nonumber\\
&& \vec{q}^{\vec{m}+\vec{r}}:=\prod\limits_{g\in G_s}q_g^{m_g+r_g},\nonumber
\end{eqnarray}
and
\be \label{eqn:A}
\begin{split}
A_{\vec{m},m_0}(r,r_0)
=&\prod\limits_{i\in\mathcal{S}}
\frac{\Gamma(1+r_0l_i^{(0)}+\langle \vec{r},l_i\rangle)}{\Gamma\Big(1+(r_0+m_0)l_i^{(0)}+\langle \vec{r}+\vec{m},l_i\rangle\Big)} \\
& \cdot\frac{\Gamma(1+r_0)}{\Gamma(1+r_0+m_0)}
\cdot\frac{\Gamma(1-r_0)}{\Gamma(1-r_0-m_0)},
\end{split}
\ee
where $\langle\cdot,\cdot\rangle$ is the standard inner product.
Here we use the following convention:
For $z_1,z_2\in\mathbb{C}$ with $z_1-z_2\in\mathbb{Z}_{\geqslant 0}$, we set
\begin{eqnarray}
\frac{\Gamma(z_1)}{\Gamma(z_2)}:=\lim_{x\rightarrow 0}\frac{\Gamma(x+z_1)}{\Gamma(x+z_2)}.\label{gamma}
\end{eqnarray}
With this convention, we see that for any $z_1,z_2\in\mathbb{Z}$, $\frac{\Gamma(z_1)}{\Gamma(z_2)}$ makes sense when either $z_1\in\mathbb{Z}_{>0}$ or $z_2\in\mathbb{Z}_{\leqslant 0}$. In particular, for $n\in\mathbb{Z}_{\geqslant 0},n'\in\mathbb{Z}$, we have:
\begin{eqnarray}
\frac{1}{\Gamma(-n)}&=&0,\label{n'=0}\\
\frac{\Gamma(1+n')}{\Gamma(-n)}&=&(-1)^{n+n'+1}\frac{\Gamma(1+n)}{\Gamma(-n')}.\label{n'}
\end{eqnarray}

Recall that $\Gamma(z)$ is a meromorphic function in $\mathbb{C}$ with simple poles $z=0,-1,-2,-3,\cdots$,
and takes values in $\mathbb{C}\setminus\{0\}$.
So we can check, with convention \eqref{gamma},
that for each pair $(\vec{m},m_0)$,
the function $A_{\vec{m},m_0}$ is holomorphic around $(\vec{r},r_0)=(\vec{0},0)$.
In particular, $A_{\vec{0},0}$ is a constant function with
\begin{eqnarray}
A_{\vec{0},0}(\vec{r},r_0)=1,\quad\forall (\vec{r},r_0)\in\mathbb{C}^s\times\mathbb{C}.\label{A00}
\end{eqnarray}

\begin{lemma}\label{Frobenius lemma 1}
For $(\vec{m},m_0)\neq( \vec{0},0)$, we have $A_{\vec{m},m_0}(\vec{0},0)=0.$
\end{lemma}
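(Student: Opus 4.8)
The plan is to evaluate the explicit formula \eqref{eqn:A} directly at $(\vec{r},r_0)=(\vec 0,0)$ and to exhibit a single vanishing factor. The first, structural, step is to observe that $A_{\vec m,m_0}$ is a finite product of factors each of the shape $\Gamma(\mathrm{num})\cdot\frac{1}{\Gamma(\mathrm{den})}$, where the arguments are affine-linear in $(\vec r,r_0)$. Since $\tfrac{1}{\Gamma}$ is entire and every numerator argument equals $1$ at the origin, no numerator meets a pole there, so each factor is individually holomorphic at $(\vec 0,0)$ and its value is obtained by naive substitution. Hence $A_{\vec m,m_0}(\vec 0,0)$ equals the product of the finitely many factor values, none of which can be infinite (each is a reciprocal of a $\Gamma$-value), so the whole product vanishes as soon as one factor does.

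Next I would split into two cases. If $m_0\geqslant 1$, the last factor $\frac{\Gamma(1-r_0)}{\Gamma(1-r_0-m_0)}$ specializes to $\frac{\Gamma(1)}{\Gamma(1-m_0)}$, and because $1-m_0\leqslant 0$ is a nonpositive integer, convention \eqref{n'=0} gives $\frac{1}{\Gamma(1-m_0)}=0$; thus $A_{\vec m,m_0}(\vec 0,0)=0$ at once. If $m_0=0$ and $\vec m\neq\vec 0$, the two trailing factors are both $\frac{\Gamma(1)}{\Gamma(1)}=1$, and the claim reduces to showing that the product $\prod_{i\in\mathcal S}\frac{1}{\Gamma(1+\langle\vec m,l_i\rangle)}$ vanishes, i.e. that $\langle\vec m,l_i\rangle$ is a strictly negative integer for at least one $i\in\mathcal S$.

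For this last reduction I would invoke the two facts recorded just after \eqref{10}: the relation $\sum_{i\in\mathcal S}l_i^{(g_l)}=0$ from \eqref{10'}, and the fact that $\{l_{g_l}\}_{l=1}^s$ is a basis of $\mathbb{R}^s$. Summing over $i$ and interchanging the order of summation gives $\sum_{i\in\mathcal S}\langle\vec m,l_i\rangle=\sum_{l}m_{g_l}\sum_{i\in\mathcal S}l_i^{(g_l)}=0$. On the other hand, since the $l_{g_l}$ span $\mathbb{R}^s$, a nonzero $\vec m$ cannot be orthogonal to all of them, so some $\langle\vec m,l_i\rangle\neq 0$ with $i\in\mathcal S$. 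As these integers sum to zero yet are not all zero, at least one is strictly negative, and the corresponding factor $\frac{1}{\Gamma(1+\langle\vec m,l_i\rangle)}$ vanishes by \eqref{n'=0}, completing this case and the lemma.

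I do not anticipate a genuine obstacle; the only point requiring care is the justification that evaluation commutes with the product, i.e. that no $0\cdot\infty$ indeterminacy arises. This is precisely why the preliminary holomorphicity remark --- that every numerator reduces to $\Gamma(1)$ at the origin while $\tfrac{1}{\Gamma}$ is entire --- is worth isolating before the case analysis.
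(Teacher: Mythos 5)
Your proof is correct and follows essentially the same route as the paper: the case $m_0>0$ is killed by the factor $\Gamma(1-r_0)/\Gamma(1-r_0-m_0)$ specializing to $1/\Gamma(1-m_0)=0$, and the case $m_0=0$, $\vec m\neq\vec 0$ uses \eqref{10'} together with the fact that the $l_{g_l}$ form a basis of $\mathbb{R}^s$ to produce some $i$ with $\langle\vec m,l_i\rangle<0$. Your preliminary remark on holomorphicity at the origin is the same point the paper records just before the lemma via the convention \eqref{gamma}.
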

\begin{proof}
 We have the following two cases to consider:
\begin{itemize}
\item When $m_0>0$, we have $$\lim_{r_0\rightarrow 0}\frac{\Gamma(1-r_0)}{\Gamma(1-r_0-m_0)}=0\Rightarrow
A_{\vec{m},m_0}(\vec{0},0)=0.$$
\item When $m_0=0$, we have $\vec{m}\neq \vec{0}$.
If $\langle \vec{m},l_i\rangle=0$ for each $i\in\mathcal{S}$,
then one must have $\vec{m}=0$ since the vectors $l_{g_l},l=1,\cdots,s,$ form a basis of $\mathbb{R}^s$,
which is a contradiction.
So we may find some $i'\in\mathcal{S}$ such that
$\langle \vec{m},l_{i'}\rangle\neq 0$.
Furthermore, note that, from \eqref{10'},
\begin{eqnarray}
\sum\limits_{i\in\mathcal{S}}\langle \vec{m},l_i\rangle
=\langle \vec{m},\sum\limits_{i\in\mathcal{S}}l_i\rangle=0,\nonumber
\end{eqnarray}
and so we may assume that $\langle m,l_{i'}\rangle<0$.
Therefore
$$\lim_{\vec{r} \rightarrow \vec{0}}
\frac{\Gamma(1+\langle \vec{r},l_{i'}\rangle)}{\Gamma\Big(1+\langle \vec{r}+\vec{m},l_{i'}\rangle\Big)}
=0\Rightarrow A_{\vec{m},0}(\vec{0},0)=0.$$
\end{itemize}
\end{proof}

\subsection{Open-closed mirror map}

In this subsection, we derive the open-closed mirror map \eqref{17},\eqref{18} by careful study of the charge vectors. The explicit form of the open-closed mirror map \eqref{19},\eqref{20} is crucial to the determination of the change of variables \eqref{28}.

The open-closed mirror map is given by
\begin{eqnarray}
\Omega_{0}(\vec q,q_0)&=&\frac{\partial\Omega}{\partial r_{0}}(\vec q,q_0;0,0) \label{open-closed open} \\
& = & \log q_{0}+\sum\limits_{\vec{m},m_0\geqslant 0}
\frac{\partial A_{\vec{m},m_0}}{\partial r_{0}}(\vec{0},0)\vec{q}^{\vec{m}}q_0^{m_0}, 
\nonumber \\
\Omega_{g_l}(\vec{q},q_0)
&=&\frac{\partial\Omega}{\partial r_{g_l}}(\vec q,q_0;0,0) \label{open-closed closed} \\
& = & \log q_{g_l}+\sum\limits_{\vec{m},m_0\geqslant 0}
\frac{\partial A_{\vec{m},m_0}}{\partial r_{g_l}}(0,0)\vec{q}^{\vec{m}}q_0^{m_0},\qquad
 \nonumber
\end{eqnarray}
for $l=1,\cdots,s$.

Set $\Psi:=\big(\log\Gamma\big)'$.
Then direct calculations by \eqref{eqn:A} yield:
\begin{eqnarray}
&& \frac{\partial A_{\vec{m},m_0}}{\partial r_{g_l}}(\vec{r},r_0)
=A_{\vec{m},m_0}(\vec{r},r_0) \nonumber\\
&& \cdot \sum\limits_{i\in\mathcal{S}}l_i^{(g_l)}
\bigg[\Psi(1+r_0l_i^{(0)}+\langle \vec{r},l_i\rangle)
-\Psi\Big(1+(r_0+m_0)l_i^{(0)}+\langle \vec{r}+\vec{m},l_i\rangle\Big)\bigg],\nonumber\\&&\qquad\qquad\qquad\qquad\qquad\qquad\qquad\qquad\qquad\qquad\qquad l=1,\cdots,s,\nonumber\\
&& \frac{\partial A_{\vec{m},m_0}}{\partial r_{0}}(\vec{r},r_0)
=A_{\vec{m},m_0}(\vec{r},r_0)\nonumber\\
&& \cdot\Bigg\{\sum\limits_{i\in\mathcal{S}}l_i^{(0)}
\bigg[\Psi(1+r_0l_i^{(0)}+\langle \vec{r},l_i\rangle)
-\Psi\Big(1+(r_0+m_0)l_i^{(0)}
+\langle \vec{r}+\vec{m},l_i\rangle\Big)\bigg] \nonumber \\
&&\quad+\big[\Psi(1+r_0)-\Psi(1+r_0+m_0)\big]-\big[\Psi(1-r_0)-\Psi(1-r_0-m_0)\big]\Bigg\}\nonumber
\end{eqnarray}

Note that $\Psi(z)$ is a meromorphic function on $\mathbb{C}$ with simple poles $z=0,-1,-2,-3,\cdots$.
For convenience, for $n=0,1,2,3,\cdots,$ we write
\begin{eqnarray}
\quad \frac{\Psi}{\Gamma}(-n):=\lim_{x\rightarrow 0}\frac{\Psi}{\Gamma}(x-n).\label{Frobenius convention}
\end{eqnarray}

\begin{lemma}\label{Frobenius lemma 2}
For $n=0,1,2,3,\cdots$, we have
\begin{eqnarray}
\lim\limits_{x\rightarrow 0}\frac{\Psi}{\Gamma}(x-n)&=&(-1)^{n+1}\Gamma(n+1).\nonumber
\end{eqnarray}
\end{lemma}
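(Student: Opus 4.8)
The plan is to recognize that the apparently singular quotient $\frac{\Psi}{\Gamma}$ is in fact the restriction of an entire function, so that the convention \eqref{Frobenius convention} merely evaluates this entire function at $z=-n$. Since $\Psi=(\log\Gamma)'=\Gamma'/\Gamma$, away from the poles of $\Gamma$ we have $\frac{\Psi}{\Gamma}=\frac{\Gamma'}{\Gamma^2}=-\left(\frac{1}{\Gamma}\right)'$. As $\frac{1}{\Gamma}$ is entire, its derivative $\left(\frac{1}{\Gamma}\right)'$ is entire as well, and it agrees with $\frac{\Psi}{\Gamma}$ on $\mathbb{C}\setminus\{0,-1,-2,\dots\}$; hence it furnishes exactly the holomorphic extension used in \eqref{Frobenius convention}, and $\frac{\Psi}{\Gamma}(-n)=-\left(\frac{1}{\Gamma}\right)'(-n)$.

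It then remains to evaluate $\left(\frac{1}{\Gamma}\right)'$ at $z=-n$. First I would write $g(z):=\frac{1}{\Gamma(z)}$, an entire function with a simple zero at $z=-n$ by \eqref{n'=0}, so that near $z=-n$ one has $g(z)=g'(-n)\,(z+n)+O\big((z+n)^2\big)$. Taking reciprocals shows that $\Gamma$ has a simple pole at $-n$ with residue $1/g'(-n)$. On the other hand, the recurrence $\Gamma(z)=\Gamma(z+n+1)/\big(z(z+1)\cdots(z+n)\big)$ gives this residue directly as $\Gamma(1)/\big((-n)(-n+1)\cdots(-1)\big)=(-1)^n/n!$. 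Comparing, $g'(-n)=(-1)^n n!$, and therefore $\frac{\Psi}{\Gamma}(-n)=-g'(-n)=(-1)^{n+1}n!=(-1)^{n+1}\Gamma(n+1)$, as claimed.

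As an alternative that avoids the reciprocal-gamma identity, I would argue by induction on $n$ using the functional equations $\Gamma(z+1)=z\Gamma(z)$ and $\Psi(z+1)=\Psi(z)+\frac{1}{z}$. These combine to give $\frac{\Psi(z)}{\Gamma(z)}=z\cdot\frac{\Psi(z+1)}{\Gamma(z+1)}-\frac{1}{\Gamma(z+1)}$. Setting $z=x-n$ and letting $x\to0$, the second term tends to $1/\Gamma(-(n-1))=0$ by \eqref{n'=0}, while the first term tends to $(-n)\cdot\frac{\Psi}{\Gamma}(-(n-1))$; the inductive hypothesis then yields $(-n)\cdot(-1)^{n}\Gamma(n)=(-1)^{n+1}n!$. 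The base case $n=0$ follows from the Laurent expansions $\Gamma(x)=\frac{1}{x}-\gamma+O(x)$ and $\Psi(x)=-\frac{1}{x}-\gamma+O(x)$, which give $\frac{\Psi(x)}{\Gamma(x)}\to-1$.

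The only genuine subtlety, and the step I would be most careful about, is that at $z=-n$ both $\Psi$ and $\Gamma$ have simple poles, so $\frac{\Psi}{\Gamma}$ is a priori an $\infty/\infty$ indeterminate form; the content of the lemma is precisely that the polar parts cancel to leave a finite value. The first approach disposes of this automatically by exhibiting $\frac{\Psi}{\Gamma}$ as (minus) the derivative of an entire function, while in the inductive approach the cancellation is made explicit through the vanishing of $1/\Gamma$ at nonpositive integers recorded in \eqref{n'=0}.
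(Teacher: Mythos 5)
Your proof is correct, and your primary argument takes a genuinely different route from the paper's. The paper iterates the recurrence $\Gamma(z)=\tfrac{1}{z}\Gamma(z+1)$ a total of $n+1$ times to write $\Gamma(x-n)=\Gamma(x+1)/\bigl(x(x-1)\cdots(x-n)\bigr)$, correspondingly expands $\Psi(x-n)$ as $\Psi(x+1)$ plus the sum of simple poles $\sum_{j=0}^{n}\tfrac{1}{x-j}$ (with the appropriate signs), multiplies the two expressions, and reads off the limit term by term: only the $j=0$ pole survives against the zero of $1/\Gamma(x-n)$ at $x=0$, and the $\Psi(x+1)$ contribution dies. Your first argument instead observes that $\tfrac{\Psi}{\Gamma}=\tfrac{\Gamma'}{\Gamma^2}=-\bigl(\tfrac{1}{\Gamma}\bigr)'$ is the restriction of an entire function, so the convention \eqref{Frobenius convention} is simply evaluation of that entire function at $-n$, and the value $-g'(-n)$ with $g=1/\Gamma$ is determined by the classical residue $(-1)^n/n!$ of $\Gamma$ at $-n$. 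This is the more conceptual route: it explains in one stroke why the limit exists for every $n$ and ties the answer to a standard invariant of $\Gamma$, whereas the paper's computation exhibits the pole cancellation explicitly and stays entirely within the functional equation. Your second, inductive argument is essentially the paper's telescoping carried out one step at a time, with the cancellation localized in the vanishing of $1/\Gamma$ at nonpositive integers and the base case settled by the Laurent expansions $\Gamma(x)=\tfrac{1}{x}-\gamma+O(x)$, $\Psi(x)=-\tfrac{1}{x}-\gamma+O(x)$; it is the closest in spirit to the published proof. All the constants you compute ($g'(-n)=(-1)^n n!$, residue $(-1)^n/n!$, base value $-1$) check out against the claimed $(-1)^{n+1}\Gamma(n+1)$.
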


\begin{proof}
Recall that the Gamma function has the following property:
$$\Gamma(z) = \frac{1}{z} \Gamma(z+1).$$
So we have
\ben
\Psi(x-n)
& = & \frac{d}{dx} \log \Gamma(x-n) \\
& = & \frac{d}{dx} \log \biggl( \frac{1}{x-n} \cdot \frac{1}{x-n+1}
\cdots \frac{1}{x} \cdot \Gamma(x+1) \biggr) \\
& = &  \frac{1}{x-n} + \frac{1}{x-n+1} +
\cdots + \frac{1}{x} + \frac{d}{dx} \log \Gamma(x+1),
\een
and so
\ben
\frac{\Psi}{\Gamma}(x-n)
= \sum_{i=0}^n \prod_{0 \leq j \leq n, j \neq i} (x-j) +
\prod_{i=0}^n (x-i) \cdot \frac{1}{\Gamma(x+1)} \frac{d}{dx} \log \Gamma(x+1).
\een
The proof is completed by taking $x$ to $0$.
\end{proof}

By \eqref{n'=0},\eqref{Frobenius convention}, we can write
\begin{eqnarray}
&&\frac{\partial A_{\vec m,m_0}}{\partial r_{g_l}}(0,0)\nonumber\\
&=&\sum\limits_{i\in\mathcal{S}}l_i^{(g_l)}\frac{\Phi(1)-\Phi(1+m_0l_i^{(0)}+\langle \vec m,l_i\rangle)}{\Gamma(1+m_0l_i^{(0)}+\langle \vec m,l_i\rangle)}\cdot\frac{1}{\Gamma(1+m_0)\Gamma(1-m_0)}\nonumber\\
&&\quad\cdot\prod\limits_{i'\in\mathcal{S}\setminus\{i\}}\frac{1}{\Gamma(1+m_0l_{i'}^{(0)}+\langle \vec m,l_{i'}\rangle)},\quad l=1,\cdots,s,\nonumber\\
&&\frac{\partial A_{\vec m,m_0}}{\partial r_{0}}(0,0)\nonumber\\
&=&\sum\limits_{i\in\mathcal{S}}l_i^{(0)}\frac{\Phi(1)-\Phi(1+m_0l_i^{(0)}+\langle \vec m,l_i\rangle)}{\Gamma(1+m_0l_i^{(0)}+\langle \vec m,l_i\rangle)}\cdot\frac{1}{\Gamma(1+m_0)\Gamma(1-m_0)}\nonumber\\
&&\quad\quad\cdot\prod\limits_{i'\in\mathcal{S}\setminus\{i\}}\frac{1}{\Gamma(1+m_0l_{i'}^{(0)}+\langle \vec m,l_{i'}\rangle)}\nonumber\\
&&\quad+\frac{\Phi(1)-\Phi(1+m_0)}{\Gamma(1+m_0)}\cdot\frac{1}{\Gamma(1-m_0)}\cdot\prod\limits_{i\in\mathcal{S}}\frac{1}{\Gamma(1+m_0l_i^{(0)}+\langle \vec m,l_i\rangle)}\nonumber\\
&&\quad-\frac{\Phi(1)-\Phi(1-m_0)}{\Gamma(1-m_0)}\cdot\frac{1}{\Gamma(1+m_0)}\cdot\prod\limits_{i\in\mathcal{S}}\frac{1}{\Gamma(1+m_0l_i^{(0)}+\langle \vec m,l_i\rangle)}.\nonumber
\end{eqnarray}
We observe that if $m_0>0$, then $\frac{\partial A_{\vec m,m_0}}{\partial r_{g_l}}(0,0)=0, l=1,\cdots,s$. The following Lemma \ref{Frobenius lemma 4} implies that this is also true for $\frac{\partial A_{\vec m,m_0}}{\partial r_{0}}(0,0)$, whose proof needs Lemma \ref{Frobenius lemma 3}.
\begin{lemma}\label{Frobenius lemma 3}
The vectors $l_i,i\in\mathcal{S}\setminus\{i_0,i_1,i_2\}$ form a basis of $\mathbb{R}^s.$
\end{lemma}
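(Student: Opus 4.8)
The plan is to prove the equivalent statement that the vectors $\{l_i : i\in\cS\setminus\{i_0,i_1,i_2\}\}$ are linearly independent. Since $\{i_0,i_1,i_2\}$ are three distinct indices (the vertices of an honest triangle of the triangulation), the set $\cS\setminus\{i_0,i_1,i_2\}$ has exactly $(3+s)-3=s$ elements, and each $l_i$ lies in $\bR^s$; hence independence of these $s$ vectors is the same as their being a basis. Two structural inputs drive everything. First, the charge vectors $l^{(g_1)},\dots,l^{(g_s)}$ form a basis of the relation lattice $\bL$, hence an $\bR$-basis of $\bL\otimes_\bZ\bR=\{x\in\bR^{\cS}:\sum_{i\in\cS}x_i\tilde v_i=0\}$. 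Second, the three vectors $\tilde v_{i_0},\tilde v_{i_1},\tilde v_{i_2}$ form a basis of $\bR^3$: by construction $\triangle_{v_{i_0}v_{i_1}v_{i_2}}$ is one of the triangles of the triangulation, so as established in Section~\ref{section 2} the cone on $\tilde v_{i_0},\tilde v_{i_1},\tilde v_{i_2}$ is basic and these vectors already span $\bZ^3$ over $\bZ$.

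Next I would record the duality that links these two inputs. Assemble the charge data into the $s\times(3+s)$ matrix $L=\big(l^{(g_l)}_i\big)$, whose columns are precisely the $l_i$, and regard it as a map $x\mapsto Lx$ on $\bR^{\cS}$. Its kernel is the orthogonal complement of its row space, and the row space is exactly $\bL\otimes_\bZ\bR$, which is the kernel of the map $T\colon\bR^{\cS}\to\bR^3$, $x\mapsto\sum_{i\in\cS}x_i\tilde v_i$. Since the orthogonal complement of $\ker T$ is the image of the transpose of $T$, we obtain
\[
\ker L=\Big\{\big(\langle y,\tilde v_i\rangle\big)_{i\in\cS}\ :\ y\in\bR^3\Big\}.
\]
This identity is the conceptual core of the argument: it converts a statement about the combinatorially defined charge vectors into a statement about the geometric lattice vectors $\tilde v_i$.

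With the identity in hand the conclusion is immediate. Suppose $\sum_{i\in\cS\setminus\{i_0,i_1,i_2\}}c_i\,l_i=0$ for scalars $c_i\in\bR$, and extend $c$ to all of $\cS$ by setting $c_{i_0}=c_{i_1}=c_{i_2}=0$. Then $\sum_{i\in\cS}c_i\,l_i=0$, i.e.\ $c\in\ker L$, so $c_i=\langle y,\tilde v_i\rangle$ for some $y\in\bR^3$. Evaluating at the three omitted indices gives $\langle y,\tilde v_{i_0}\rangle=\langle y,\tilde v_{i_1}\rangle=\langle y,\tilde v_{i_2}\rangle=0$, and since $\tilde v_{i_0},\tilde v_{i_1},\tilde v_{i_2}$ form a basis of $\bR^3$ this forces $y=0$, whence $c=0$. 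Thus the vectors $l_i$ with $i\in\cS\setminus\{i_0,i_1,i_2\}$ are linearly independent.

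The only genuinely delicate point, and the step I expect to be the main obstacle, is pinning down the dual description of $\ker L$ correctly — getting the transpose and the two kernels aligned — after which the rest is formal linear algebra combined with the basicness of the triangle $\triangle_{v_{i_0}v_{i_1}v_{i_2}}$. I also note that the special case $\{i_0,i_1,i_2\}=\{0,1,2\}$ recovers the assertion used earlier (following \eqref{10}) that $l_{g_1},\dots,l_{g_s}$ form a basis of $\bR^s$, so the present lemma is simply the natural generalization in which the distinguished triangle $\triangle_{v_{i_0}v_{i_1}v_{i_2}}$ plays the role of $\triangle_{v_0v_1v_2}$.
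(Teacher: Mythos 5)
Your proof is correct, and it reaches the conclusion by a route dual to the paper's. Both arguments rest on the same two inputs --- that $l^{(g_1)},\dots,l^{(g_s)}$ form a basis of $\bL$, and that $\tilde v_{i_0},\tilde v_{i_1},\tilde v_{i_2}$ form a basis of $\bR^3$ because $\triangle_{v_{i_0}v_{i_1}v_{i_2}}$ is a basic triangle of the triangulation --- but they exploit them on opposite sides of the rank count. The paper writes each $\tilde v_i$, $i\in\cS\setminus\{i_0,i_1,i_2\}$, in the basis $\tilde v_{i_0},\tilde v_{i_1},\tilde v_{i_2}$, packages the coefficients into relation vectors $\alpha^{(i)}$ supported only at $i$ and at $i_0,i_1,i_2$, checks that these form a basis of $\bL\otimes_{\bZ}\mathbb{Q}$, and deduces the explicit relations $l_{i_j}+\sum_{i}C^{(i_j)}_i l_i=0$; since all $3+s$ columns $l_i$ span $\bR^s$, the $s$ columns indexed by $\cS\setminus\{i_0,i_1,i_2\}$ already span and hence form a basis. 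You instead prove linear independence directly, via the identification
$$\ker L=\bigl(\bL\otimes_{\bZ}\bR\bigr)^{\perp}=\bigl\{(\langle y,\tilde v_i\rangle)_{i\in\cS}\,:\,y\in\bR^3\bigr\},$$
and then kill any dependence supported off $\{i_0,i_1,i_2\}$ by evaluating against the basis $\tilde v_{i_0},\tilde v_{i_1},\tilde v_{i_2}$. Your version makes the underlying Gale-type duality (an $s$-subset of columns of $L$ is a basis of $\bR^s$ exactly when the complementary three $\tilde v_i$ are a basis of $\bR^3$) transparent and avoids having to verify that the $\alpha^{(i)}$ form a basis; the paper's version produces the explicit relations among the columns as a byproduct. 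Both are complete, and both rely on the dimension count $|\cS\setminus\{i_0,i_1,i_2\}|=s$, which you justify correctly by noting that $i_0,i_1,i_2$ are the distinct vertices of a triangle in the triangulation.
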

\begin{proof}
Recall that $\tilde{v}_{i_0},\tilde{v}_{i_1},\tilde{v}_{i_2}$ form a basis of $\mathbb{R}^3$. So for each $i\in\mathcal{S}\setminus\{i_0,i_1,i_2\}$, we can find unique $C^{(i_0)}_i,C^{(i_1)}_i,C^{(i_2)}_i\in\mathbb{Q}$ such that $$\tilde{v}_i=C^{(i_0)}_i\tilde{v}_{i_0}+C^{(i_1)}_i\tilde{v}_{i_1}+C^{(i_1)}_i\tilde{v}_{i_1},$$ and we associate to $i$ a vector $\alpha^{(i)}=(\alpha^{(i)}_0,\alpha^{(i)}_1,\alpha^{(i)}_2,\alpha^{(i)}_{g_1},\cdots,\alpha^{(i)}_{g_s})\in\mathbb{L}\otimes_{\mathbb{Z}}\mathbb{Q}$, defined by
\begin{eqnarray}
\alpha^{(i)}_j=\left\{\begin{array}{cl}1,&j=i,\\-C^{(i_0)}_i,&j=i_0,\\-C^{(i_1)}_i,&j=i_1,\\-C^{(i_2)}_i,&j=i_2,\\0,&j\in\mathcal{S}\setminus\{i_0,i_1,i_2\}.\end{array}\right.\label{Frobenius lemma 3'}
\end{eqnarray}
Then we can check that the vectors $\alpha^{(i)}, i\in\mathcal{S}\setminus\{i_0,i_1,i_2\}$ form a $\mathbb{Q}$-basis of $\mathbb{L}\otimes_{\mathbb{Z}}\mathbb{Q}$. Therefore the charge vectors $l^{g_l},l=1,\cdots,s$ are linear combinations of the vectors $\alpha^{(i)}, i\in\mathcal{S}\setminus\{i_0,i_1,i_2\}$. So from \eqref{Frobenius lemma 3'}, we have
\begin{eqnarray}
\sum\limits_{i\in\mathcal{S}\setminus\{i_0,i_1,i_2\}}C^{(i_0)}_il_i+l_{i_0}=0,\nonumber\\
\sum\limits_{i\in\mathcal{S}\setminus\{i_0,i_1,i_2\}}C^{(i_1)}_il_i+l_{i_1}=0,\nonumber\\
\sum\limits_{i\in\mathcal{S}\setminus\{i_0,i_1,i_2\}}C^{(i_2)}_il_i+l_{i_2}=0,\nonumber
\end{eqnarray}
Recall that the $s+3$ vectors $l_i,i\in\mathcal{S}$ span $\mathbb{R}^s$. So the $s$ vectors  $l_i,i\in\mathcal{S}\setminus\{i_0,i_1,i_2\}$ span $\mathbb{R}^s$ and form a basis of $\mathbb{R}^s$.
\end{proof}
\begin{lemma}\label{Frobenius lemma 4}
If $m_0>0$, then $$\prod\limits_{i\in\mathcal{S}}\frac{1}{\Gamma(1+m_0l_i^{(0)}+\langle \vec m,l_i\rangle)}=0.$$
\end{lemma}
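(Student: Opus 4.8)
The plan is to reduce the vanishing of the product to a statement about a single factor. Since $m_0\in\mathbb{Z}_{\geqslant 0}$, $\vec m\in\mathbb{Z}^s_{\geqslant 0}$, and all the $l_i^{(0)}$ and $l_i$ are integral, each argument $1+m_0 l_i^{(0)}+\langle\vec m,l_i\rangle$ is an integer. By the convention \eqref{n'=0}, the factor $1/\Gamma(1+m_0 l_i^{(0)}+\langle\vec m,l_i\rangle)$ vanishes precisely when this argument is a nonpositive integer, i.e. when
$$m_0 l_i^{(0)}+\langle\vec m,l_i\rangle\leqslant -1.$$
So it suffices to exhibit one index $i\in\mathcal{S}$ for which $m_0 l_i^{(0)}+\langle\vec m,l_i\rangle$ is strictly negative.

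First I would sum this quantity over all $i\in\mathcal{S}$. Using \eqref{11} one has $\sum_{i\in\mathcal{S}}l_i^{(0)}=l_{i_0}^{(0)}+l_{i_1}^{(0)}+l_{i_2}^{(0)}=1+f+(-f-1)=0$, while \eqref{10'} gives $\sum_{i\in\mathcal{S}}l_i=0$ in $\mathbb{R}^s$. Hence
$$\sum_{i\in\mathcal{S}}\big(m_0 l_i^{(0)}+\langle\vec m,l_i\rangle\big)=m_0\sum_{i\in\mathcal{S}}l_i^{(0)}+\Big\langle\vec m,\sum_{i\in\mathcal{S}}l_i\Big\rangle=0.$$
Since each summand is an integer and they sum to zero, either some summand is $\leqslant -1$ — in which case we are done by the first step — or every summand is exactly zero. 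The remaining task is to rule out the latter possibility when $m_0>0$.

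To exclude the all-zero case I would invoke Lemma \ref{Frobenius lemma 3}. Suppose $m_0 l_i^{(0)}+\langle\vec m,l_i\rangle=0$ for every $i\in\mathcal{S}$. For $i\in\mathcal{S}\setminus\{i_0,i_1,i_2\}$ we have $l_i^{(0)}=0$ by \eqref{11}, so $\langle\vec m,l_i\rangle=0$ for all such $i$; but by Lemma \ref{Frobenius lemma 3} these $s$ vectors form a basis of $\mathbb{R}^s$, forcing $\vec m=\vec 0$. Feeding this back into the $i=i_0$ equation gives $m_0\cdot 1+\langle\vec 0,l_{i_0}\rangle=m_0=0$, contradicting $m_0>0$. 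Therefore some summand must be strictly negative, and the corresponding factor — hence the whole product — vanishes.

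The computation itself is routine; the real content, and the step I expect to be the crux, is the elimination of the all-zero case, which is exactly where the basis property from Lemma \ref{Frobenius lemma 3} (together with the vanishing of $l_i^{(0)}$ off the triangle $\{i_0,i_1,i_2\}$) is indispensable — without it one could not conclude $\vec m=\vec 0$ and force the contradiction with $m_0>0$.
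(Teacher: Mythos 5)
Your proof is correct and follows essentially the same route as the paper's: both reduce the claim to showing that the integer quantities $m_0 l_i^{(0)}+\langle\vec m,l_i\rangle$ cannot all be nonnegative, use $\sum_{i\in\mathcal{S}}l_i^{(0)}=0$ and $\sum_{i\in\mathcal{S}}l_i=0$ to force them all to vanish in the bad case, and then rule that out via Lemma \ref{Frobenius lemma 3} and the $i=i_0$ equation. The only difference is presentational (direct dichotomy versus the paper's argument by contradiction), and you correctly identify the basis property off $\{i_0,i_1,i_2\}$ as the crux.
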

\begin{proof}
Argue by contradiction and assume that $\prod\limits_{i\in\mathcal{S}}\frac{1}{\Gamma(1+m_0l_i^{(0)}+\langle \vec m,l_i\rangle)}\neq 0.$ Then for each $i\in\mathcal{S}$, we have $m_0l_i^{(0)}+\langle \vec m,l_i\rangle\geqslant 0$ since $m_0l_i^{(0)}+\langle \vec m,l_i\rangle\in\mathbb{Z}$.  From \eqref{10'} and \eqref{11}, we have $$\sum\limits_{i\in\mathcal{S}}\big[m_0l_i^{(0)}+\langle \vec m,l_i\rangle\big]=m_0\cdot\sum\limits_{i\in\mathcal{S}}l_i^{(0)}+\langle \vec m,\sum\limits_{i\in\mathcal{S}}l_i\rangle=0,$$ and therefore, $$m_0l_i^{(0)}+\langle\vec m,l_i\rangle=0,\quad i\in\mathcal{S.}$$ Then from \eqref{11}, $$\langle\vec m,l_i\rangle=0,\quad i\in\mathcal{S}\setminus\{i_0,i_1,i_2\}.$$ From Lemma \ref{Frobenius lemma 3}, the vectors $l_i,i\in\mathcal{S}\setminus\{i_0,i_1,i_2\}$ form a basis of $\mathbb{R}^s$, and hence $m=0$. From \eqref{11}, we have $$m_0=m_0l^{(0)}_{i_0}+\langle\vec m,l_{i_0}\rangle=0,$$ which contradicts our condition that $m_0>0$.
\end{proof}

Note that for $\vec m\neq 0,i\in\mathcal{S}$, if $\langle\vec m,l_i\rangle\geqslant 0$, then we may find $i'\in\mathcal{S}\setminus\{i\}$ such that $\langle\vec m,l_{i'}\rangle<0$ by \eqref{10'}. So from Lemma \ref{Frobenius lemma 2},
\begin{eqnarray}
&&\frac{\partial A_{\vec m,m_0}}{\partial r_{g_l}}(0,0)\nonumber\\
&=&\delta_{0,m_0}\sum\limits_{i\in\mathcal{S}}l_i^{(g_l)}\frac{\Phi(1)-\Phi(1+\langle\vec m,l_i\rangle)}{\Gamma(1+\langle\vec m,l_i\rangle)}\cdot\prod\limits_{i'\in\mathcal{S}\setminus\{i\}}\frac{1}{\Gamma(1+\langle\vec m,l_{i'}\rangle)}\nonumber\\
&=&\delta_{0,m_0}\sum\limits_{\substack{i\in\mathcal{S}\\\langle\vec m,l_i\rangle<0}}l_i^{(g_l)}\frac{(-1)^{1+\langle\vec m,l_i\rangle}\Gamma(-\langle\vec m,l_i\rangle)}{\prod\limits_{i'\in\mathcal{S}\setminus\{i\}}\Gamma(1+\langle\vec m,l_{i'}\rangle)},\quad l=1,\cdots,s\label{open-closed closed'}\\
&&\frac{\partial A_{\vec m,m_0}}{\partial r_{0}}(0,0)\nonumber\\
&=&\delta_{0,m_0}\sum\limits_{i\in\mathcal{S}}l_i^{(0)}\frac{\Phi(1)-\Phi(1+\langle\vec m,l_i\rangle)}{\Gamma(1+\langle\vec m,l_i\rangle)}\cdot\prod\limits_{i'\in\mathcal{S}\setminus\{i\}}\frac{1}{\Gamma(1+\langle\vec m,l_{i'}\rangle)}\nonumber\\
&=&\delta_{0,m_0}\sum\limits_{\substack{i\in\mathcal{S}\\\langle\vec m,l_i\rangle<0}}l_i^{(0)}\frac{(-1)^{1+\langle\vec m,l_i\rangle}\Gamma(-\langle\vec m,l_i\rangle)}{\prod\limits_{i'\in\mathcal{S}\setminus\{i\}}\Gamma(1+\langle\vec m,l_{i'}\rangle)}.\label{open-closed open'}
\end{eqnarray}

We may cut down the choice of $i\in\mathcal{S}$, as the following Lemma \ref{Frobenius lemma 5} shows.
\begin{lemma}\label{Frobenius lemma 5}
Assume that $i\in\{0,1,2\}$ and $\langle\vec m,l_i\rangle<0$. Then $$\prod\limits_{i'\in\mathcal{S}\setminus\{i\}}\frac{1}{\Gamma(1+\langle\vec m,l_{i'}\rangle)}=0.$$
\end{lemma}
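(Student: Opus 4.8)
The plan is to reduce the vanishing of the product to a statement about the signs of the integers $\langle\vec m,l_{i'}\rangle$ and then force a contradiction from the age-one condition. Since $\vec m\in\mathbb{Z}^s$ and each $l_{i'}\in\mathbb{Z}^s$, every exponent $\langle\vec m,l_{i'}\rangle$ is an integer, so by \eqref{n'=0} the factor $1/\Gamma(1+\langle\vec m,l_{i'}\rangle)$ vanishes precisely when $\langle\vec m,l_{i'}\rangle\leqslant-1$, i.e. when $\langle\vec m,l_{i'}\rangle<0$. Hence it suffices to produce a single index $i'\in\mathcal{S}\setminus\{i\}$ with $\langle\vec m,l_{i'}\rangle<0$. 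I would argue by contradiction, assuming $\langle\vec m,l_{i'}\rangle\geqslant 0$ for every $i'\in\mathcal{S}\setminus\{i\}$; in particular $\langle\vec m,l_g\rangle\geqslant 0$ for all $g\in G_s$, since the labels $g\in G_s$ are distinct from $\{0,1,2\}$ in $\mathcal{S}$ and $i\in\{0,1,2\}$.

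The key step uses the two relations in \eqref{10} indexed by the two elements $j\in\{0,1,2\}\setminus\{i\}$. Pairing \eqref{10} with $\vec m$ gives $\langle\vec m,l_j\rangle=-\sum_{g\in G_s}F_g^{(j)}\langle\vec m,l_g\rangle$. For each such $j\neq i$ we have $j\in\mathcal{S}\setminus\{i\}$, so by assumption $\langle\vec m,l_j\rangle\geqslant 0$, forcing $\sum_{g\in G_s}F_g^{(j)}\langle\vec m,l_g\rangle\leqslant 0$. But every summand is nonnegative, because $F_g^{(j)}\geqslant 0$ and $\langle\vec m,l_g\rangle\geqslant 0$. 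Therefore each summand vanishes: $F_g^{(j)}\langle\vec m,l_g\rangle=0$ for all $g\in G_s$ and both $j\in\{0,1,2\}\setminus\{i\}$.

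Now I would exploit the age-one condition. Fix $g\in G_s$ with $\langle\vec m,l_g\rangle>0$; the previous step forces $F_g^{(j)}=0$ for both $j\neq i$, whence $a(g)=F_g^{(0)}+F_g^{(1)}+F_g^{(2)}=F_g^{(i)}$. Since $a(g)=1$ this gives $F_g^{(i)}=1$, contradicting $F_g^{(i)}\in[0,1)$. Hence no such $g$ exists, i.e. $\langle\vec m,l_g\rangle=0$ for every $g\in G_s$. Feeding this back into \eqref{10} with $j=i$ yields $\langle\vec m,l_i\rangle=-\sum_{g\in G_s}F_g^{(i)}\langle\vec m,l_g\rangle=0$, contradicting the hypothesis $\langle\vec m,l_i\rangle<0$. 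The only substantive ingredient is this last use of $a(g)=1$ together with $F_g^{(j)}\in[0,1)$ to rule out $\langle\vec m,l_g\rangle>0$; the translation into a statement about negative exponents via \eqref{n'=0} and the nonnegativity of the summands coming from \eqref{10} are routine sign bookkeeping. I expect the age-one argument to be the crux of the proof.
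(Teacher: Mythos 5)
Your proof is correct and follows essentially the same route as the paper's: argue by contradiction, use relation \eqref{10} together with the nonnegativity of the $F_g^{(j)}$ and the assumed nonnegativity of the $\langle\vec m,l_{i'}\rangle$ to force $F_g^{(j)}\langle\vec m,l_g\rangle=0$ for $j\neq i$, and then invoke $a(g)=1$ with $F_g^{(i)}\in[0,1)$ to reach a contradiction. The only cosmetic difference is the order of steps — the paper first extracts a $g$ with $F_g^{(i)}>0$ and $\langle\vec m,l_g\rangle>0$ from the hypothesis $\langle\vec m,l_i\rangle<0$ and then contradicts $a(g)=1$, whereas you first rule out all $g$ with $\langle\vec m,l_g\rangle>0$ and then contradict $\langle\vec m,l_i\rangle<0$ — but the substance is identical.
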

\begin{proof}
Without loss of generality, we may assume that $i=0$. Argue by contradiction and assume that $\prod\limits_{i'\in\mathcal{S}\setminus\{0\}}\frac{1}{\Gamma(1+\langle\vec m,l_{i'}\rangle)}\neq 0$. Then for each $i'\in\mathcal{S}\setminus\{i\}$, we have $\langle\vec m,l_{i'}\rangle\geqslant 0$ since $\langle\vec m,l_{i'}\rangle\in\mathbb{Z}.$

For $j=0,1,2$, from \eqref{10}, we have $$\langle\vec m,l_j\rangle=-\sum\limits_{g\in G_s}F_g^{(j)}\langle\vec m,l_g\rangle\leqslant 0,$$ which implies that $\langle\vec m,l_1\rangle=\langle\vec m,l_2\rangle=0$, and that we may find $g\in G_s$ with $F_g^{(0)}>0$ and $\langle\vec m,l_g\rangle>0$. For this $g$, we have $F_g^{(1)}=F_g^{(2)}=0$, and so $F_g^{(0)}=F_g^{(0)}+F_g^{(1)}+F_g^{(2)}=1$, contradicting the condition that $F_g^{(0)}\in\mathbb{Q}\cap[0,1)$.
\end{proof}

By \eqref{open-closed closed}, \eqref{open-closed open}, \eqref{open-closed closed'}, \eqref{open-closed open'}, and Lemma \ref{Frobenius lemma 1}, Lemma \ref{Frobenius lemma 5}, 
We have

\begin{prop} \label{prop:OCMirror}
The open-closed mirror map for $(Y, L_0)$ is given by;
\begin{eqnarray}
&& \Omega_{g_l}(\vec q,q_0) = \log q_{g_l}+\mathcal{C}_l(\vec q),\quad l=1,\cdots,s,\label{17}\\
&& \Omega_0(\vec q,q_0) = \log q_0+\mathcal{C}_0(\vec q),\label{18}
\end{eqnarray}
with the quantum corrections ($\vec m=(m_g)_{g\in G_s}$):
\begin{eqnarray}
&& \mathcal{C}_l(\vec q)=-\sum\limits_{h\in G_s}l_{h}^{(g_l)}\cdot
\sum\limits_{\substack{m\in\mathbb{Z}^s_{\geqslant0}\\\langle\vec m,l_h\rangle<0}}\frac{(-1)^{\langle\vec m,l_{h}\rangle}\Gamma\big(-\langle\vec m,l_{h}\rangle\big)}{\prod\limits_{i\in\mathcal{S}\setminus\{h\}}\Gamma(1+\langle\vec m,l_i\rangle)}\prod_{g\in G_s}q_g^{m_g},\label{19} \\
&& \qquad \qquad \qquad \qquad \qquad \qquad   \qquad \qquad \qquad   \qquad \qquad    
l=1,\cdots,s, \nonumber\\
&& \mathcal{C}_0(\vec q)=-\sum\limits_{h\in G_s}l_{h}^{(0)}\cdot
\sum\limits_{\substack{m\in\mathbb{Z}^s_{\geqslant0}\\\langle\vec m,l_h\rangle<0}}\frac{(-1)^{\langle\vec m,l_{h}\rangle}\Gamma\big(-\langle\vec m,l_{h}\rangle\big)}{\prod\limits_{i\in\mathcal{S}\setminus\{h\}}\Gamma(1+\langle\vec m,l_i\rangle)}\prod_{g\in G_s}q_g^{m_g},\label{20}
\end{eqnarray}
\end{prop}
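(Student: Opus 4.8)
The plan is to assemble the Proposition directly from the term-by-term differentiation of the Frobenius fundamental solution, using the vanishing lemmas already established to collapse the relevant sums down to the stated quantum corrections. There is no new analytic content to produce; every ingredient has been set up, and the task is to combine them in the correct order and keep careful track of the surviving index sets and signs.

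First I would differentiate $\Omega(\vec q,q_0;\vec r,r_0)$ term by term with respect to $r_0$ (and separately each $r_{g_l}$) and evaluate at $(\vec r,r_0)=(\vec 0,0)$; this is legitimate since each coefficient $A_{\vec m,m_0}$ is holomorphic near the origin. Applying the product rule to $A_{\vec m,m_0}(\vec r,r_0)\,\vec q^{\vec m+\vec r}\,q_0^{m_0+r_0}$ yields two contributions per term: one proportional to $\partial A_{\vec m,m_0}/\partial r_0$, and one proportional to $A_{\vec m,m_0}(\vec 0,0)\log q_0$. Invoking Lemma \ref{Frobenius lemma 1} together with \eqref{A00}, the second contribution vanishes for every $(\vec m,m_0)\neq(\vec 0,0)$ and equals $\log q_0$ when $(\vec m,m_0)=(\vec 0,0)$. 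This isolates the leading terms $\log q_0$ and $\log q_{g_l}$ and reproduces the series \eqref{open-closed open}, \eqref{open-closed closed}. Next I would substitute the explicit evaluations \eqref{open-closed open'}, \eqref{open-closed closed'} of $\partial A_{\vec m,m_0}/\partial r_0(\vec 0,0)$ and $\partial A_{\vec m,m_0}/\partial r_{g_l}(\vec 0,0)$; these already carry the factor $\delta_{0,m_0}$ coming from Lemma \ref{Frobenius lemma 4}, so the $q_0$-dependence drops out and the double sum over $(\vec m,m_0)$ collapses to a single sum over $\vec m\in\mathbb Z^s_{\geq 0}$. I would then apply Lemma \ref{Frobenius lemma 5} to discard every term with $i\in\{0,1,2\}$: whenever $\langle\vec m,l_i\rangle<0$ for such an $i$, the accompanying product of reciprocal Gamma factors vanishes, so the inner sum over $i\in\mathcal S$ with $\langle\vec m,l_i\rangle<0$ reduces to a sum over $i=h\in G_s$ only. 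Finally, rewriting the sign $(-1)^{1+\langle\vec m,l_h\rangle}=-(-1)^{\langle\vec m,l_h\rangle}$ pulls out the global minus sign and produces exactly the quantum corrections $\mathcal C_l(\vec q)$ and $\mathcal C_0(\vec q)$ in \eqref{19}, \eqref{20}.

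Since all the analytic input — holomorphy of the coefficients, the vanishing statements of Lemmas \ref{Frobenius lemma 1}, \ref{Frobenius lemma 4} and \ref{Frobenius lemma 5}, and the residue identity of Lemma \ref{Frobenius lemma 2} — is already in hand, the argument is essentially bookkeeping. The only point demanding genuine care is the reduction of the index set from $\mathcal S$ to $G_s$: one must confirm that Lemma \ref{Frobenius lemma 5} annihilates all three boundary indices $i=0,1,2$ simultaneously, so that no spurious boundary contributions survive in $\mathcal C_l$ or $\mathcal C_0$. I expect this reduction step, rather than any convergence or interchange-of-differentiation issue, to be the main thing to verify carefully.
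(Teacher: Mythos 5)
Your proposal is correct and follows essentially the same route as the paper: term-by-term differentiation of the Frobenius solution, isolating the logarithmic leading terms via Lemma \ref{Frobenius lemma 1} and \eqref{A00}, killing the $m_0>0$ terms via Lemma \ref{Frobenius lemma 4} (together with the $1/\Gamma(1-m_0)$ factor for the $r_{g_l}$-derivatives), and restricting the inner sum from $\mathcal{S}$ to $G_s$ via Lemma \ref{Frobenius lemma 5}. The bookkeeping of signs via $(-1)^{1+\langle\vec m,l_h\rangle}=-(-1)^{\langle\vec m,l_h\rangle}$ matches \eqref{open-closed closed'}, \eqref{open-closed open'} exactly.
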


Notably, the quantum corrections do not involve $l_0,l_1,l_2$ and $q_0$.

\subsection{Superpotential} \label{sec:SuperPot}

The superpotential for $(Y,L_0)$ is \cite{FL}:
\begin{eqnarray}
W(q,q_0;f)&=&\sum\limits_{\substack{m_0\in\mathbb{Z}_{>0},\vec m=(m_g)_{g\in G_s}\in\mathbb{Z}^s_{\geqslant0}\\\textrm{if }l_i^{(0)}\geqslant 0,\textrm{ then }m_0l_i^{(0)}+\langle\vec m,l_i\rangle\geqslant 0,\\\textrm{if }l_i^{(0)}<0,\textrm{ then }m_0l_i^{(0)}+\langle\vec m,l_i\rangle<0,\\\forall i\in\mathcal{S}}}\frac{(-q_0)^{m_0}}{m_0}\prod_{g\in G_s}q_g^{m_g}\nonumber\\
&&\quad\cdot\frac{\prod\limits_{l_i^{(0)}<0}(-1)^{m_0l_i^{(0)}+\langle\vec m,l_i\rangle}\Gamma(-m_0l_i^{(0)}-\langle\vec m,l_i\rangle)}{\prod\limits_{l_i^{(0)}\geqslant 0}\Gamma(1+m_0l_i^{(0)}+\langle\vec m,l_i\rangle)}.\label{general f}
\end{eqnarray}
Depending on the sign of the framing $f\in\mathbb{Z}$, we have two cases:
\begin{itemize}

\item For $f\geqslant 0$, we have $$l_i^{(0)}<0\Longrightarrow i=i_2,$$ and so
\begin{eqnarray}
W(q,q_0;f)&=&\sum\limits_{\substack{m_0\in\mathbb{Z}_{>0},\\\vec m=(m_g)_{g\in G_s}\in\mathbb{Z}^s_{\geqslant 0},\\(f+1)m_0>\langle\vec m,l_{i_2}\rangle}}\prod\limits_{i\in\mathcal{S}\setminus\{i_0,i_1,i_2\}}\frac{1}{\Gamma(1+\langle\vec m,l_i\rangle)}\nonumber\\
&&\qquad\cdot\frac{1}{\Gamma(1+m_0+\langle\vec m,l_{i_0}\rangle)}\cdot\frac{\Gamma((f+1)m_0-\langle\vec m,l_{i_2}\rangle)}{\Gamma(1+fm_0+\langle\vec m,l_{i_1}\rangle)}\nonumber\\
&&\qquad\cdot\frac{[(-1)^fq_0]^{m_0}}{m_0}\prod_{g\in G_s}[(-1)^{l_{i_2}^{(g)}}q_g]^{m_g}.\label{f>0}
\end{eqnarray}
Here we have used \eqref{n'=0} to extend the domain of sum.
\item For $f<0$, we have $$l_i^{(0)}<0\Longrightarrow i=i_1,$$ and so
\begin{eqnarray}
W(q,q_0;f)&=&\sum\limits_{\substack{m_0\in\mathbb{Z}_{>0},\\\vec m=(m_g)_{g\in G_s}\in\mathbb{Z}^s_{\geqslant0},\\fm_0+\langle\vec m,l_{i_1}\rangle<0}}\prod\limits_{i\in\mathcal{S}\setminus\{i_0,i_1,i_2\}}\frac{1}{\Gamma(1+\langle\vec m,l_i\rangle)}\nonumber\\
&&\qquad\cdot\frac{1}{\Gamma(1+m_0+\langle\vec m,l_{i_0}\rangle)}\cdot\frac{\Gamma(-fm_0-\langle\vec m,l_{i_1}\rangle)}{\Gamma(1-m_0(f+1)+\langle\vec m,l_{i_2}\rangle)}\nonumber\\
&&\qquad\cdot\frac{[(-1)^{f+1}q_0]^{m_0}}{m_0}\prod\limits_{g\in G_s}[(-1)^{l_{i_1}^{(g)}}q_g]^{m_g}\nonumber\\
&=&(-1)\cdot\sum\limits_{\substack{m_0\in\mathbb{Z}_{>0},\\\vec m=(m_g)_{g\in G_s}\in\mathbb{Z}^s_{\geqslant0},\\fm_0+\langle\vec m,l_{i_1}\rangle<0}}\prod\limits_{i\in\mathcal{S}\setminus\{i_0,i_1,i_2\}}\frac{1}{\Gamma(1+\langle\vec m,l_i\rangle)}\nonumber\\
&&\qquad\cdot\frac{1}{\Gamma(1+m_0+\langle\vec m,l_{i_0}\rangle)}\cdot\frac{\Gamma((f+1)m_0-\langle\vec m,l_{i_2}\rangle)}{\Gamma(1+fm_0+\langle\vec m,l_{i_1}\rangle)}\nonumber\\
&&\qquad\cdot\frac{[(-1)^fq_0]^{m_0}}{m_0}\prod_{g\in G_s}[(-1)^{l_{i_2}^{(g)}}q_g]^{m_g}.\label{f<0}
\end{eqnarray}
\end{itemize}
Here in the last equality, we have used \eqref{n'}.

\subsection{Open mirror symmetry}

Note that $L_0$ is an outer brane of $Y$. 
See \cite{LLLZ} for a mathematical definition of open Gromov-Witten invariants of $(Y,L_0)$
in algebraic geometry.

Let $\beta_l$ be the toric curve class corresponding to $l^{(g_l)}$ for $l=1,\cdots,s$. For each $\beta\in H_2(Y,\mathbb{Z})$, we can write uniquely $$\beta=d_1\beta_1+\cdots+d_s\beta_s,\quad d_1,\cdots,d_s\in\mathbb{Z},$$ and we set $$Q^{\beta}:=Q_1^{d_1}\cdots Q_s^{d_s}.$$ Then the disc potential of $(Y,L_0)$ is $$\mathcal{F}_{0,1}^{(Y,L_0)}(Q,Q_0;f)=\sum\limits_{\beta\in H_2(Y,\mathbb{Z})}\sum\limits_{w\in\mathbb{Z}_{>0}}N_{0,\beta}^w(f)Q^\beta Q_0^w,$$ which packages the disc invariants $N^w_{0,\beta}$ of $(Y,L_0)$.

The open string mirror symmetry predicts that, up to the usual sign ambiguity, we have the equality $$\mathcal{F}_{0,1}^{(Y,L_0)}(Q,Q_0;f)=W(\vec q,q_0;f),$$ via the open-closed mirror map:
\begin{eqnarray*}
Q_l&=&e^{\Omega_{g_l}(\vec q,q_0)}=q_{g_l}\cdot e^{\mathcal{C}_l(\vec q)},\quad l=1,\cdots,s,\\
Q_0&=&e^{\Omega_0(\vec q,q_0)}=q_0\cdot e^{\mathcal{C}_0(\vec q)}.
\end{eqnarray*}

\section{Orbifold Gromov-Witten invariants of $[\mathbb{C}^3/G]$}

Orbifold Gromov-Witten theory has been developed both in symplectic geometry \cite{CR}
and in algebraic geometry \cite{AGV}.
In this section we follow the formalism of \cite{CCIT} to 
determine the orbifold mirror map, $I$-function and $J$-function for $\cX = [\bC^3/G]$.
From their explicit expressions,
we verify that they come from the Picard-Fuchs system associated with 
the charge vectors for $\cX$.

\subsection{Orbifold Gromov-Witten theory of $\mathcal{B}G$}

For orbifold Gromov-Witten theory of $\mathcal{B}G$, we refer to Jarvis-Kimura \cite{JK} and Zhou \cite{Z07}.

Recall that our group $G$ is abelian. 
So for each $g\in G$, the conjugacy class containing $g$ is simply $\{g\}$, 
and the centralizer of $g$ in $G$ is $$Z_G(g):=\{h\in\mid hg=gh\}=G.$$
Geometrically, the classifying stack $\mathcal{B}G$ is a point with a trivial $G$-action. 
Its inertia stack is $\mathcal{IB}G:=\coprod\limits_{g\in G}\mathcal{B}Z_G(g),$ 
and the orbifold cohomology group is 
$$H^*_{orb}(\mathcal{B}G,\mathbb{C})=H^*(\mathcal{IB}G,\mathbb{C})
=\bigoplus\limits_{g\in G}\mathbb{C}\cdot e_g.$$ 
where $e_g$ denotes the class $1\in H^0(\mathcal{B}Z_G(g),\mathbb{C}).$ 
Then $\{e_g\}_{g\in G}$ is a basis of the orbifold cohomology group, called the class basis. 
In $H^*_{orb}(\mathcal{B}G,\mathbb{C})$, the orbifold Poincar\'{e} pairing is given by 
$$(e_g,e_h)_{orb}=\frac{1}{\vert G\vert}\delta_{g,-h},\quad\forall g,h\in G,$$ 
and the dual basis $\{e^g\}_{g\in G}$ of the class basis, 
with respect to $(\cdot,\cdot)_{orb}$, is given by
\begin{eqnarray}
e^g=\vert G\vert e_{-g},\quad\forall g\in G.\label{BG 1}
\end{eqnarray}
We shall not use the multiplicative structure of the orbifold cohomology.

Let $\overline{\mathcal{M}}_{g,m}(\mathcal{B}G)$ be the moduli space of orbifold stable maps from orbicurves of genus $g$ with $m$ marked points to $\mathcal{B}G$.
For our purpose, we are interested in two natural morphisms. The first one is the evaluation map \cite{JK}: $$ev_j:\overline{\mathcal{M}}_{g,m}(\mathcal{B}G)\rightarrow \mathcal{IB}G,$$ which is given by the local holonomy of the orbicurve around the $j$-th marked point. We can use these evaluation maps to give a stratification of $\overline{\mathcal{M}}_{g,m}(\mathcal{B}G)$: $$\overline{\mathcal{M}}_{g,m}(\mathcal{B}G)=\coprod\limits_{(h_1,\cdots,h_m)\in G^m}\overline{\mathcal{M}}_{g,m}(\mathcal{B}G;(h_1,\cdots,h_m))$$ where for each orbifold stable map in $$\overline{\mathcal{M}}_{g,m}(\mathcal{B}G;(h_1,\cdots,h_m)):=\bigcap\limits_{j=1}^m ev_j^{-1}(\mathcal{B}Z_G(h_j)),$$ we say that it is of topological type $(h_1,\cdots,h_m)$. The second natural morphism is the forgetting morphism: $$\phi:\overline{\mathcal{M}}_{g,m}(\mathcal{B}G)\rightarrow\overline{\mathcal{M}}_{g,m},$$ which is given by forgetting the orbifold structure on the orbicurves. It is known that $\phi$ is a finite morphism. Let $\Omega^G_{g}(h_1,\cdots,h_m)$ be the degree of $\phi$ restricting to $\overline{\mathcal{M}}_{g,m}(\mathcal{B}G;(h_1,\cdots,h_m))$. Then we have \cite{Z07}:
\begin{eqnarray}
\Omega^G_g(h_1,\cdots,h_m)=\left\{\begin{array}{cl}\vert G\vert^{2g-1},&h_1+\cdots+h_m=0,\\0,&h_1+\cdots+h_m\neq 0.\end{array}\right.\label{BG 2}
\end{eqnarray}
In particular, the degree of $\phi$ tells us that $\overline{\mathcal{M}}_{g,m}(\mathcal{B}G;(h_1,\cdots,h_m))$ is nonempty if and only if $h_1+\cdots+h_m=0$.

Let $\psi_j$ be the $\psi$-class on $\overline{\mathcal{M}}_{g,m}$. 
We define $\bar{\psi}_j:=\phi^*\psi_j.$ 
The correlators on $\mathcal{B}G$ are given by $$\langle e_{h_1}\psi_{k_1},\cdots,e_{h_m}\psi^{k_m}\rangle_{g,m}^{\mathcal{B}G}:=\int_{\overline{\mathcal{M}}_{g,m}(\mathcal{B}G)}\prod\limits_{j=1}^mev_j^*(e_{h_j})\bar{\psi}_j^{k_j}.$$ Then we have \cite{JK}:
\begin{eqnarray}
\langle e_{h_1}\psi_{k_1},\cdots,e_{h_m}\psi^{k_m}\rangle_{g,m}^{\mathcal{B}G}=\Omega^G_g(h_1,\cdots,h_m)\cdot\int_{\overline{\mathcal{M}}_{g,m}}\prod\limits_{j=1}^m\psi_j^{k_j}.\label{BG 3}
\end{eqnarray}
In particular, from \eqref{BG 2},\eqref{BG 3}, we have
\begin{eqnarray}
&&\langle e_{h_1},\cdots,e_{h_m},e_h\psi^k\rangle_{0,m+1}^{\mathcal{B}G}\nonumber\\
&=&\Omega^G_0(h_1,\cdots,h_m,h)\cdot\int_{\overline{\mathcal{M}}_{0,m+1}}\psi_{m+1}^k\nonumber\\
&=&\frac{1}{\vert G\vert}\cdot\delta_{h_1\cdots+h_m,-h}\cdot\delta_{k,m-2}.\label{BG 4}
\end{eqnarray}

\subsection{$J$-function of $\mathcal{B}G$}

To apply the formalism of \cite{CCIT} to compute the equivariant $J$-function of $\mathcal{X}=[\mathbb{C}^3/G]$, we need to determine the contribution to the $J$-function of $\mathcal{B}G$, which is defined by: $$J_{\mathcal{B}G}(x;z)=z\cdot e_0+x+\sum\limits_{h\in G}\sum\limits_{m=2}^\infty\frac{1}{m!}\langle \underbrace{x,\cdots,x,}_{m}\frac{e_h}{z-\psi}\rangle_{0,m+1}^{\mathcal{B}G}\cdot e^h,$$ with $x=\sum\limits_{h\in G}x_he_h\in H^*_{orb}(\mathcal{B}G,\mathbb{C})$, from each topological type. Using \eqref{BG 1},\eqref{BG 4}, we have
\begin{eqnarray}
&&J_{\mathcal{B}G}(x;z)\nonumber\\
&=&z\cdot e_0+\sum\limits_{h\in G}x_he_h+\sum\limits_{m=2}^\infty\sum\limits_{(h_1,\cdots,h_m)\in G^m}\frac{x_{h_1}\cdots x_{h_m}}{m!}\nonumber\\
&&\quad\cdot\sum\limits_{g\in G}\sum\limits_{k=0}^\infty\frac{1}{z^{k+1}}\langle e_{h_1},\cdots,e_{h_m},e_h\psi^{k}\rangle_{0,m+1}^{\mathcal{B}G}\cdot\vert G\vert e_{-g}\nonumber\\
&=&z\cdot e_0+\sum\limits_{h\in G}x_he_h+\sum\limits_{m=2}^\infty\sum\limits_{(h_1,\cdots,h_m)\in G^m}\frac{1}{z^{m-1}}\frac{x_{h_1}\cdots x_{h_m}}{m!}e_{h_1+\cdots+h_m}.\nonumber
\end{eqnarray}
So we see that for $m\geqslant 2$, a topological type $(h_1,\cdots,h_m,h)$ contributes to $J_{\mathcal{B}G}$ if and only if $h=-h_1-\cdots-h_m$. Define
\begin{eqnarray}
J_{\emptyset}(x;z)&:=&z\cdot e_0,\nonumber\\
J_{(h)}(x;z)&:=&x_he_h,\nonumber\\
J_{(h_1,\cdots,h_m)}(x;z)&:=&\frac{1}{z^{m-1}}\frac{x_{h_1}\cdots x_{h_m}}{m!}e_{h_1+\cdots+h_m}.\nonumber
\end{eqnarray}
Then we have
\begin{eqnarray}
J_{\mathcal{B}G}(x;z)=\sum\limits_{m=0}^\infty\sum\limits_{(h_1,\cdots,h_m)\in G^m}J_{(h_1,\cdots,h_m)}(x;z).\label{BG 5}
\end{eqnarray}

\subsection{Equivariant $J$-function of $[\mathbb{C}^3/G]$}

In this subsection, we follow the procedure of \cite{CCIT} to compute 
the equivariant $J$-function of $\mathcal{X}=[\mathbb{C}^3/G]$.

The inertia stack of $\mathcal{X}$ is $\mathcal{IX}:=\coprod\limits_{g\in G}(\mathbb{C}^3)^{(g)}/G,$ 
where $(\mathbb{C}^3)^{(g)}$ is the fixed point set of $g$. 
So the orbifold cohomology group of $\mathcal{X}$ is 
$$H^*_{orb}(\mathcal{X},\mathbb{C})=H^*(\mathcal{IX},\mathbb{C})
=\bigoplus\limits_{g\in G}\mathbb{C}\cdot e_g,$$ 
where by abuse of the notations, we denote by $e_g$  
the cohomology class class $1\in H^0((\mathbb{C}^3)^{(g)}/G,\mathbb{C})$. 
This is reasonable since $\mathcal{X}$ is an orbibundle over $\mathcal{B}G$. We also have another induced bundle $\widetilde{\mathcal{X}}$ over $\mathcal{IB}G$, whose ``fiber'' over $Z_G(g)$ is $(\mathbb{C}^3)^{(g)}$.

Let the $1$-torus $T=S^1$ act on $\mathbb{C}^3$ diagonally with weights $(\lambda_0,\lambda_1,\lambda_2)$. It is not difficult to see that the torus action descends to $\mathcal{X}$. The $T$-equivariant orbifold cohomology of $\mathcal{X}$ is $$H^*_{orb,T}(\mathcal{X})=\bigoplus\limits_{g\in G}\mathbb{C}(\lambda_0,\lambda_1,\lambda_2)\cdot e_g,$$ with the coefficient ring $\mathbb{C}(\lambda_0,\lambda_1,\lambda_2)$. The $T$-equivariant orbifold Poincar\'{e} pairing in $H^*_{orb,T}(\mathcal{X})$ is given by: $$(e_g,e_h)_{orb,T}=\frac{\delta_{g,-h}}{\vert G\vert}\mathbf{e}_T^{-1}\Big(\widetilde{\mathcal{X}}\mid_{Z_G(g)}\Big),$$ where the $T$-equivariant Euler class can be written out explicitly: $$\mathbf{e}_T\Big(\widetilde{\mathcal{X}}\mid_{Z_G(g)}\Big)=\mathbf{e}_T\Big((\mathbb{C}^3)^{(g)}\Big)=\prod\limits_{j=0}^2\lambda_j^{\delta_{0,F_g^{(j)}}}.$$ The dual basis $\{e^\vee_g\}_{g\in G}$ of $\{e_g\}_{g\in G}$ in $H^*_{orb,T}(\mathcal{X})$, with respect to $(\cdot,\cdot)_{orb,T}$, is given by $$e^\vee_g=\mathbf{e}_T\Big((\mathbb{C}^3)^{(g)}\Big)\cdot\vert G\vert e_{-g}.$$

The fixed locus of the torus action on $\mathcal{X}$ is a copy of $\mathcal{B}G$. Let $\overline{\mathcal{M}}_{g,m}(\mathcal{X})$ be the moduli space of orbifold stable maps from orbicurves of genus $g$ with $m$ marked points to the orbifold $\mathcal{X}$. The torus action on $\mathcal{X}$ induces a natural torus action on $\overline{\mathcal{M}}_{g,m}(\mathcal{X})$, and the fixed locus can be identified with $\overline{\mathcal{M}}_{g,m}(\mathcal{B}G)$.

Note that $\mathcal{X}$ is an orbibundle over $\mathcal{B}G$. Consider the universal family:
\[
\begin{CD}
\mathcal{C}_{g,m} @>ev>> \mathcal{B}G\\
@V \pi VV  \\
\overline{\mathcal{M}}_{g,m}(\mathcal{B}G)
\end{CD}
\]
Then we have a virtual bundle $\mathcal{N}_{g,m}$ given by:  $$\mathcal{N}_{g,m}=\pi_!ev^*\mathcal{X}\in K^0(\overline{\mathcal{M}}_{g,m}(\mathcal{B}G)).$$ Here $\pi_!$ is the $K$-theoretic push-forward and the ``fiber'' of the virtual bundle $\mathcal{N}_{g,m}$ at the point represented by $\Sigma\xrightarrow{f}\mathcal{B}G$ is $$H^0(\Sigma,f^*\mathcal{X})-H^1(\Sigma,f^*\mathcal{X}).$$ Using virtual localization\cite{GP}, the $T$-equivariant correlators on $\mathcal{X}$ are given by: $$\langle e_{h_1}\psi^{k_1},\cdots,e_{h_m}\psi^{k_m}\rangle_{g,m}^\mathcal{X}:=\int_{\overline{\mathcal{M}}_{g,m}(\mathcal{B}G)}\frac{\prod\limits_{j=1}^mev_j^*(e_{h_j})\bar{\psi}_j^{k_j}}{\mathbf{e}_T(\mathcal{N}_{g,m})}.$$

The genus $0$, $T$-equivariant nondescendant Gromov-Witten invariants are packaged into a generating function: 
$$\mathcal{F}_0^{\mathcal{X}}(X;\lambda_0,\lambda_1,\lambda_2)
:=\sum\limits_{m=3}^\infty\frac{1}{m!}\langle\underbrace{X,\cdots,X}_{m}\rangle^{\mathcal{X}}_{0,m},$$ with $X=\sum\limits_{g\in  G}X_ge_g\in H^*_{orb}(\mathcal{X},\mathbb{C})$. $\mathcal{F}_0^{\mathcal{X}}$ is called the Gromov-Witten potential, and is encoded in the $T$-equivariant $J$-function $$J_\mathcal{X}(X;z;\lambda_0,\lambda_1,\lambda_2):=z\cdot e_0+X+\sum\limits_{g\in G}\sum\limits_{m=2}^\infty\frac{1}{m!}\langle\underbrace{X,\cdots,X}_{m},\frac{e_g}{z-\psi}\rangle_{0,m+1}^{\mathcal{X}}\cdot e_g^{\vee},$$ 
as follows: $$J_\mathcal{X}(X;z;\lambda_0,\lambda_1,\lambda_2)
=z\cdot e_0+X+\frac{1}{z}\sum\limits_{g\in G}
\frac{\partial\mathcal{F}_0^{\mathcal{X}}}{\partial X_g}(X;\lambda_0,\lambda_1,\lambda_2)\cdot e_g^{\vee}+O(\frac{1}{z^2}).$$ 
Here $\lambda_0,\lambda_1,\lambda_2$ are considered to be parameters, 
which will take special values \eqref{weight} to compute the disc invariants of $\mathcal{X}$.

We are interested in ``small'' $J_{\mathcal{X}}$, 
i.e. restricting to small variables 
$X=\sum\limits_{g\in G_s}X_ge_g\in H^*_{orb}({\mathcal{X},\mathbb{C}})$. 
We now apply the formalism of \cite{CCIT} to compute the ``small'' $J_\mathcal{X}$ as follows. 
First, by modifying $J_{\mathcal{B}G}$ \eqref{BG 5}, 
we obtain the $T$-equivariant $I$-function of $\mathcal{X}$: 
$$I_\mathcal{X}(x;z;\lambda_0,\lambda_1,\lambda_2)
:=\sum\limits_{m=0}^\infty\sum\limits_{(h_1,\cdots,h_m)\in G^m}
J_{(h_1,\cdots,h_m)}(x;z)\cdot M_{(h_1,\cdots,h_m)}(z),$$ 
where $x=\sum\limits_{g\in G}x_ge_g\in H^*_{orb}(\mathcal{X},\mathbb{C})$, 
and the modification factors are given by:

\begin{eqnarray*}
M_\emptyset(z)&:=&1\\
M_{(h_1,\cdots,h_m)}(z)&:=&(-z)^{\sum\limits_{j=0}^2\lfloor F^{(j)}_{h_1}+\cdots+F^{(j)}_{h_m}
\rfloor}\cdot\prod\limits_{j=0}^2\frac{\Gamma(F^{(j)}_{h_1}+\cdots+F^{(j)}_{h_m}-\frac{\lambda_j}{z})}
{\Gamma(\langle F^{(j)}_{h_1}+\cdots+F^{(j)}_{h_m}\rangle-\frac{\lambda_j}{z})}.
\end{eqnarray*}
The modification factors are computed directly as in \cite{CCIT}, 
using the data from the orbibundle $\mathcal{X}\rightarrow\mathcal{B}G$ and the invertible multiplicative class $\mathbf{e}^{-1}_T(\cdot)$. Here we have used \eqref{gamma} to write the modification factors in terms of $\Gamma$ function. Then direct calculation gives:
\begin{eqnarray*}
&&I_\mathcal{X}(x;z;\lambda_0,\lambda_1,\lambda_2)\\
&=&z\sum\limits_{k_g\in\mathbb{Z}_{\geqslant 0},\forall g\in G}(-z)^{\sum\limits_{j=0}^2\lfloor\sum\limits_{g\in G}k_gF_g^{(j)}\rfloor-\sum\limits_{g\in G}k_g}\\
&&\quad\cdot\prod\limits_{j=0}^2\frac{\Gamma\Big(\sum\limits_{g\in G}k_gF_g^{(j)}-\frac{\lambda_j}{z}\Big)}{\Gamma\Big(\langle\sum\limits_{g\in G}k_gF_g^{(j)}\rangle-\frac{\lambda_j}{z}\Big)}\cdot\prod\limits_{g\in G}\frac{(-x_g)^{k_g}}{k_g!}\cdot e_{\sum\limits_{g\in G}k_gg}.
\end{eqnarray*}
From \eqref{age},\eqref{adno}, we have:
\begin{eqnarray*}
&&\sum\limits_{j=0}^2\lfloor\sum\limits_{g\in G}k_gF_g^{(j)}\rfloor-\sum\limits_{g\in G}k_g\\
&=&\sum\limits_{j=0}^2\sum\limits_{g\in G}k_gF_g^{(j)}-\sum\limits_{g\in G}k_g-\sum\limits_{j=0}^2\langle\sum\limits_{g\in G}k_gF_g^{(j)}\rangle\\
&=&\sum\limits_{g\in G}k_g(\sum\limits_{j=0}^2F_g^{(j)}-1)-\sum\limits_{j=0}^2F^{(j)}_{\sum\limits_{g\in G}k_gg}\\
&=&\sum\limits_{g\in G}k_g(a(g)-1)-a(\sum\limits_{g\in G}k_gg).
\end{eqnarray*}
Restricting to small variables $x=\sum\limits_{g\in G_s}x_ge_g$, we obtain:
\begin{eqnarray}
&&I_\mathcal{X}\big(x;z;\lambda_0,\lambda_1,\lambda_2\big)\nonumber\\
&=&z\sum\limits_{h\in G}(-z)^{-a(h)}\sum\limits_{\substack{k_g\in\mathbb{Z}_{\geqslant 0},\forall g\in G_s\\\sum\limits_{g\in G_s}k_gg=h}}\prod\limits_{j=0}^2\frac{\Gamma\big(\sum\limits_{g\in G_s}k_gF_g^{(j)}-\frac{\lambda_j}{z}\big)}{\Gamma\big(F_h^{(j)}-\frac{\lambda_j}{z}\big)}\nonumber\\
&&\quad\quad\cdot\prod\limits_{g\in G_s}\frac{(-x_g)^{k_g}}{k_g!}\cdot e_h.\label{21}
\end{eqnarray}
Furthermore, the ``small'' $I_\mathcal{X}$ can be written as:
\begin{eqnarray*}
&&I_\mathcal{X}\big(x;z;\lambda_0,\lambda_1,\lambda_2\big)\\
&=&z\cdot e_0-\sum\limits_{a(h)=1}\sum\limits_{\substack{k_g\in\mathbb{Z}_{\geqslant 0},\forall g_\in G_s\\\sum\limits_{g\in G_s}k_gg=h}}\prod\limits_{j=0}^2\frac{\Gamma\big(\sum\limits_{g\in G_s}k_gF_g^{(j)}\big)}{\Gamma\big(F_h^{(j)}\big)}\cdot\prod\limits_{g\in G_s}\frac{(-x_g)^{k_g}}{k_g!}\cdot e_h\\
&&\quad+O(\frac{1}{z}).
\end{eqnarray*}
Finally, the ``small'' $J_\mathcal{X}$ coincides with the ``small'' $I_\mathcal{X}$. More concretely, 
we have

\begin{prop}
Let $$X=\sum\limits_{g\in G_s}X_ge_g,\quad x=\sum\limits_{g\in G_s}x_ge_g,$$ 
then we have:
\begin{eqnarray}
J_\mathcal{X}\big(X;z;\lambda_0,\lambda_1,\lambda_2\big)=I_\mathcal{X}\big(x;z;\lambda_0,\lambda_1,\lambda_2\big),\label{J=I}
\end{eqnarray}
via the orbifold (closed) mirror map $X=X(x)$ given by
\begin{eqnarray}
&& X_h=-\sum\limits_{\substack{k_g\in\mathbb{Z}_{\geqslant 0},\forall g\in G_s\\\sum\limits_{g\in G_s}k_gg=h}}\prod\limits_{j=0}^2\frac{\Gamma\big(\sum\limits_{g\in G_s}k_gF_g^{(j)}\big)}{\Gamma\big(F_h^{(j)}\big)}\cdot\prod\limits_{g\in G_s}\frac{(-x_g)^{k_g}}{k_g!},\quad\forall h\in G_s.\label{orbclose}
\end{eqnarray}
\end{prop}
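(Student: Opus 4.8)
The plan is to deduce the identity \eqref{J=I} from the general structure of Givental's symplectic formalism together with the mirror theorem of \cite{CCIT}, rather than by a direct comparison of Gromov--Witten correlators. Recall that the genus-zero theory of $\mathcal{X}$ is encoded in Givental's Lagrangian cone $\mathcal{L}_\mathcal{X}$ inside the symplectic loop space, and that the $T$-equivariant $J$-function has the following characterizing property: the family $X\mapsto J_\mathcal{X}(X;z;\lambda_0,\lambda_1,\lambda_2)$ is exactly the set of points of $\mathcal{L}_\mathcal{X}$ of the normalized shape $z\cdot e_0+X+O(1/z)$, and each such point recovers its parameter $X$ uniquely as its $z^0$-coefficient. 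Thus to prove \eqref{J=I} it suffices to show (i) that $I_\mathcal{X}$ lies on $\mathcal{L}_\mathcal{X}$, and (ii) that the asymptotic expansion of the small $I_\mathcal{X}$ in powers of $z$ has leading term $z\cdot e_0$ and constant term exactly $X(x)$ as given by \eqref{orbclose}.

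For step (i), I would invoke the mirror theorem of \cite{CCIT} for toric Deligne--Mumford stacks, applied to the toric orbifold $\mathcal{X}=[\mathbb{C}^3/G]$. The point is that the $I$-function built above by modifying $J_{\mathcal{B}G}$ with the factors $M_{(h_1,\ldots,h_m)}(z)$ is precisely the hypergeometric $I$-function attached to the GKZ data of $\mathcal{X}$, so that \cite{CCIT} guarantees $I_\mathcal{X}(x;z;\lambda_0,\lambda_1,\lambda_2)\in\mathcal{L}_\mathcal{X}$ for all $x$ in the small parameter space. The work here is bookkeeping: matching the modification factors, written in terms of the fermionic shifts $F_g^{(j)}$ and the equivariant parameters $\lambda_j$, with the combinatorial $I$-function of \cite{CCIT}; no new geometric input is needed once this dictionary is in place.

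For step (ii), I would read off the expansion directly from \eqref{21}. Each summand carries the prefactor $z\cdot(-z)^{-a(h)}$ together with a product over $j=0,1,2$ of ratios $\Gamma\big(A_j-\tfrac{\lambda_j}{z}\big)/\Gamma\big(B_j-\tfrac{\lambda_j}{z}\big)$ with $A_j-B_j=\lfloor\sum_g k_g F_g^{(j)}\rfloor\in\mathbb{Z}_{\geqslant 0}$; each such ratio is a polynomial in $\tfrac{\lambda_j}{z}$ whose value at $\tfrac{1}{z}=0$ is $\Gamma(A_j)/\Gamma(B_j)$. Tracking the resulting power of $z$ shows that the $z^1$-coefficient arises only from $h=0$ with all $k_g=0$, contributing exactly $z\cdot e_0$, while the $z^0$-coefficient collects precisely the $h$ with $a(h)=1$, producing $\sum_{a(h)=1}X_h e_h$ with $X_h$ as in \eqref{orbclose}. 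One must also check that the possible unit contribution at order $z^0$ from $h=0$ with $k\neq 0$ in fact begins at order $\tfrac{1}{z}$, so that the constant term carries no $e_0$; this is the computation already displayed just before the statement.

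Combining (i) and (ii) finishes the argument: $I_\mathcal{X}(x)$ is a point of $\mathcal{L}_\mathcal{X}$ of the normalized shape $z\cdot e_0+X(x)+O(1/z)$, hence equals $J_\mathcal{X}(X(x))$ by the uniqueness characterization of the $J$-function. I expect step (i) to be the main obstacle: strictly speaking the mirror theorem is the substantive input, and care is required to confirm that the $I$-function defined here through the CCIT modification procedure is the one to which their theorem applies, in particular that the equivariant and orbifold conventions agree. Once this identification is granted, step (ii) and the conclusion are routine.
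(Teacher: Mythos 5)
Your argument is correct and is essentially the route the paper takes: the paper states the proposition without a formal proof, but its derivation is exactly your scheme --- invoke \cite{CCIT} to place the hypergeometric modification of $J_{\mathcal{B}G}$ on the Lagrangian cone of $\mathcal{X}$, then read the mirror map off the $z$-expansion of the small $I$-function, including the check that the $h=0$, $k\neq 0$ terms contribute only at order $1/z$. One small correction to your step (i): the input from \cite{CCIT} is not the toric Deligne--Mumford mirror theorem but the quantum Riemann--Roch/Lefschetz statement for twisted invariants, applied to the orbibundle $\mathcal{X}\to\mathcal{B}G$ with the invertible multiplicative class $\mathbf{e}_T^{-1}$; since the paper defines the equivariant correlators of $\mathcal{X}$ by localization precisely as the $\mathbf{e}_T^{-1}$-twisted correlators of $\mathcal{B}G$, that theorem applies verbatim and the ``dictionary'' you flag as the main obstacle is built into the definition.
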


\subsection{GLSM for $\mathcal{X}$}

In this subsection, we find the charge vectors for $\mathcal{X}$.

First we take the non-equivariant limit on the ``small'' $I_\mathcal{X}$:
\begin{eqnarray*}
&&\lim_{(\lambda_0,\lambda_1,\lambda_2)\rightarrow 0}I_\mathcal{X}\big(x;z;\lambda_0,\lambda_1,\lambda_2\big)\\
&=&z\sigma_0(x)\cdot e_0-\sum\limits_{a(h)=1}\sigma_h(x)\cdot e_h+\frac{1}{z}\sum\limits_{a(h)=2}\sigma_h(x)\cdot e_h
\end{eqnarray*}
where
\begin{eqnarray*}
&&\sigma_h(x)\\
&=&\sum\limits_{\substack{k_g\in\mathbb{Z}_{\geqslant 0},\forall g\in G_s\\\sum\limits_{g\in G_s}k_gg=h}}\prod\limits_{j=0}^2\frac{\Gamma\big(\sum\limits_{g\in G_s}k_gF_g^{(j)}\big)}{\Gamma\big(\langle\sum\limits_{g\in G_s}k_gF_g^{(j)}\rangle\big)}\cdot\prod\limits_{g\in G_s}\frac{(-x_g)^{k_g}}{k_g!}\\
&=:&\sum\limits_{\substack{k_g\in\mathbb{Z}_{\geqslant 0},\forall g\in G_s\\\sum\limits_{g\in G_s}k_gg=h}}C(k_{g_1},\cdots,k_{g_s})\cdot\prod_{g\in G_s}x_g^{k_g},\quad h\in G
\end{eqnarray*}

Let $n_g$ be the order of $g$. By direct calculation, we have
\begin{eqnarray*}
&&\bigg(k_{g_l}+n_{g_l}\bigg)_{n_{g_l}}\cdot C(k_{g_1},\cdots,k_{g_l}+n_{g_l},\cdots,k_{g_s})\\
&=&\prod_{j=0}^2\bigg(-\sum\limits_{g\in G_s}k_gF_g^{(j)}\bigg)_{n_{g_l}F_{g_l}^{(j)}}\cdot C(k_{g_1},\cdots,k_{g_l},\cdots,k_{g_s}),\quad l=1,\cdots,s.
\end{eqnarray*}
So the recursion relations naturally give the following operators: $$\widetilde{\mathcal{D}}_l=\bigg(\Theta_{x_{g_l}}\bigg)_{n_{g_l}}-x_{g_l}^{n_{g_l}}\prod_{j=0}^2\bigg(-\sum\limits_{g\in G_s}F_g^{(j)}\Theta_{x_{g}}\bigg)_{n_{g_l}F_{g_l}^{(j)}},\quad l=1,\cdots,s.$$
Observe that if we set $y_{g_l}=x_{g_l}^{n_{g_l}}$, then $$\widetilde{\mathcal{D}}_l=\bigg(n_{g_l}\Theta_{y_{g_l}}\bigg)_{n_{g_l}}-y_{g_l}\prod_{j=0}^2\bigg(-\sum\limits_{g\in G_s}n_gF_g^{(j)}\Theta_{y_g}\bigg)_{n_{g_l}F_{g_l}^{(j)}},\quad l=1,\cdots,s.$$
So naturally, the charge vectors for $\mathcal{X}$ are $$\big(-n_{g_l}F_{g_l}^{(0)},-n_{g_l}F_{g_l}^{(1)},-n_{g_l}F_{g_l}^{(2)},0,\cdots,n_{g_l},\cdots,0\big),\quad l=1,\cdots,s,$$
which form a basis of $\mathbb{L}\otimes_{\mathbb{Z}}\mathbb{Q}$. So we have
\begin{eqnarray*}
&&y_g=\prod\limits_{j=0}^2z_j^{-n_gF_g^{(j)}}\cdot z_g^{n_g},\quad g\in G_s,
\end{eqnarray*}
or equivalently,
\begin{eqnarray}
&&x_g=\prod\limits_{j=0}^2z_j^{-F_g^{(j)}}\cdot z_g,\quad g\in G_s.\label{23}
\end{eqnarray}

\section{Orbifold disc potential for $[\mathbb{C}^3/G]$}

In this section we compute the orbifold disc invariants for $[\mathbb{C}^3/G]$
with boundary in some special D-branes using results from \cite{BC}.
Our main result is \eqref{eqn:OrbifoldDisc}.

\subsection{A family of D-branes in $[\mathbb{C}^3/G]$}

We first view $\mathbb{C}^3$ as an open chart of $\mathcal{O}_{\mathbb{P}^1}(-1)\oplus\mathcal{O}_{\mathbb{P}^1}(-1)$ as follows. 
Recall that the resolved conifold can be given local coordinates $(z_0,z_1,z_2)$ at $0\in\mathbb{P}^1$ and $(z_0',z_1',z_2')$ at $\infty\in\mathbb{P}^1$, 
which are related by 
$$z_0'=z_0^{-1},z_1'=z_0z_1,z_2'=z_0z_2.$$ 
Then  $\mathbb{C}^3$ can be naturally identified as 
the open chart of the resolved conifold at $0\in\mathbb{P}^1$, 
and $z_0$ is the coordinate of $\mathbb{P}^1$ near $0\in\mathbb{P}^1$.

As in \cite{Z06}, for each $A>0$,
define an anti-holomorphic involution on $\mathbb{P}^1$ by 
$$\sigma(z_0)=A\cdot\bar{z}_0^{-1}.$$ 
The following anti-holomorphic involution on 
$\mathcal{O}_{\mathbb{P}^1}(-1)\oplus\mathcal{O}_{\mathbb{P}^1}(-1)$ 
covers $\sigma$:  
$$\bar{\sigma}(z_0,z_1,z_2)
=(A\bar{z}_0^{-1},A^{-\frac{1}{2}}\bar{z}_0\bar{z}_2,A^{-\frac{1}{2}}\bar{z}_0\bar{z}_1).$$ 
The fixed locus of $\bar{\sigma}$ is a Lagrangian with topology $S^1\times\mathbb{C}$ 
given by 
$$L=\{(A^{\frac{1}{2}}e^{\sqrt{-1}\theta},u,e^{-\sqrt{-1}\theta}\cdot\bar{u})\mid \theta\in\mathbb{R},u\in\mathbb{C}\}\subset\mathbb{C}^3.$$

The action of $G$ can be extended to  $\mathcal{O}_{\mathbb{P}^1}(-1)\oplus\mathcal{O}_{\mathbb{P}^1}(-1)$ by 
$$g.(z_0',z_1',z_2'):=\big(z_0\cdot e^{-2\pi\sqrt{-1}F_g^{(0)}} ,z_1'\cdot e^{2\pi\sqrt{-1}(F_g^{(0)}+F_g^{(1)})},z_2'\cdot e^{2\pi\sqrt{-1}(F_g^{(0)}+F_g^{(2)})}\big).$$ 
One can check that the $G$-action commutes with the $\bar{\sigma}$-action. 
Hence the orbifold $[\mathcal{O}_{\mathbb{P}^1}(-1)\oplus\mathcal{O}_{\mathbb{P}^1}(-1)/G]$ 
admits a natural anti-holomorphic involution coming from $\bar{\sigma}$, 
whose fixed locus is a Lagrangian $\mathcal{L}_0=[L/G]\subset[\mathbb{C}^3/G]$.

\subsection{Torus action on $(\mathcal{X},\mathcal{L}_0)$}

As in \cite{BC}, let the $1$-torus $T=S^1$ act on $\mathbb{C}^3$ with weights
\begin{eqnarray}
(\lambda_0,\lambda_1,\lambda_2):=(\frac{1}{\vert G/G_0\vert},-a,a-\frac{1}{\vert G/G_0\vert}),\label{weight}
\end{eqnarray}
where $a\in\frac{1}{|G/G_0|}\cdot\mathbb Z$ is a free parameter, playing the role of framing.  
Then the torus action can be extended to the resolved conifold, 
and descend to $\mathcal{X}=[\mathbb{C}^3/G]$ 
since the $T$-action commutes with the $G$-action. 
In particular, $T$ acts on the coarse moduli space of the ``$z_0$-axis'' in $\mathcal{X}$ with weight $1$. 
Moreover, we can check that the $T$-action commutes with the $\bar{\sigma}$-action, 
and hence the $T$-action preserves $\mathcal{L}_0$ in $\mathcal{X}$.

\subsection{Orbifold open Gromov-Witten invariants for $(\mathcal{X},\mathcal{L}_0)$}

Using formal localization, the orbifold open Gromov-Witten invariant of 
$(\mathcal{X},\mathcal{L}_0)$ for bordered orbicurves of genus $g$ with 
$h$ boundary components, together with winding numbers $d_1,\cdots,d_h$ and 
insertions $e_{g'_1}$, $\cdots$, $e_{g'_m}$, is defined by \cite{BC}:
\begin{eqnarray}
&&\langle e_{g'_1},\cdots,e_{g'_m}\rangle_{g,m;d_1,\cdots,d_h}^{(\mathcal{X},\mathcal{L}_0)}\nonumber\\
&:=&\sum\limits_{(k_1,\cdots,k_h)\in G^h}\langle e_{g'_1},\cdots,e_{g'_m},\frac{D_{k_1}(d_1,a)\cdot e^\vee_{k_1}}{\frac{1}{d_1}-\psi},\cdots,\frac{D_{k_h}(d_h,a)\cdot e^\vee_{k_h}}{\frac{1}{d_h}-\psi}\rangle_{g,m+h}^\mathcal{X}.\nonumber
\end{eqnarray}
 Here the ``disc function'' $D_k(d,a)$ is defined as follows. For $d\in\mathbb{Z}_{>0}$ and $k\in G$, if
\begin{eqnarray}
F_k^{(0)}=\langle\frac{\vert G_0\vert}{\vert G\vert}d\rangle,\nonumber
\end{eqnarray}
then we define
\begin{eqnarray}
&&D_k(d,a)\nonumber\\
&:=&\frac{d^{-a(k)}}{\vert G_0\vert}\cdot\frac{1}{\Gamma\big(1+\lambda_0d-F_k^{(0)}\big)}\cdot\frac{\Gamma\big(-\lambda_2d+F_k^{(2)}\big)}{\Gamma\big(1+\lambda_1d-F_k^{(1)}\big)}\nonumber\\
&=&\frac{d^{-a(k)}}{\vert G_0\vert}\cdot\frac{1}{\Gamma\big(1+\frac{\vert G_0\vert}{\vert G\vert}d-F_k^{(0)}\big)}\cdot\frac{\Gamma\big(-(a-\frac{\vert G_0\vert}{\vert G\vert})d+F_k^{(2)}\big)}{\Gamma\big(1-ad-F_k^{(1)}\big)}\label{disc function}
\end{eqnarray}
otherwise we define $D_k(d,a)=0$.

The type $(g,h)$ orbifold open Gromov-Witten potential of $(\mathcal{X},\mathcal{L}_0)$ is defined by:
\begin{eqnarray}
&&\mathcal{F}_{g,h}^{(\mathcal{X},\mathcal{L}_0)}(X;w_1,\cdots,w_h;a)\nonumber\\
&=&\sum\limits_{\substack{m\\d_1,\cdots,d_h\geqslant 1}}\frac{1}{m!}\langle\underbrace{X,\cdots,X}_{m}\rangle_{g,m;d_1,\cdots,d_h}^{(\mathcal{X},\mathcal{L}_0)}\cdot\prod\limits_{j=1}^hw_j^{d_j},\label{prepotential}
\end{eqnarray}
with $X=\sum\limits_{g\in G}X_ge_g\in H^{*}_{orb}(\mathcal{X},\mathbb{C})$. Note that our definition \eqref{prepotential} differs from Definition $2$ in \cite{BC} by the factorial under $w_j^{d_j}$.

The orbifold disc potential $\mathcal{F}_{0,1}^{(\mathcal{X},\mathcal{L}_0)}$ can be expressed in terms of $J_\mathcal{X}$ and the disc functions. First note that, since $\{e_g\}_{g\in G}$ and $\{e^\vee_g\}_{g\in G}$ are dual to each other with respect to $(\cdot,\cdot)_{orb,T}$, it follows that $$J_\mathcal{X}(X;z;\lambda_0,\lambda_1,\lambda_2)=z\cdot e_0+X+\sum\limits_{g\in G}\sum\limits_{m=2}^\infty\frac{1}{m!}\langle\underbrace{X,\cdots,X}_{m},\frac{e^\vee_g}{z-\psi}\rangle_{0,m+1}^{\mathcal{X}}\cdot e_g.$$ So we have:
\begin{eqnarray*}
&&\mathcal{F}_{0,1}^{(\mathcal{X},\mathcal{L}_0)}(X;X_0;a)\\
&=&\sum\limits_{m=2}^\infty\sum\limits_{d=1}^\infty\frac{1}{m!}\langle\underbrace{X,\cdots,X}_{m}\rangle_{0,m;d}^{(\mathcal{X},\mathcal{L}_0)}\cdot X_0^d\\
&=&\sum\limits_{d=1}^\infty\sum\limits_{m=2}^\infty\sum\limits_{k\in G}\frac{1}{m!}\langle\underbrace{X,\cdots,X}_m,\frac{D_k(d,a)\cdot e^\vee_k}{\frac{1}{d}-\psi}\rangle^{\mathcal{X}}_{0,m+1}\cdot X_0^d\\
&=&\sum\limits_{d=1}^\infty\Big(J_\mathcal{X}(X;\frac{1}{d};\lambda_0,\lambda_1,\lambda_2)-\frac{1}{d}\cdot e_0-X,\sum\limits_{k\in G}D_k(d,a)\cdot e^\vee_k\Big)_{orb,T}\cdot X_0^d
\end{eqnarray*}
Therefore, it is natural to take care of the unstable terms as follows:
\begin{eqnarray*}
\langle\rangle_{0,0;d}^{(\mathcal{X},\mathcal{L}_0)}&:=&\frac{1}{d}D_0(d,a),\quad d\geqslant 1,\\
\langle e_g\rangle_{0,1;d}^{(\mathcal{X},\mathcal{L}_0)}&:=&D_g(d,a),\quad g\in G,d\geqslant 1.
\end{eqnarray*}
So including the unstable terms, we have
\begin{eqnarray} \label{eqn:OrbifoldDisc}
&&\mathcal{F}^{(\mathcal{X},\mathcal{L}_0)}_{0,1}(X,X_0;a)\nonumber\\
&=&\sum\limits_{d=1}^\infty\bigg(J_\mathcal{X}(X;\frac{1}{d};\lambda_0,\lambda_1,\lambda_2),\sum\limits_{h\in G}D_h(d,a)\cdot e_h^\vee\bigg)_{orb,T}X_0^d.\label{discpotential}
\end{eqnarray}
We will only consider the ``small'' disc potential, i.e. 
we will take  $X=\sum\limits_{g\in G_s}X_ge_g\in H^*_{orb}(\mathcal{X},\mathbb{C})$.

\subsection{GLSM for $(\mathcal{X},\mathcal{L}_0)$}

We propose that the orbifold open mirror map is 
\begin{eqnarray}
X_0=(-1)^{\lambda_2+1}x_0.\label{orbopen}
\end{eqnarray}
Then together with the orbifold (closed) mirror map $(67)$, we have:
\begin{eqnarray}
&&\mathcal F_{0,1}^{(\mathcal X,\mathcal L_0)}(X(x),X_0(x_0);a)\nonumber\\
&=&\frac{1}{\vert G_0\vert}\sum\limits_{\substack{d\in\mathbb{Z}_{\geqslant 1},k_g\in\mathbb{Z}_{\geqslant 1},\forall g\in G_s\\\langle\sum\limits_{g\in G_s}k_gF_g^{(0)}\rangle=\langle\frac{\vert G_0\vert}{\vert G\vert}d\rangle}}(-1)^{a(\sum\limits_{g\in G_s}k_gg)}\cdot\frac{\Gamma(\sum\limits_{g\in G_s}k_gF_g^{(0)}-\frac{\vert G_0\vert}{\vert G\vert}d)}{\Gamma(\langle\sum\limits_{g\in G_s}k_gF_g^{(0)}\rangle-\frac{\vert G_0\vert}{\vert G\vert}d)}\nonumber\\
&&\quad\cdot\frac{\Gamma(\sum\limits_{g\in G_s}k_gF_g^{(1)}+ad)}{\Gamma(\langle\sum\limits_{g\in G_s}k_gF_g^{(1)}\rangle+ad)}\cdot\frac{\Gamma(\sum\limits_{g\in G_s}k_gF_g^{(2)}-(a-\frac{\vert G_0\vert}{\vert G\vert})d)}{\Gamma(\langle\sum\limits_{g\in G_s}k_gF_g^{(2)}\rangle-(a-\frac{\vert G_0\vert}{\vert G\vert})d)}\nonumber\\
&&\quad\cdot\frac{1}{\Gamma\big(1+\frac{\vert G_0\vert}{\vert G\vert}d-\langle\sum\limits_{g\in G_s}k_gF_g^{(0)}\rangle\big)}\cdot\frac{\Gamma\big(-(a-\frac{\vert G_0\vert}{\vert G\vert})d+\langle\sum\limits_{g\in G_s}k_gF_g^{(2)}\rangle\big)}{\Gamma\big(1-ad-\langle\sum\limits_{g\in G_s}k_gF_g^{(1)}\rangle\big)}\nonumber\\
&&\quad\cdot\prod\limits_{g\in G_s}\frac{(-x_g)^{k_g}}{k_g!}\cdot\frac{((-1)^{1+\lambda_2}x_0)^d}{d}\label{findrestriction}\\
&=&\frac{1}{\vert G_0\vert}\sum\limits_{\substack{d\in\mathbb{Z}_{>0},k_g\in\mathbb{Z}_{\geqslant 0},\forall g\in G_s\\\langle\sum\limits_{g\in G_s}k_gF_g^{(0)}\rangle=\langle\frac{\vert G_0\vert}{\vert G\vert}d\rangle}}(-1)^{\lfloor\sum\limits_{g\in G_s}k_gF_g^{(2)}\rfloor}\nonumber\\
&&\quad\cdot\frac{1}{\Gamma\bigg(1-\sum\limits_{g\in G_s}k_gF_g^{(0)}+\lambda_0d\bigg)}\cdot\frac{\Gamma\bigg(\sum\limits_{g\in G_s}k_gF_g^{(2)}-\lambda_2d\bigg)}{\Gamma\bigg(1-\sum\limits_{g\in G_s}k_gF_g^{(1)}+\lambda_1d\bigg)}\nonumber\\
&&\quad\cdot\prod\limits_{g\in G_s}\frac{x_g^{k_g}}{k_g!}\cdot\frac{((-1)^{1+\lambda_2}x_0)^d}{d}\nonumber\\
&=:&\frac{1}{\vert G_0\vert}\sum\limits_{\substack{d\in\mathbb{Z}_{>0},k_g\in\mathbb{Z}_{\geqslant 0},\forall g\in G_s\\\langle\sum\limits_{g\in G_s}k_gF_g^{(0)}\rangle=\langle\frac{\vert G_0\vert}{\vert G\vert}d\rangle}}C(k_{g_1},\cdots,k_{g_s};d)\cdot\prod\limits_{g\in G_s}x_g^{k_g}\cdot x_0^d\nonumber
\end{eqnarray}
Let $n_g$ be the order of $g\in G$. Assume that $\lambda_1=-a\geqslant 0$, and then $\lambda_2<0$. By direct calculation, we have: 
\begin{eqnarray*}
&&\bigg(k_{g_l}+n_{g_l}\bigg)_{n_{g_l}}\cdot C(k_{g_1},\cdots,k_{g_l}+n_{g_l},\cdots,k_{g_s};d)\\
&=&\prod\limits_{j=0}^2\bigg(-\sum\limits_{g\in G_s}k_gF_g^{(j)}+\lambda_jd\bigg)_{n_{g_l}F_{g_l}^{(j)}}\cdot C(k_{g_1},\cdots,k_{g_s};d),\quad l=1,\cdots,s,\\
&&\bigg(d+\vert G/G_0\vert\bigg)_{\vert G/G_0\vert}\cdot\prod\limits_{j=0}^1\bigg(-\sum\limits_{g\in G_s}k_gF_g^{(j)}+\lambda_j(d+\vert G/G_0\vert)\bigg)_{\lambda_j\vert G/G_0\vert}\\
&&\quad\quad\cdot C(k_{g_1},\cdots,k_{g_s};d+\vert G/G_0\vert)\\
&=&\bigg(-d\bigg)_{-\lambda_2\vert G/G_0\vert}\cdot\bigg(-\sum\limits_{g\in G_s}k_gF_g^{(2)}+\lambda_2d\bigg)_{-\lambda_2\vert G/G_0\vert}\cdot C(k_{g_1},\cdots,k_{g_s};d).
\end{eqnarray*}
Then the recursion relation naturally give the following operators:
\begin{eqnarray*}
\widetilde{\mathcal{D}}_l&=&\bigg(\Theta_{x_{g_l}}\bigg)_{n_{g_l}}-x_{g_l}^{n_{g_l}}\prod_{j=0}^2\bigg(-\sum\limits_{g\in G_s}F_g^{(j)}\Theta_{x_{g}}\bigg)_{n_{g_l}F_{g_l}^{(j)}},\quad l=1,\cdots,s,\\
\widetilde{\mathcal{D}}_0&=&\bigg(\Theta_{x_0}\bigg)_{\vert G/G_0\vert}\cdot\prod\limits_{j=0}^1\bigg(-\sum\limits_{g\in G_s}F_g^{(j)}\Theta_{x_g}+\lambda_j\Theta_{x_0}\bigg)_{\lambda_j\vert G/G_0\vert}\\
&&\quad-x_0^{\vert G/G_0\vert}\bigg(-\Theta_{x_0}\bigg)_{\vert G/G_0\vert}\cdot\bigg(-\sum\limits_{g\in G_s}F_g^{(2)}\Theta_{x_g}+\lambda_2\Theta_{x_0}\bigg)_{-\lambda_2\vert G/G_0\vert}.
\end{eqnarray*}
Observe that if we set $y_{g_l}=x_{g_l}^{n_{g_l}}$ and $y_0=x_0^{\vert G/G_0\vert}$, then 
\begin{eqnarray*}
\widetilde{\mathcal{D}}_l&=&\bigg(n_{g_l}\Theta_{y_{g_l}}\bigg)_{n_{g_l}}-y_{g_l}\prod_{j=0}^2\bigg(-\sum\limits_{g\in G_s}n_gF_g^{(j)}\Theta_{y_g}\bigg)_{n_{g_l}F_{g_l}^{(j)}},\quad l=1,\cdots,s,\\
\widetilde{\mathcal D}_0&=&\bigg(\vert G/G_0\vert\Theta_{y_0}\bigg)_{\vert G/G_0\vert}\cdot\prod\limits_{j=0}^1\bigg(-\sum\limits_{g\in G_s}n_gF_g^{(j)}\Theta_{y_g}+\lambda_j\vert G/G_0\vert\Theta_{y_0}\bigg)_{\lambda_j\vert G/G_0\vert}\\
&&\quad-y_0\bigg(-\vert G/G_0\vert\Theta_{y_0}\bigg)_{\vert G/G_0\vert}\cdot\bigg(-\sum\limits_{g\in G_s}n_gF_g^{(2)}\Theta_{y_g}+\lambda_2\vert G/G_0\vert\Theta_{y_0}\bigg)_{-\lambda_2\vert G/G_0\vert}.
\end{eqnarray*}
So naturally, the charge vectors for $(\mathcal{X},\mathcal L_0)$ are 
\begin{displaymath}
\begin{array}{ccccccc|cc}
\hat l^{(g_1)}=\big(-n_{g_1}F_{g_1}^{(0)},&-n_{g_1}F_{g_1}^{(1)},&-n_{g_1}F_{g_1}^{(2)},&n_{g_1},&0,&\cdots,&0,&0,&0\big),\\
\hat l^{(g_2)}=\big(-n_{g_2}F_{g_2}^{(0)},&-n_{g_2}F_{g_2}^{(1)},&-n_{g_2}F_{g_2}^{(2)},&0,&n_{g_2},&\cdots,&0,&0,&0\big),\\
\cdots&\cdots&&&&&&&\\
\hat l^{(g_s)}=\big(-n_{g_s}F_{g_s}^{(0)},&-n_{g_s}F_{g_s}^{(1)},&-n_{g_s}F_{g_s}^{(2)},&0,&0,&\cdots,&n_{g_s},&0,&0\big),\\
\hat l^{(0)}=\big(\lambda_0\vert G/G_0\vert,&\lambda_1\vert G/G_0\vert,&\lambda_2\vert G/G_0\vert,&0,&0,&\cdots,&0,&\vert G/G_0\vert,&-\vert G/G_0\vert\big),
\end{array}
\end{displaymath}
We can also obtain the same charge vectors for the case $\lambda_1=-a<0$. Then we have
\begin{eqnarray*}
&&y_g=\prod\limits_{j=0}^2z_j^{-n_gF_g^{(j)}}\cdot z_g^{n_g},\quad g\in G_s,\\
&&y_0=\prod\limits_{j=0}^2z_j^{\lambda_j\vert G/G_0\vert}\cdot\bigg(\frac{z_+}{z_-}\bigg)^{\vert G/G_0\vert},
\end{eqnarray*}
or equivalently,
\begin{eqnarray}
&&x_g=\prod\limits_{j=0}^2z_j^{-F_g^{(j)}}\cdot z_g,\quad g\in G_s\label{23},\\
&&x_0=\prod\limits_{j=0}^2z_j^{\lambda_j}\cdot\frac{z_+}{z_-}.
\end{eqnarray}

\section{Comparison between the superpotential and the orbifold disc potential}

\subsection{Change of variables}

From \eqref{10}, we see that the matrix $(l^{(g)}_h)_{g,h\in G_s}$ is invertible. Hence we may find $(b_{gh})_{g,h\in G_s}$ such that
\begin{eqnarray}
\sum\limits_{h\in G_s}b_{g_1h}l^{(h)}_{g_2}=\sum\limits_{h\in G_s}l^{(g_1)}_hb_{hg_2}=\delta_{g_1,g_2}\label{26}
\end{eqnarray}

It is not difficult to check that  $\{\tilde l^{(g_1)},\cdots,\tilde l^{(g_s)},\tilde l^{(0)}\}$ and $\{\hat l^{(g_1)},\cdots,\hat l^{(g_s)},\hat l^{(0)}\}$ span the same linear subspace in $\mathbb Q^{s+1}$ if and only if 
\begin{eqnarray}
-a=\lambda_1=l_1^{(0)}+\sum\limits_{g\in G_s}l_g^{(0)}F_g^{(1)},\label{framing}
\end{eqnarray}
which gives an explicit correspondence between the framings $a$ and $f$. Via this correspondence, we have:
\begin{eqnarray*}
\prod\limits_{h\in G_s}q_h^{b_{gh}}&=&\prod\limits_{j=0}^2z_j^{-F_{g}^{(j)}}\cdot z_g,\\
q_0\cdot\prod\limits_{h\in G_s}q_h^{-\sum\limits_{g\in G_s}l_g^{(0)}b_{gh}}&=&\prod\limits_{j=0}^2z_j^{\lambda_j}\cdot\frac{z_+}{z_-}.
\end{eqnarray*}
and so it is natural to set:
\begin{eqnarray}
x_g&=&\prod\limits_{h\in G_s}q_h^{b_{gh}},\quad\forall g\in G_s,\label{27}\\
x_0&=&(-1)^{f+1+\lambda_2}q_0\cdot\prod_{g\in G_s}q_g^{-\sum\limits_{h\in G_s}l_{h}^{(0)}b_{hg}},\label{28}
\end{eqnarray}
where the phase factor $(-1)^{f+1+\lambda_2}$ is included for convenience of comparison. 

The authors of \cite{BKMP09} have conjectured the change of variables $X_0=X_0(q,q_0)$ as follows. Note that $$\Omega_0(q,q_0)=\log q_0-\sum\limits_{g\in G_s}\sum\limits_{h\in G_s}l_h^{(0)}b_{hg}\bigg(\log q_g-\Omega_g(q,q_0)\bigg),$$ and they conjectured that $$X_0=q_0\cdot\prod_{g\in G_s}q_g^{-\sum\limits_{h\in G_s}l_{h}^{(0)}b_{hg}},$$ which matches our result from GLSM via our propoed orbifold open mirror map $X_0=(-1)^{1+\lambda_2}x_0$.

\subsection{General Discussion}
Now we plug in the change of variables $(67),(73)$,\eqref{27},\eqref{28} to compare the ``small'' disc potential $\mathcal{F}_{0,1}^{(\mathcal{X},\mathcal{L}_0)}$ \eqref{discpotential} with the superpotential $W$ \eqref{f>0},\eqref{f<0} corresponding to $(Y,L_0)$. Recall that our brane $\mathcal{L}_0$ intersects the ``$z_0$-axis'' of $\mathcal{X}$ and $L_0$ intersects the non-compact toric curve in $Y$ given by $\overline{v_{i_1}v_{i_2}}\subset\overline{v_1v_2}$.

Via the correspondence of framings \eqref{framing}, the ``small'' disc potential is given by:
\begin{eqnarray}
&&\mathcal{F}_{0,1}^{(\mathcal{X},\mathcal{L}_0)}(X(x(\vec q)),X_0(x_0(\vec q,q_0));a)\nonumber\\
&=&\frac{1}{\vert G_0\vert}\sum_{\substack{d\in\mathbb{Z}_{>0},k_g\in\mathbb{Z}_{\geqslant 0},\forall g\in G_s\\\frac{\vert G_0\vert}{\vert G\vert}d-\sum\limits_{g\in G_s}k_gF_g^{(0)}\in\mathbb{Z}_{\geqslant 0}}}(-1)^{\lfloor\sum\limits_{g\in G_s}k_gF_g^{(2)}\rfloor}\nonumber\\
&&\quad\cdot\frac{1}{\Gamma\bigg(1+\frac{\vert G_0\vert}{\vert G\vert}d-\sum\limits_{g\in G_s}k_gF_g^{(0)}\bigg)}\cdot\frac{\Gamma\bigg(\sum\limits_{g\in G_s}k_gF_g^{(2)}-\Big(a-\frac{\vert G_0\vert}{\vert G\vert}\Big)d\bigg)}{\Gamma\bigg(1-ad-\sum\limits_{g\in G_s}k_gF_g^{(1)}\bigg)}\nonumber\\
&&\quad\cdot\prod\limits_{g\in G_s}\frac{q_g^{\sum\limits_{h\in G_s}[k_h-dl_h^{(0)}]b_{hg}}}{k_g!}\cdot\frac{[(-1)^fq_0]^d}{d},\label{find k_0}
\end{eqnarray}
Here we observe that:
\begin{eqnarray}
&&\bigg(\sum\limits_{g\in G_s}k_gF_g^{(2)}-\Big(a-\frac{\vert G_0\vert}{\vert G\vert}\Big)d\bigg)-\bigg(1-ad-\sum\limits_{g\in G_s}k_gF_g^{(1)}\bigg)\nonumber\\
&=&\sum\limits_{g\in G_s}k_g(F_g^{(1)}+F_g^{(2)})+\frac{\vert G_0\vert}{\vert G\vert}d-1\nonumber\\
&=&\sum\limits_{g\in G_s}k_g-\sum\limits_{g\in G_s}k_gF_g^{(0)}+\frac{\vert G_0\vert}{\vert G\vert}d-1\nonumber\\
&=&\bigg(\sum\limits_{g\in G_s}k_g-\lfloor\sum\limits_{g\in G_s}k_gF_g^{(0)}\rfloor\bigg)+\bigg(\frac{\vert G_0\vert}{\vert G\vert}d-\langle\sum\limits_{g\in G_s}k_gF_g^{(0)}\rangle\bigg)-1\nonumber\\
&\in&\mathbb{Z}_{\geqslant 0}.\label{restric}
\end{eqnarray}

Set
\begin{eqnarray}
\left\{\begin{array}{ccl}m_0&=&d,\\\\m_g&=&\sum\limits_{h\in G_s}[k_h-dl_h^{(0)}]b_{hg},\quad\forall g\in G_s.\end{array}\right.\label{29}
\end{eqnarray}
From \eqref{26}, we have
\begin{eqnarray}
\left\{\begin{array}{ccl}d&=&m_0\in\mathbb{Z}_{>0},\\\\k_h&=&m_0l_h^{(0)}+\langle \vec m,l_h\rangle\in\mathbb{Z}_{\geqslant 0},\quad\forall h\in G_s.\end{array}\right.\label{30}
\end{eqnarray}
In particular, from \eqref{11}, we have $$k_h=m_0\big[\delta_{h,i_0}+f\delta_{h,i_1}-(f+1)\delta_{h,i_2}\big]+\langle\vec  m,l_h\rangle.$$ So from \eqref{10}, we have for $j=0,1,2$: $$\sum\limits_{g\in G_s}k_gF_g^{(j)}=m_0\sum\limits_{g\in G_s}\big[\delta_{g,i_0}+f\delta_{g,i_1}-(f+1)\delta_{g,i_2}\big]F_g^{(j)}-\langle \vec m,l_j\rangle.$$ In particular, from Lemma \ref{lemma 2},
\begin{eqnarray}
&&\sum\limits_{g\in G_s}k_gF_g^{(0)}\nonumber\\
&=&\frac{\vert G_0\vert}{\vert G\vert}m_0(1-\delta_{i_0,0})-\langle \vec m,l_0\rangle\nonumber\\
&=&\frac{\vert G_0\vert}{\vert G\vert}d-m_0l_0^{(0)}-\langle\vec  m,l_0\rangle.\label{from Lemma 2}
\end{eqnarray}
 So we obtain that
\begin{eqnarray*}
&&\mathcal{F}_{0,1}^{(\mathcal{X},\mathcal{L}_0)}(X(x(\vec q)),X_0(x_0(\vec q,q_0));a)\\
&=&\frac{1}{\vert G_0\vert}\sum(-1)^{\lfloor m_0\sum\limits_{g\in G_s}l_g^{(0)}F_g^{(2)}\rfloor}\\
&&\quad\cdot\frac{1}{\Gamma\bigg(1+m_0l_0^{(0)}+\langle\vec  m,l_0\rangle\bigg)}\cdot\frac{\Gamma\bigg(m_0\Big(\sum\limits_{g\in G_s}l_g^{(0)}F_g^{(2)}-a+\frac{\vert G_0\vert}{\vert G\vert}\Big)-\langle\vec  m,l_2\rangle\bigg)}{\Gamma\bigg(1-m_0\Big(\sum\limits_{g\in G_s}l_g^{(0)}F_g^{(1)}+a\Big)+\langle\vec  m,l_1\rangle\bigg)}\\
&&\quad\cdot\prod\limits_{g\in G_s}\frac{[(-1)^{l_2^{(g)}}q_g]^{m_g}}{\Gamma\bigg(1+m_0l_g^{(0)}+\langle \vec m,l_g\rangle\bigg)}\cdot\frac{[(-1)^fq_0]^{m_0}}{m_0}.
\end{eqnarray*}

\subsection{Main theorems}

In this subsection, we compare $W$ \eqref{f>0},\eqref{f<0} with $F$ \eqref{analytic part}. Recall that $L_0$ intersects the non-compact toric curve in $Y$ given by $\overline{v_{i_1}v_{i_2}}\subset\overline{v_1v_2}$.

We observe that the domains of sum in \eqref{general f} and \eqref{analytic part} are disjoint if $i_1\neq 1$ and $i_2\neq 2$. So we only need to consider the cases $i_1=1$ or $i_2=2$, which puts restrictions on the location of $L_0$.

First we consider the effective case, i.e. $|G_0|=1$.
\begin{thm}\label{effective}
In the effective case, we have $$W(q,q_0;f)=\mathcal{F}_{0,1}^{(\mathcal{X},\mathcal{L}_0)}(X(x(\vec q)),X_0(x_0(\vec q,q_0));a).$$
\end{thm}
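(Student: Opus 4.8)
The plan is to prove the equality by comparing the two explicit hypergeometric series term by term. On one side is the superpotential $W$ written in the common closed form to which both \eqref{f>0} and \eqref{f<0} reduce, and on the other is the explicit expression for $\mathcal{F}_{0,1}^{(\mathcal{X},\mathcal{L}_0)}$ obtained at the end of the previous subsection. Both are sums over the same set $(m_0,\vec{m})\in\mathbb{Z}_{>0}\times\mathbb{Z}^{s}_{\geqslant 0}$, the passage from the orbifold summation variables $(d,(k_g))$ being the bijection \eqref{29}, \eqref{30} whose invertibility rests on \eqref{26}. First I would record the effective-case reductions: $|G_0|=1$ removes the prefactor $\tfrac{1}{|G_0|}$ and identifies $\tfrac{|G_0|}{|G|}$ with $\lambda_0=\tfrac{1}{|G|}$ (see \eqref{weight}), while by the remark preceding Lemma \ref{lemma 2} the value $i_0=0$ would force $G_0=G$, hence $G=\{0\}$; thus $i_0\in G_s$ and $l_0^{(0)}=0$. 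Writing $k_i:=m_0l_i^{(0)}+\langle\vec{m},l_i\rangle$ for $i\in\mathcal{S}$, I then intend to show that each summand of $\mathcal{F}_{0,1}^{(\mathcal{X},\mathcal{L}_0)}$ coincides with the corresponding summand of $W$.

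The conceptual core is that, after the change of variables, each side collapses to a single ratio of the shape $\Gamma(-k_{\ast})/\prod_{i\in\mathcal{S}\setminus\{\ast\}}\Gamma(1+k_i)$ times a monomial in $q_0,\vec{q}$, where the distinguished index $\ast$ is $i_2$ for $W$ and $2$ for $\mathcal{F}_{0,1}^{(\mathcal{X},\mathcal{L}_0)}$. For $W$ this is immediate from \eqref{11}, using $l_i^{(0)}=0$ off $\{i_0,i_1,i_2\}$ and $l_{i_0}^{(0)}=1$ to absorb the $i_0$-factor into the product. For $\mathcal{F}_{0,1}^{(\mathcal{X},\mathcal{L}_0)}$ the key input is the trio of identities $\lambda_j=l_j^{(0)}+\sum_{g\in G_s}l_g^{(0)}F_g^{(j)}$ for $j=0,1,2$: the case $j=1$ is precisely the framing correspondence \eqref{framing}, the case $j=0$ is extracted from \eqref{from Lemma 2} (Lemma \ref{lemma 2}), and the case $j=2$ then follows from these two together with the Calabi--Yau relation $\lambda_0+\lambda_1+\lambda_2=0$ and the age relation $F^{(0)}_g+F^{(1)}_g+F^{(2)}_g=1$ for $g\in G_s$. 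Feeding these, together with \eqref{10} expressing $l_j$ through the $l_g$, into the Gamma-arguments of $\mathcal{F}_{0,1}^{(\mathcal{X},\mathcal{L}_0)}$ turns its numerator argument into $-k_2$ and its denominator argument into $1+k_1$, so that $\ast=2$. Matching the distinguished indices then forces $i_2=2$, which is exactly the non-disjointness condition $i_1=1$ or $i_2=2$ singled out at the start of the subsection.

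Next I would check that the two domains of summation agree. By the convention \eqref{n'=0} every factor $\Gamma(1+k_i)^{-1}$ vanishes unless $k_i\geqslant 0$, so in both series the effective range is $\{k_i\geqslant 0:\ i\neq i_2\}$; the explicit inequality $(f+1)m_0>\langle\vec{m},l_{i_2}\rangle$ of \eqref{f>0} reads $k_{i_2}<0$, which is likewise forced in $\mathcal{F}_{0,1}^{(\mathcal{X},\mathcal{L}_0)}$ by the requirement that the numerator $\Gamma(-k_{i_2})$ be finite. The remaining task is to reconcile the monomials and signs: the powers $\tfrac{[(-1)^f q_0]^{m_0}}{m_0}$ and $\prod_{g}q_g^{m_g}$ agree through \eqref{29}, \eqref{30}, so it comes down to matching the sign $(-1)^{\lfloor m_0\sum_{g}l_g^{(0)}F_g^{(2)}\rfloor}(-1)^{\langle\vec{m},l_2\rangle}$ of $\mathcal{F}_{0,1}^{(\mathcal{X},\mathcal{L}_0)}$ against the sign $(-1)^{\langle\vec{m},l_{i_2}\rangle}$ of $W$.

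I expect this last step, the sign bookkeeping, to be the main obstacle. One must follow the signs generated by the reflection formula \eqref{n'} in rewriting $\mathcal{F}_{0,1}^{(\mathcal{X},\mathcal{L}_0)}$ in genuine-Gamma form and show that they combine with the floor-function sign into the single clean factor of $W$; in particular the overall factor $(-1)$ that appears when \eqref{f<0} is brought to the common form (for $f<0$) must be reproduced by the parity of $\lfloor m_0\sum_g l_g^{(0)}F^{(2)}_g\rfloor$, whose fractional data is governed by $\lambda_2$ through the $j=2$ identity above. I would organize this by treating the subcases $f\geqslant 0$ and $f<0$ in parallel, invoking \eqref{n'} in the second exactly as it was used to pass from \eqref{f<0} to the common form, so that the remaining verification reduces to a parity computation built from the $\lambda_j$-identities already established.
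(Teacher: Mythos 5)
Your overall strategy --- termwise comparison of the two hypergeometric series after the substitution \eqref{29}, \eqref{30}, using the identities $\lambda_j=l_j^{(0)}+\sum_{g\in G_s}l_g^{(0)}F_g^{(j)}$ to collapse the orbifold summand to the form $\Gamma(-k_{2})/\prod_{i\neq 2}\Gamma(1+k_i)$ --- is the same as the paper's, and your preliminary reductions ($i_0\in G_s$, $l_0^{(0)}=0$, $F_{i_0}^{(0)}=1/|G|$ via Lemma \ref{lemma 2}) are correct. But there is a genuine gap in your domain-matching step, and it is precisely the point where the paper's proof does its real work. You assert that $k_{i_2}<0$ is ``forced in $\mathcal{F}_{0,1}^{(\mathcal{X},\mathcal{L}_0)}$ by the requirement that the numerator $\Gamma(-k_{i_2})$ be finite,'' and more generally that each factor $\Gamma(1+k_i)^{-1}$ kills its term unless $k_i\geqslant 0$. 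That reasoning treats the Gamma factors as independent, but the expression is built through the regularized ratios of \eqref{gamma}, and by \eqref{n'} the combination $\Gamma(-k_{i_2})/\Gamma(1+k_{i_1})$ (whose arguments differ by a non-negative integer, by \eqref{restric}) is \emph{nonzero} whenever both arguments are non-positive integers --- a pole upstairs is cancelled by a pole downstairs. So the series for $\mathcal{F}_{0,1}^{(\mathcal{X},\mathcal{L}_0)}$ a priori contains two families of nonzero terms: those with both arguments positive, and those with both non-positive; only the first family appears in \eqref{f>0}, and only the second (after applying \eqref{n'}) in \eqref{f<0}. Your argument gives no reason why the wrong family is absent for each sign of $f$.

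The paper closes exactly this hole by returning to the $\lambda_1$- and $\lambda_2$-parts of $I_{\mathcal{X}}$ in \eqref{findrestriction}: for $f\geqslant 0$ the ratio $\Gamma\big(-fm_0-\langle\vec m,l_1\rangle\big)\big/\Gamma\big(-fm_0-\lfloor m_0F_{i_0}^{(1)}\rfloor\big)$ has non-positive denominator argument, which forces $fm_0+\langle\vec m,l_1\rangle\geqslant 0$ and hence, via \eqref{restric}, $(f+1)m_0-\langle\vec m,l_2\rangle>0$ (this is \eqref{f0>0}); for $f<0$ the analogous $\lambda_2$-ratio forces the opposite signs (this is \eqref{f0<0}). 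Without this step your identification of the summation domains, and consequently the sign bookkeeping you defer to the end (which also remains unexecuted in your proposal), cannot be completed. Two smaller points: in the effective case the equalities $i_1=1$, $i_2=2$ follow directly from the absence of lattice points in the interior of $\overline{v_1v_2}$ (since such points correspond to age-one elements of $G_0=\{0\}$), rather than being ``forced by matching'' as you phrase it; and once \eqref{f0>0}, \eqref{f0<0} are in place the sign comparison is essentially immediate from \eqref{n'}, so it is not the main obstacle you anticipate --- the domain analysis is.
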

\begin{proof}
In this case, no integral points lie on the interior of the line segment $\overline{v_1v_2}$, and hence $i_1=1,i_2=2$. Recall that $G\neq\{0\}$ and hence $i_0\in G_s$. So we have
\begin{eqnarray*}
l_i^{(0)}=\left\{\begin{array}{cl}1,&i=i_0,\\f,&i=i_1=1,\\-(f+1),&i=i_2=2,\\0,&i\in\mathcal{S}\setminus\{i_0,1,2\},\end{array}\right.
\end{eqnarray*}
and the correspondence between framings is $$a=-f-F_{i_0}^{(1)}.$$ Then
\begin{eqnarray}
&&\mathcal{F}_{0,1}^{(\mathcal{X},\mathcal{L}_0)}(X(x(\vec q)),X_0(x_0(\vec q,q_0));a)\\%F(\vec q,q_0;a)\nonumber\\
&=&\sum_{\substack{m_0\in\mathbb{Z}_{>0},\vec m=(m_g)_{g\in G_s}\in\mathbb{Z}^s_{\geqslant 0}\\m_0l_i^{(0)}+\langle\vec  m,l_i\rangle\in\mathbb{Z}_{\geqslant 0}, i\neq 1,2}}(-1)^{\lfloor m_0\sum\limits_{g\in G_s}l_g^{(0)}F_g^{(2)}\rfloor}\nonumber\\
&&\quad\cdot\frac{1}{\Gamma\bigg(1+m_0l_0^{(0)}+\langle\vec  m,l_0\rangle\bigg)}\cdot\frac{\Gamma\bigg((f+1)m_0-\langle\vec  m,l_2\rangle\bigg)}{\Gamma\bigg(1+fm_0+\langle\vec  m,l_1\rangle\bigg)}\nonumber\\
&&\quad\cdot\prod\limits_{g\in G_s}\frac{[(-1)^{l_2^{(g)}}q_g]^{m_g}}{\Gamma\bigg(1+m_0l_g^{(0)}+\langle\vec  m,l_g\rangle\bigg)}\cdot\frac{[(-1)^fq_0]^{m_0}}{m_0}.\label{F0}
\end{eqnarray}
We need to consider the implicit restriction on the domain of sum in \eqref{F0} more carefully. First observe that from \eqref{restric}, we have: $$\bigg((f+1)m_0-\langle\vec  m,l_2\rangle\bigg)-\bigg(1+fm_0+\langle\vec  m,l_1\rangle\bigg)\in\mathbb{Z}_{\geqslant 0}.$$ The $\lambda_1$- and $\lambda_2$-part of $I_\mathcal{X}$ in \eqref{findrestriction} are:
\begin{eqnarray*}
\frac{\Gamma(\sum\limits_{g\in G_s}k_gF_g^{(1)}+ad)}{\Gamma(\langle\sum\limits_{g\in G_s}k_gF_g^{(1)}\rangle+ad)}&=&\frac{\Gamma\bigg(-fm_0-\langle\vec  m,l_1\rangle\bigg)}{\Gamma\bigg(-fm_0-\lfloor m_0F_{i_0}^{(1)}\rfloor\bigg)},\\
\frac{\Gamma(\sum\limits_{g\in G_s}k_gF_g^{(2)}-(a-\frac{\vert G_0\vert}{\vert G\vert})d)}{\Gamma(\langle\sum\limits_{g\in G_s}k_gF_g^{(2)}\rangle-(a-\frac{\vert G_0\vert}{\vert G\vert})d)}&=&\frac{\Gamma\bigg((f+1)m_0-\langle \vec m,l_2\rangle\bigg)}{\Gamma\bigg((f+1)m_0-\lfloor m_0F_{i_0}^{(2)}\rfloor\bigg)}.
\end{eqnarray*}
So if $f\geqslant 0$, then $$-fm_0-\lfloor m_0F_{i_0}^{(1)}\rfloor\in\mathbb{Z}_{\leqslant 0}.$$ Therefore,
\begin{eqnarray}
-fm_0-\langle\vec  m,l_1\rangle\in\mathbb{Z}_{\leqslant 0},\textrm{ and then }(f+1)m_0-\langle\vec  m,l_2\rangle\in\mathbb{Z}_{>0}.\label{f0>0}
\end{eqnarray}
So from \eqref{f>0},\eqref{F0},\eqref{f0>0}, we have the required equality. If $f<0$, then $$(f+1)m_0-\lfloor m_0F^{(2)}_{i_0}\rfloor\in\mathbb{Z}_{\leqslant 0}.$$ Therefore,
\begin{eqnarray}
(f+1)m_0-\langle\vec  m,l_2\rangle\in\mathbb{Z}_{\leqslant 0},\textrm{ and then }fm_0+\langle\vec  m,l_1\rangle\in\mathbb{Z}_{\leqslant 0}.\label{f0<0}
\end{eqnarray}
So from \eqref{f<0},\eqref{F0},\eqref{f0<0}, we also have the required equality.
\end{proof}

Now we come to consider the ineffective case, i.e. $|G_0|\geqslant 2$.
\begin{thm}\label{ineffective}
In the ineffective case, if $L_0$ intersects the non-compact toric curve given by $\overline{v_{i_1}v_{i_2}}\subset\overline{v_1v_2}$ with $i_1=1$, then $$W(\vec q,q_0;f)=\vert G_0\vert\cdot\mathcal{F}_{0,1}^{(\mathcal{X},\mathcal{L}_0)}(X(x(\vec q)),X_0(x_0(\vec q,q_0));a),\quad\textrm{ for }f<0,$$
and if $L_0$ intersects the non-compact toric curve given by $\overline{v_{i_1}v_{i_2}}\subset\overline{v_1v_2}$ with $i_2=2$, then $$W(\vec q,q_0;f)=\vert G_0\vert\cdot\mathcal{F}_{0,1}^{(\mathcal{X},\mathcal{L}_0)}(X(x(\vec q)),X_0(x_0(\vec q,q_0));a),\quad\textrm{ for }f\geqslant 0.$$ The above equalities hold after moding out nonanalytic parts.
\end{thm}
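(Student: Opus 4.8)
The plan is to imitate the proof of Theorem \ref{effective} and reduce both asserted identities to a term-by-term comparison of the two power series after the change of variables, keeping track of three features absent in the effective case: the overall factor $\vert G_0\vert$, the fact that only one of $i_1=1$, $i_2=2$ may be assumed, and the fractional monomials that must be discarded as ``nonanalytic''. Starting from the form \eqref{find k_0} of $\mathcal F_{0,1}^{(\mathcal X,\mathcal L_0)}$, I would reparametrize by $(m_0,\vec m)$ through \eqref{29},\eqref{30} and multiply by $\vert G_0\vert$ so that the prefactor $1/\vert G_0\vert$ disappears, accounting at once for the factor $\vert G_0\vert$ in the statement. Because $\lambda_j$ given by \eqref{weight} is generally not an integer, the exponents $m_g=\sum_{h}[k_h-dl_h^{(0)}]b_{hg}$ and the phases $(-1)^{\lambda_j d}$ are genuinely fractional unless $\vert G/G_0\vert$ divides $d$ in the appropriate residue; the terms for which some $m_g\notin\bZ$ carry fractional powers of the $q$-variables and constitute precisely the nonanalytic part to be modded out. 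On the complementary analytic part $F$ \eqref{analytic part}, \eqref{30} shows the summation index is $(m_0,\vec m)\in\bZ_{>0}\times\bZ^s_{\geq0}$, matching the index set of \eqref{f>0},\eqref{f<0}.

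I would first treat the case $i_2=2$ with $f\geq0$, which is closest to Theorem \ref{effective}. The key algebraic input is the family of framing relations $\lambda_j=l_j^{(0)}+\sum_{g\in G_s}l_g^{(0)}F_g^{(j)}$ for $j=0,1,2$: the case $j=1$ is exactly \eqref{framing}, the case $j=0$ follows from \eqref{from Lemma 2}, and $j=2$ then follows from the Calabi--Yau relation $\sum_j\lambda_j=0=\sum_{i\in\mathcal S}l_i^{(0)}$ of \eqref{11}. Feeding these into \eqref{find k_0} collapses the denominator $\Gamma$-argument to $1+fm_0+\langle\vec m,l_1\rangle$ and, using $l_2^{(0)}=-(f+1)$ when $i_2=2$, the numerator $\Gamma$-argument to $(f+1)m_0-\langle\vec m,l_2\rangle$, so that the transcendental factors of $\mathcal F_{0,1}$ agree with those of \eqref{f>0} after identifying $\Gamma(1+k_0)^{-1}\prod_{g\in G_s}\Gamma(1+k_g)^{-1}$ with $\prod_{i\in\mathcal S\setminus\{i_0,i_1,i_2\}}\Gamma(1+\langle\vec m,l_i\rangle)^{-1}\cdot\Gamma(1+m_0+\langle\vec m,l_{i_0}\rangle)^{-1}$ through $k_i=m_0l_i^{(0)}+\langle\vec m,l_i\rangle$. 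For the domain I would run the argument of \eqref{f0>0} verbatim: the $\lambda_1$- and $\lambda_2$-quotients appearing in \eqref{find k_0} force, for $f\geq0$, that $(f+1)m_0-\langle\vec m,l_2\rangle\in\bZ_{>0}$, which together with \eqref{restric} reproduces the explicit constraint of \eqref{f>0}. This is exactly the step that requires $i_2=2$, since only then does the corner vertex $v_2$ index the numerator $\Gamma$; if $i_2\neq2$ the two supports are disjoint as noted before the statement.

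The case $i_1=1$ with $f<0$ is then obtained by the same dictionary after rewriting $W$ via the second form of \eqref{f<0}: the reflection identity \eqref{n'} turns the $i_1$-numerator presentation into the $i_2$-numerator presentation of \eqref{f>0} at the cost of the explicit overall factor $(-1)$, after which the computation of the previous paragraph applies with $i_1=1$ supplying the corner alignment and with \eqref{f0<0} replacing \eqref{f0>0} in the domain analysis.

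I expect the genuine difficulty to lie in the sign bookkeeping, and I would isolate it as a separate lemma. Writing $k_i=m_0l_i^{(0)}+\langle\vec m,l_i\rangle$ and using \eqref{10}, one has the identity $\lfloor\sum_{g\in G_s}k_gF_g^{(2)}\rfloor=\lfloor m_0\sum_{g\in G_s}l_g^{(0)}F_g^{(2)}\rfloor-\langle\vec m,l_2\rangle$, so that the floor sign of \eqref{find k_0} combines with the monomial sign $\prod_g[(-1)^{l_2^{(g)}}q_g]^{m_g}$ into $(-1)^{\lfloor\sum_g k_gF_g^{(2)}\rfloor}$; the crux is to show that on the analytic support this matches the sign carried by \eqref{f>0}, which I would establish from the age constraint $a(\sum_g k_g g)\in\{1,2\}$, the support condition $\langle\sum_g k_gF_g^{(0)}\rangle=\langle\tfrac{\vert G_0\vert}{\vert G\vert}d\rangle$ of \eqref{find k_0}, and Lemma \ref{lemma 2}. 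This is the one place where $\vert G_0\vert\geq2$ genuinely changes the analysis relative to Theorem \ref{effective}, because $i_0$ may now equal $0$ (forcing $G_0=G$) rather than lying in $G_s$, and the two subcases must be checked separately; controlling this floor parity, together with the verification that discarding the fractional-exponent terms removes exactly the nonanalytic monomials, is where I anticipate the bulk of the work.
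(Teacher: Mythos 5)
Your proposal follows essentially the same route as the paper's proof: after the reindexing \eqref{29},\eqref{30} and the framing relations, both arguments reduce to matching the $\Gamma$-factors of \eqref{find k_0} against \eqref{f>0} or \eqref{f<0} and constraining the summation domain via the $\lambda_1$- and $\lambda_2$-parts of $I_\mathcal{X}$ (the paper's \eqref{fL<0}), with the fractional-exponent terms discarded as the nonanalytic part. The remaining differences are cosmetic: the paper treats the case $i_1=1$, $f<0$ directly from \eqref{FL} and disposes of $i_2=2$ ``similarly'' rather than deducing one case from the other via \eqref{n'}, and it settles the sign question you isolate as a separate lemma by simply asserting the identity ``up to sign ambiguity'' (note also that in the $i_2=2$ subcase the term $fm_0$ sits in the $\Gamma$-factor indexed by $i_1\in G_s$, not in the one indexed by the corner vertex $1$, a regrouping your sketch glosses over but which does not affect the argument).
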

\begin{proof}

In this case, there are integral points in the interior of $\overline{v_1v_2}$. So $L_0$ can be located on several different toric curves.

If $L_0$ is located on the toric curve given by $\overline{v_{i_1}v_{i_2}}$ with $i_1=1$, as illustrated below:

\begin{center}
\setlength{\unitlength}{1cm}
\begin{picture}(10.5,5)(-7.5,-1.5)
\drawline(-2,0)(0,-1)(1,1)(-2,0)
\drawline(0,-1)(-7,0)(2,3)
\dottedline{0.1}(1,1)(2,3)
\put(-7,-0.3){$v_0$}
\put(-2.5,0){$v_{i_0}$}
\put(0,-1.3){$v_{i_1}=v_1$}
\put(1.1,1){$v_{i_2}$}
\put(2,2.7){$v_2$}
\put(-7,0){\circle*{0.1}}
\put(2,3){\circle*{0.1}}
\put(-2,0){\circle*{0.1}}
\put(0,-1){\circle*{0.1}}
\put(1,1){\circle*{0.1}}
\put(0.5,0){$\backslash$}
\put(-3,-2){Figure 11}
\end{picture}
\end{center}

Then $i_0\in\{0,g_1,\cdots,g_s\},i_2\in G_s$, and therefore
\begin{eqnarray*}
l_i^{(0)}=\left\{\begin{array}{cl}1,&i=i_0,\\f,&i=i_1=1,\\-(f+1),&i=i_2,\\0,&i\in\mathcal{S}\setminus\{i_0,1,i_2\},\end{array}\right.
\end{eqnarray*}
and the correspondence between framings is $$a=-f-\sum\limits_{g\in G_s}l_g^{(0)}F_g^{(1)}.$$ So
\begin{eqnarray}
&&\mathcal{F}_{0,1}^{(\mathcal{X},\mathcal{L}_0)}(X(x(\vec q)),X_0(x_0(\vec q,q_0));a)\\%F(\vec q,q_0;f)\nonumber\\
&=&\frac{1}{\vert G_0\vert}\sum_{\substack{m_0\in\mathbb{Z}_{>0},m_g\in\mathbb{Z}_{\geqslant 0},\forall g\in G_s\\m_0l_i^{(0)}+\langle\vec  m,l_i\rangle\in\mathbb{Z}_{\geqslant 0}, i\neq 1,2}}(-1)^{\lfloor m_0\sum\limits_{g\in G_s}l_g^{(0)}F_g^{(2)}\rfloor}\nonumber\\
&&\quad\cdot\frac{1}{\Gamma\bigg(1+m_0l_0^{(0)}+\langle\vec  m,l_0\rangle\bigg)}\cdot\frac{\Gamma\bigg(-\langle\vec  m,l_2\rangle\bigg)}{\Gamma\bigg(1+fm_0+\langle\vec  m,l_1\rangle\bigg)}\nonumber\\
&&\quad\cdot\prod\limits_{g\in G_s}\frac{[(-1)^{l_2^{(g)}}q_g]^{m_g}}{\Gamma\bigg(1+m_0l_g^{(0)}+\langle\vec  m,l_g\rangle\bigg)}\cdot\frac{[(-1)^fq_0]^{m_0}}{m_0}.\label{FL}
\end{eqnarray}
We need to consider the implicit restriction on the domain of sum in \eqref{FL} more carefully. First observe that from \eqref{restric}, we have: $$\bigg(-\langle\vec  m,l_2\rangle\bigg)-\bigg(1+fm_0+\langle\vec  m,l_1\rangle\bigg)\in\mathbb{Z}_{\geqslant 0}.$$ The $\lambda_1$- and $\lambda_2$-part of $I_\mathcal{X}$ in \eqref{findrestriction} are:
\begin{eqnarray*}
\frac{\Gamma(\sum\limits_{g\in G_s}k_gF_g^{(1)}+ad)}{\Gamma(\langle\sum\limits_{g\in G_s}k_gF_g^{(1)}\rangle+ad)}&=&\frac{\Gamma\bigg(-fm_0-\langle\vec  m,l_1\rangle\bigg)}{\Gamma\bigg(-fm_0-\lfloor m_0\sum\limits_{g\in G_s}l_g^{(0)}F_g^{(1)}\rfloor\bigg)},\\
\frac{\Gamma(\sum\limits_{g\in G_s}k_gF_g^{(2)}-(a-\frac{\vert G_0\vert}{\vert G\vert})d)}{\Gamma(\langle\sum\limits_{g\in G_s}k_gF_g^{(2)}\rangle-(a-\frac{\vert G_0\vert}{\vert G\vert})d)}&=&\frac{\Gamma\bigg(-\langle\vec  m,l_2\rangle\bigg)}{\Gamma\bigg(-\lfloor m_0\sum\limits_{g\in G_s}l_g^{(0)}F_g^{(2)}\rfloor\bigg)}.
\end{eqnarray*}
So we see that $-\lfloor m_0\sum\limits_{g\in G_s}l_g^{(0)}F_g^{(2)}\rfloor\in\mathbb{Z}_{\leqslant 0}$. Therefore,
\begin{eqnarray}
 -\langle \vec m,l_2\rangle\in\mathbb{Z}_{\leqslant 0},\textrm{ and then }fm_0+\langle\vec  m,l_1\rangle\in\mathbb{Z}_{<0}.\label{fL<0}
\end{eqnarray}
From \eqref{f<0},\eqref{FL},\eqref{fL<0}, the required equality holds for $f<0$, up to sign ambiguity. If $f\geqslant 0$, then the two domains of sum for $W$ and $F$ are disjoint.

Similarly, if $L_0$ is located on the toric curve given by $\overline{v_{i_1}v_{i_2}}$ with $i_2=2$, then we have the required equality for $f\geqslant 0$.
\end{proof}

For $\vert G_0\vert\geqslant 3$, $L_0$ can be located on the toric curve given by $\overline{v_{i_1}v_{i_2}}$ with $i_1\neq 1,i_2\neq 2$. However, in this case the domains of sum for $W$ and $F$ are disjoint.

\subsection{A conjecture on the effective case}

For the effective case, i.e. $|G_0|=1$, since $G$ acts effectively on the $z_0$-axis, it follows that $G$ is cyclic. From Lemma \eqref{lemma 2}, we have $F_{i_0}^{(0)}=\frac{1}{|G|}$ and hence $i_0$ is a generator of $G$. We shall set $g_1=i_0$. In our effective case, we have $i_1=1,i_2=1$, and we obtain
\begin{eqnarray*}
l_i^{(0)}=\left\{\begin{array}{cl}1,&i=g_1,\\f,&i=i_1=1,\\-(f+1),&i=i_2=2,\\0,&i\in\mathcal{S}\setminus\{g_1,1,2\},\end{array}\right.
\end{eqnarray*}
Note that from \eqref{find k_0}, we have $$k_0:=\frac{d}{|G|}-\sum\limits_{h\in G_s}k_hF_h^{(0)}\in\mathbb Z_{\geqslant 0}.$$ Then it is not difficult to see that
\begin{eqnarray*}
\left\{\begin{array}{ccl}m_g&=&\sum\limits_{h\in G_s}(b_{hg}-b_{g_1g}\cdot|G|F_h^{(0)})k_h-b_{g_1g}|G|k_0,\quad\forall g\in G_s,\\\\m_0&=&\sum\limits_{h\in G_s}|G|F_h^{(0)}k_h+|G|k_0,\end{array}\right.
\end{eqnarray*}
So if all the coefficients of $k_0,k_{g_1},\cdots,k_{g_s}$ are non-negative integers, then $m_g$'s are all non-negative integers. We can rewrite this sufficient condition as follows. Note that
\begin{eqnarray*}
\left\{\begin{array}{ccl}k_g&=&\sum\limits_{h\in G_s}l_g^{(h)}m_h+\delta_{g,g_1}m_0,\quad\forall g\in G_s,\\\\k_0&=&\sum\limits_{h\in G_s}l_0^{(h)}m_h.\end{array}\right.
\end{eqnarray*}
In the matrix form, we have
\ben
[m_{g_1},\cdots,m_{g_s},m_0]= [k_{g_1},\cdots,k_{g_s},k_0]
\begin{bmatrix} l_{g_1}^{(g_1)}&l_{g_2}^{(g_1)}&\cdots&l_{g_s}^{(g_1)}&l_{0}^{(g_1)} \\\vdots&\vdots&&\vdots&\vdots\\ l_{g_1}^{(g_s)}&l_{g_2}^{(g_s)}&\cdots&l_{g_s}^{(g_s)}&l_{0}^{(g_s)} \\ 1 &0& \cdots &0&0
\end{bmatrix}
\een
Now we express the required sufficient condition in the following conjecture:
\begin{conj}
The entries of the inverse of the above matrix are non-negative integers.
\end{conj}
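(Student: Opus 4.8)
The plan is to read off the entries of $M^{-1}$ explicitly, dispose of integrality by showing that $M$ is unimodular, and then isolate the genuine difficulty in the sign of a few inverse intersection numbers. First I observe that $M^{-1}$ is nothing but the matrix recording $m_g$ and $m_0$ in terms of $k_0,(k_h)_{h\in G_s}$, so its entries are
\[
-|G|\,b_{g_1 g},\qquad b_{hg}-|G|F_h^{(0)}\,b_{g_1 g},\qquad |G|F_h^{(0)},\qquad |G|,
\]
where $(b_{gh})=A^{-1}$ and $A=(l^{(g)}_h)_{g,h\in G_s}$ is the invertible matrix of \eqref{26}. The entries $|G|F_h^{(0)}$ and $|G|$ lie in $\mathbb{Z}_{\geqslant 0}$ at once, since in the effective case $G$ is cyclic of order $|G|$ and $F_h^{(0)}\in\frac{1}{|G|}\mathbb{Z}\cap[0,1)$. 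Thus the conjecture reduces to proving, for every $g\in G_s$, the positivity statements (i) $b_{g_1 g}\leqslant 0$ and (ii) $b_{hg}-|G|F_h^{(0)}b_{g_1 g}\geqslant 0$ for all $h\in G_s$ (the case $h=g_1$ being automatic, as $|G|F_{g_1}^{(0)}=1$), together with integrality of all four families.

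For integrality I would prove $|\det M|=1$. Expanding the displayed matrix $M$ along its last row $(1,0,\dots,0)$ and using $l_0=-\sum_{g\in G_s}F_g^{(0)}l_g$ from \eqref{10} to clear the last column gives $|\det M|=|G|^{-1}|\det A|$, so it suffices to show $|\det A|=|G|$. Here I would use the exact sequence $0\to\mathbb{L}\to\mathbb{Z}^{3+s}\xrightarrow{V}\mathbb{Z}^3\to 0$ with $V(e_i)=\tilde v_i$ (surjective since each basic cone makes the $\tilde v_i$ generate $\mathbb{Z}^3$), together with the fact that the charge vectors form a $\mathbb{Z}$-basis of $\mathbb{L}$. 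The projection $\pi\colon\mathbb{L}\to\mathbb{Z}^{G_s}$ onto the $G_s$-coordinates is injective (a relation supported on $\{0,1,2\}$ vanishes, as $\tilde v_0,\tilde v_1,\tilde v_2$ are independent) and has matrix $A$ in these bases, so $|\det A|=[\mathbb{Z}^{G_s}:\pi(\mathbb{L})]$. Finally $\pi(\mathbb{L})$ is the kernel of $(l_h)\mapsto\sum_h l_h\tilde v_h\bmod\Lambda_0$ with $\Lambda_0=\langle\tilde v_0,\tilde v_1,\tilde v_2\rangle$, whence the index equals the order of the subgroup of $\mathbb{Z}^3/\Lambda_0\cong G$ generated by the classes of the $\tilde v_h$; by \eqref{8} that class is $h$ itself, and since $g_1\in G_s$ generates the cyclic group $G$ (this is exactly where effectiveness and Lemma \ref{lemma 2} enter), the subgroup is all of $G$. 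Hence $[\mathbb{Z}^{G_s}:\pi(\mathbb{L})]=|G|$, so $|\det M|=1$, $M^{-1}\in\mathrm{GL}_{s+1}(\mathbb{Z})$, and in particular all four families above are integers.

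The hard part will be the positivity (i)--(ii). I would reformulate it geometrically: writing $\beta_l$ for the curve class of $l^{(g_l)}$, one has $A_{g,h}=\beta_g\cdot D_h$, so the column $(b_{hg})_{h}$ of $A^{-1}$ expresses, in the divisor basis $\{D_h\}_{h\in G_s}$, the divisor $D_g^\vee$ dual to $\beta_g$ (that is, $\beta_{g'}\cdot D_g^\vee=\delta_{gg'}$). Thus $b_{g_1 g}$ is the coefficient of $D_{g_1}$ (the divisor over the generator) in $D_g^\vee$, and (i)--(ii) say these curve-dual divisors have a definite sign pattern relative to $D_{g_1}$. Using $\sum_{h\in G_s}F_h^{(0)}D_h\equiv -D_0$, which is again \eqref{10}, condition (ii) becomes the assertion that the class $D_g^\vee+|G|b_{g_1 g}D_0$ has non-negative coefficients in the basis $\{D_h\}_{h\in G_s}$. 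I expect this to be the main obstacle, because it is precisely a positivity statement in the secondary-fan/Mori-cone geometry of the crepant resolution, and sign control of inverse intersection numbers is not automatic.

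To establish it I would attempt one of two routes. First, an induction over flops: by the Summary of Section \ref{section 2} every crepant resolution comes from a maximal triangulation of $\triangle_{v_0v_1v_2}$, and any two differ by the elementary flop of Figure 1, under which $A$ undergoes an explicit elementary modification; one would verify the sign pattern (i)--(ii) for one distinguished (most economical) triangulation and then check it is preserved under each flop. Second, and more promising, I would pass to the star-adapted basis of Section \ref{section 2} (one edge of the dual graph per star of $v_i$, $i\in G_s$) and try to show that $A$ is then a signed $Z$-matrix whose inverse is sign-definite, an $M$-matrix-type argument that is natural because the exceptional loci are chains of $\mathbb{P}^2$'s and Hirzebruch surfaces with negative self-intersections and non-negative mutual intersections. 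Either route demands uniform combinatorial control of the triangulation, and carrying this out for all effective $G$ at once is the crux that keeps the statement at the level of a conjecture.
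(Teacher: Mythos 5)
The statement you are addressing is presented in the paper as a conjecture, and the paper offers no proof of it: the authors only remark that the integrality ``is not difficult to show'' by general toric geometry and that they ``could not find a proof for the non-negativeness.'' Your proposal is, in the end, in the same position, so it should not be read as a proof of the statement. That said, the part you actually carry through is sound and goes further than the paper's one-line remark: you correctly read off the entries of $M^{-1}$ from the explicit expressions for $m_g$ and $m_0$ in terms of $k_0,(k_h)_{h\in G_s}$, you check directly that $|G|F_h^{(0)}$ and $|G|$ are non-negative integers in the effective case, and your route to $|\det M|=1$ is correct: expanding along the last row, clearing the last column with $l_0=-\sum_{g\in G_s}F_g^{(0)}l_g$ from \eqref{10} and $F_{g_1}^{(0)}=1/|G|$ from Lemma \ref{lemma 2} reduces the claim to $|\det A|=|G|$, and the identification of $\mathbb{Z}^{G_s}/\pi(\mathbb{L})$ with the subgroup of $\mathbb{Z}^3/\langle\tilde v_0,\tilde v_1,\tilde v_2\rangle\cong G$ generated by the classes of the $\tilde v_h$, $h\in G_s$ — all of $G$, since $g_1$ generates the cyclic group $G$ in the effective case — gives exactly that. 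This yields $M^{-1}\in\mathrm{GL}_{s+1}(\mathbb{Z})$ and disposes of integrality.

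The genuine gap is the non-negativity, i.e.\ your conditions (i) $b_{g_1g}\leqslant 0$ and (ii) $b_{hg}-|G|F_h^{(0)}b_{g_1g}\geqslant 0$ for all $h,g\in G_s$. You correctly isolate this as the crux and give a useful geometric reformulation (positivity of the divisor classes dual to the curve basis $\beta_1,\dots,\beta_s$ relative to $D_{g_1}$ and $D_0$), but neither of your two proposed routes — induction over flops starting from a distinguished triangulation, or an $M$-matrix-type sign argument in a star-adapted basis — is actually executed, and you acknowledge this. Sign control of the inverse of the intersection matrix is genuinely delicate here: the off-diagonal entries of $A$ need not all be non-negative (the example tables in Section \ref{Appendix} contain negative off-diagonal intersection numbers such as $D_\xi\cdot D_0\cdot D_0=-1$ in Phase $I$ of $\bC^3/\bZ_6(1,2,3)$), so a naive $M$-matrix argument does not apply directly and the flop induction would have to be checked case by case. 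Until one of these strategies is carried out uniformly over all maximal triangulations and all effective $G$, the statement remains exactly where the paper leaves it: a conjecture, verified in examples, with the integrality half proved and the positivity half open.
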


By the general theory of toric geoemtry, it is not difficult to show the integrality. Unfortunately, we could not find a proof for the non-negativeness.

\section{Appendix}\label{Appendix}

In this appendix, we demonstrate our results by some examples. 
We shall let $\xi_n:=e^{\frac{2\pi\sqrt{-1}}{n}}$. 
For the convenience of the reader,
we repeat some details which have appeared in earlier sections.

\subsection{Example: $\mathbb{C}^3/\mathbb{Z}_3(1,1,1)$}
Let $\xi$ be the generator of $\mathbb{Z}_3$. The action is given by $$\xi.(z_0,z_1,z_2)=(\xi_3\cdot z_0,\xi_3\cdot z_1,\xi_3\cdot z_2).$$ Hence the small part of $\mathbb{Z}_3$ consists of a single element $\xi$, with $$F_\xi^{(0)}=F_\xi^{(1)}=F_\xi^{(2)}=\frac{1}{3}.$$ An integral basis of the lattice of invariants is 
\ben
[\varepsilon_0,\varepsilon_1,\varepsilon_2]
=[e_0,e_1,e_2]
\begin{bmatrix} 0&0&1 \\ 0 &-1&1 \\ 3 & 1 &1
\end{bmatrix}
\een
So let
\begin{eqnarray*}
\tilde{v}_0=\left[\begin{array}{c}0\\0\\1\end{array}\right],\tilde{v}_1=\left[\begin{array}{c}0\\-1\\1\end{array}\right],\tilde{v}_2=\left[\begin{array}{c}3\\1\\1\end{array}\right],\tilde{v}_{\xi}=\left[\begin{array}{c}1\\0\\1\end{array}\right],
\end{eqnarray*}
and the toric resolution $X$ is given by Figure 2.

Let $D_i$ be the corresponding toric divisor for $i\in\{0,1,2,\xi\}$. Then the intersection numbers are:
\begin{displaymath}
\begin{array}{c|cccc}
&D_0&D_1&D_2&D_{\xi}\\
\hline
D_\xi\cdot D_0&1&1&1&-3\\
D_\xi\cdot D_1&1&1&1&-3\\
D_\xi\cdot D_2&1&1&1&-3
\end{array}
\end{displaymath}
From the dual graph
\begin{center}
\setlength{\unitlength}{1cm}
\begin{picture}(0,6.5)(0,-3.5)
\drawline(-1,0)(0,0)(1,-1)(0,1)(0,0)
\drawline(-0.5,2.5)(0,1)
\drawline(1.5,-1.75)(1,-1)
\dottedline[.]{0.2}(-2,0)(-1,0)
\dottedline[.]{0.2}(-0.5,2.5)(-0.75,3.25)
\dottedline[.]{0.2}(1.5,-1.75)(2,-2.5)
\put(-1,0.5){$D_0$}
\put(-1,-0.5){$D_1$}
\put(1,0){$D_2$}
\put(0,0){$D_\xi$}
\put(0,-2.5){Figure 12}
\end{picture}
\end{center}
we can take $C_\xi = D_\xi \cdot D_0$ or $D_\xi \cdot D_1$ or $D_\xi \cdot D_2$,
and one can see the charge vector is $l^{(\xi)}=(1,1,1,-3)$, 
which is a basis of the lattice of invariants $$\mathbb{L}=\{(l_0,l_1,l_2,l_\xi)\in\mathbb{Z}^4\mid \sum\limits_{j=0}^2l_j\tilde{v}_j+l_\xi\tilde{v}_\xi=0\}.$$ Suppose that $L_0$ intersects the noncompact toric curve given by $\overline{v_1v_2}$ with framing $f\in\mathbb{Z}_{\geqslant 0}$. Then the charge vectors for $(X,L_0)$ are: 
\begin{eqnarray*}
\begin{array}{cccc|cc}
\tilde l^{(\xi)}=(1,&1,&1,&-3,&0,&0),\\
\tilde l^{(0)}=(0,&f,&-f-1,&1,&1,&-1).
\end{array}
\end{eqnarray*}
Then the associated variables are:
\begin{eqnarray*}
q_\xi&=&z_0z_1z_2\cdot z_\xi^{-3},\\
q_0&=&z_1^fz_2^{-f-1}\cdot z_\xi\cdot\frac{z_+}{z_-},
\end{eqnarray*}
and the corresponding differential operators are:
\begin{eqnarray*}
\mathcal D_1&=&\Theta_{q_\xi}\bigg(\Theta_{q_\xi}+f\Theta_{q_0}\bigg)\bigg(\Theta_{q_\xi}-(f+1)\Theta_{q_0}\bigg)-q_\xi\bigg(-3\Theta_{q_\xi}+\Theta_{q_0}\bigg)_3,\\
\mathcal D_0&=&\Theta_{q_0}\bigg(-3\Theta_{q_\xi}+\Theta_{q_0}\bigg)\bigg(\Theta_{q_\xi}+f\Theta_{q_0}\bigg)_f+q_0\Theta_{q_0}\bigg(\Theta_{q_\xi}-(f+1)\Theta_{q_0}\bigg)_{f+1}.
\end{eqnarray*}
So the open-close mirror map is given by:
\begin{eqnarray*}
\Omega_\xi&=&\log q_\xi+3\sum\limits_{m_\xi=1}^\infty\frac{(3m_\xi-1)!}{(m_\xi!)^3}(-q_\xi)^{m_\xi},\\
\Omega_0&=&\log q_0-\sum\limits_{m_\xi=1}^\infty\frac{(3m_\xi-1)!}{(m_\xi!)^3}(-q_\xi)^{m_\xi},
\end{eqnarray*}
and the superpotential is: $$W=\sum\limits_{\substack{m_0\geqslant 1,m_\xi\geqslant0,\\(f+1)m_0>m_\xi}}\frac{1}{\Gamma(1+m_\xi)}\cdot\frac{1}{\Gamma(1+m_0-3m_\xi)}\cdot\frac{\Gamma(m_0(f+1)-m_\xi)}{\Gamma(1+m_0f+m_\xi)}(-q_\xi)^{m_\xi}\frac{[(-1)^fq_0]^{m_0}}{m_0}.$$

For the orbifold $\mathcal X=[\mathbb C^3/\mathbb Z_3(1,1,1)]$, note that the small part of the group consists of $\xi$, with $$F_\xi^{(0)}=F_\xi^{(1)}=F_\xi^{(2)}=\frac{1}{3}.$$ So the charge vector for $\mathcal X$ is $(-1,-1,-1,3)$, which is a basis of $\mathbb L$. Note that $\mathcal L_0$ is located on the $z_0$-axis, and the corresponding weight is $$(\lambda_0,\lambda_1,\lambda_2)=(\frac{1}{3},-a,a-\frac{1}{3}).$$ So the charge vectors for $(\mathcal X,\mathcal L_0)$ are:
\begin{eqnarray*}
\begin{array}{cccccc|cc}
\hat l^{(\xi)}&=&(-1,&-1,&-1,&3,&0,&0),\\
\hat l^{(0)}&=&(1,&-3a,&3a-1,&0,&3,&-3).
\end{array}
\end{eqnarray*}
The corresponding variables are:
\begin{eqnarray*}
x_\xi&=&z_0^{-\frac{1}{3}}z_1^{-\frac{1}{3}}z_2^{-\frac{1}{3}}\cdot z_\xi,\\
x_0&=&z_0^{\frac{1}{3}}z_1^{-a}z_2^{a-\frac{1}{3}}\cdot\frac{z_+}{z_-},
\end{eqnarray*}
the orbifold open-close mirror map is given by:
\begin{eqnarray*}
X_\xi&=&-\sum\limits_{\substack{k_\xi\geqslant 0,\\3\mid k_\xi-1}}\bigg(\frac{\Gamma(\frac{k_\xi}{3})}{\Gamma(\frac{1}{3})}\bigg)^3\cdot\frac{(-x_\xi)^{k_\xi}}{k_{\xi}!},\\
X_0&=&(-1)^{1+\lambda_2}x_0,
\end{eqnarray*}
and the orbifold disc potential is given by:
\begin{eqnarray*}
&&\mathcal F_{0,1}^{(\mathcal X,\mathcal L_0)}(X(x),X_0(x_0);a)\\
&=&\sum\limits_{\substack{d\geqslant 1,k_\xi\geqslant 0,\\\langle\frac{k_\xi}{3}\rangle=\langle\frac{d}{3}\rangle}}(-1)^{\lfloor\frac{k_\xi}{3}\rfloor}\frac{1}{\Gamma(1-\frac{k_\xi}{3}+\lambda_0d)}\cdot\frac{\Gamma(\frac{k_\xi}{3}-\lambda_2d)}{\Gamma(1-\frac{k_\xi}{3}+\lambda_1d)}\cdot\frac{x_\xi^{k_\xi}}{k_\xi!}\cdot\frac{((-1)^{1+\lambda_2}x_0)^d}{d}
\end{eqnarray*}

To compare $W$ with $\mathcal F_{0,1}^{(\mathcal X,\mathcal L_0)}$, first note that $\{\tilde l^{(\xi)},\tilde l^{(0)}\}$ and $\{\hat l^{(\xi)},\hat l^{(0)}\}$ span the same linear subspace in $\mathbb Q^{6}$ if and only if $$a=-f-\frac{1}{3},$$ which gives the correspondence between $a$ and $f$, and so we obtain the change of variables: 
\begin{eqnarray*}
x_\xi&=&q_\xi^{-\frac{1}{3}},\\
x_0&=&(-1)^{1+\lambda_2+f}q_0q_\xi^{\frac{1}{3}},
\end{eqnarray*}
where the phase factor $(-1)^{1+\lambda_2+f}$ is included for convenience of comparison. Via the above identification, we have:
\begin{eqnarray*}
&&\mathcal F_{0,1}^{(\mathcal X,\mathcal L_0)}(X(x(q)),X_0(x_0(q,q_0));a)\\
&=&\sum\limits_{\substack{d\geqslant 1,k_\xi\geqslant 0,\\\frac{d}{3}-\frac{k_\xi}{3}\in\mathbb Z_{\geqslant 0}}}(-1)^{\lfloor\frac{k_\xi}{3}\rfloor}\frac{1}{\Gamma(1-\frac{k_\xi}{3}+\frac{d}{3})}\cdot\frac{\Gamma\bigg(\frac{k_\xi}{3}-(a-\frac{1}{3})d\bigg)}{\Gamma(1-\frac{k_\xi}{3}-ad)}\cdot\frac{q_\xi^{\frac{d-k_\xi}{3}}}{k_\xi!}\cdot\frac{((-1)^{f}q_0)^d}{d}
\end{eqnarray*}
Set 
\begin{eqnarray*}
\left\{\begin{array}{ccc}m_0&=&d\in\mathbb Z_{\geqslant 1},\\\\m_\xi&=&\frac{d-k_\xi}{3}\in\mathbb Z_{\geqslant 0},\end{array}\right.
\end{eqnarray*}
and then we have
\begin{eqnarray*}
&&\mathcal F_{0,1}^{(\mathcal X,\mathcal L_0)}(X(x(\vec q)),X_0(x_0(\vec q,q_0));a)\\
&=&\sum\limits_{\substack{m_0\geqslant 1,m_\xi\geqslant 0,\\m_0-3m_\xi\geqslant 0}}(-1)^{\lfloor\frac{m_0}{3}-m_\xi\rfloor}\frac{1}{\Gamma(1+m_\xi)}\cdot\frac{\Gamma\bigg((-a+\frac{2}{3})m_0-m_\xi\bigg)}{\Gamma\bigg(1-(a+\frac{1}{3})m_0+m_\xi\bigg)}\cdot\frac{q_\xi^{m_\xi}}{(m_0-3m_\xi)!}\cdot\frac{((-1)^{f}q_0)^{m_0}}{m_0}\\
&=&W(q,q_0;f)\quad\quad(\textrm{ via }a=-f-\frac{1}{3}).
\end{eqnarray*}
We can see that the last equality also holds via $a=f+\frac{2}{3}$. The choice $a=f+\frac{2}{3}$ was observed in \cite{BC}.

\subsection{Example: $\mathbb{C}^3/\mathbb{Z}_4(2,1,1)$}
Let $\xi$ be the generator of $\mathbb{Z}_4$. The action is given by $$\xi.(z_0,z_1,z_2)=(\xi_4^2\cdot z_0,\xi_4\cdot z_1,\xi_4\cdot z_2).$$ Hence the small part of $\mathbb{Z}_4$ consists of $\xi,2\xi$, with
\begin{eqnarray*}
(F_{\xi}^{(0)},F_{\xi}^{(1)},F_{\xi}^{(2)})&=&(\frac{1}{2},\frac{1}{4},\frac{1}{4}),\\
(F_{2\xi}^{(0)},F_{2\xi}^{(1)},F_{2\xi}^{(2)})&=&(0,\frac{1}{2},\frac{1}{2}).
\end{eqnarray*} An integral basis of the lattice of invariants is 
\ben
[\varepsilon_0,\varepsilon_1,\varepsilon_2]
=[e_0,e_1,e_2]
\begin{bmatrix} 0&0&1 \\ 0 &-1&1\\ 4 & 1 &1
\end{bmatrix}
\een
Let
\begin{eqnarray*}
\tilde{v}_0=\left[\begin{array}{c}0\\0\\1\end{array}\right],\tilde{v}_1=\left[\begin{array}{c}0\\-1\\1\end{array}\right],\tilde{v}_2=\left[\begin{array}{c}4\\1\\1\end{array}\right],\tilde{v}_{\xi}=\left[\begin{array}{c}1\\0\\1\end{array}\right],\tilde{v}_{2\xi}=\left[\begin{array}{c}2\\0\\1\end{array}\right].
\end{eqnarray*}
Then the toric resolution $X$ is given by Figure 3.

Let $D_i$ be the corresponding toric divisor for $i\in\{0,1,2,\xi,2\xi\}$. Then the intersection numbers are:
\begin{displaymath}
\begin{array}{c|ccccc}
&D_0&D_1&D_2&D_{\xi}&D_{2\xi}\\
\hline
D_\xi\cdot D_0&2&1&1&-4&0\\
D_\xi\cdot D_1&1&0&0&-2&1\\
D_\xi\cdot D_2&1&0&0&-2&1\\
D_\xi\cdot D_{2\xi}&0&1&1&0&-2
\end{array}
\end{displaymath}
From the dual graph 
\begin{center}
\setlength{\unitlength}{1cm}
\begin{picture}(0,10)(0,-3)
\drawline(-1,0)(0,0)(0,3)(1,0)(1,-1)(0,0)
\drawline(-0.5,5)(0,3)
\drawline(1,-1)(1.5,-2)
\drawline(1,0)(1.5,-1)
\dottedline[.]{0.2}(-2,0)(-1,0)
\dottedline[.]{0.2}(-1,7)(-0.5,5)
\dottedline[.]{0.2}(2,-3)(1.5,-2)
\dottedline[.]{0.2}(2,-2)(1.5,-1)
\put(-1,1){$D_0$}
\put(-1,-1){$D_1$}
\put(2,0){$D_2$}
\put(.3,0){$D_\xi$}
\put(1.2,-1.5){$D_{2\xi}$}
\put(-1,-2){Figure 13}
\end{picture}
\end{center}
we can choose $C_\xi = D_\xi \cdot D_1$ or $D_\xi \cdot D_2$ and 
$C_{2\xi} = D_\xi \cdot D_{2\xi}$,
and then the charge vectors are:
\begin{displaymath}
\begin{array}{ccccccc}
l^{(\xi)}&=&(1,&0,&0,&-2,&1),\\
l^{(2\xi)}&=&(0,&1,&1,&0,&-2),
\end{array}
\end{displaymath}
which form a basis of the lattice of invariants $$\mathbb{L}=\{(l_0,l_1,l_2,l_\xi,l_{2\xi})\in\mathbb{Z}^5\mid \sum\limits_{j=0}^2l_j\tilde{v}_j+l_\xi\tilde{v}_\xi+l_{2\xi}\tilde{v}_{2\xi}=0\}.$$

Suppose that $L_0$ intersects the noncompact toric curve given by $\overline{v_{2\xi}v_{2}}$ with $f\geqslant 0$. The charge vectors for $(X,L_0)$ are:
\begin{displaymath}
\begin{array}{ccccccc|cc}
\tilde l^{(\xi)}&=&(1,&0,&0,&-2,&1,&0,&0),\\
\tilde l^{(2\xi)}&=&(0,&1,&1,&0,&-2,&0,&0),\\
\tilde l^{(0)}&=&(0,&0,&-f-1,&1,&f,&1,&-1).
\end{array}
\end{displaymath}
Then the associated variables are:
\begin{eqnarray*}
q_\xi&=&z_0\cdot z_\xi^{-2}z_{2\xi},\\
q_{2\xi}&=&z_1z_2\cdot z_{2\xi}^{-2},\\
q_0&=&z_2^{-f-1}\cdot z_\xi z_{2\xi}^f\cdot\frac{z_+}{z_-},
\end{eqnarray*}
and the corresponding differential operators are:
\begin{eqnarray*}
\mathcal D_1&=&\Theta_{q_\xi}\bigg(\Theta_{q_\xi}-2\Theta_{q_{2\xi}}+f\Theta_{q_0}\bigg)-q_\xi\bigg(-2\Theta_{q_\xi}+\Theta_{q_0}\bigg)_2,\\
\mathcal D_2&=&\Theta_{q_{2\xi}}\bigg(\Theta_{q_{2\xi}}-(f+1)\Theta_{q_0}\bigg)-q_{2\xi}\bigg(\Theta_{q_\xi}-2\Theta_{q_{2\xi}}+f\Theta_{q_0}\bigg)_2,\\
\mathcal D_0&=&\Theta_{q_0}\bigg(-2\Theta_{q_\xi}+\Theta_{q_0}\bigg)\bigg(\Theta_{q_\xi}-2\Theta_{q_{2\xi}}+f\Theta_{q_0}\bigg)_f+q_0\Theta_{q_0}\bigg(\Theta_{q_{2\xi}}-(f+1)\Theta_{q_0}\bigg)_{f+1}.
\end{eqnarray*}
So the open-close mirror map is given by:
\begin{eqnarray*}
\Omega_\xi&=&\log q_\xi+2\sum\limits_{m_\xi>0,m_{2\xi}\geqslant 0}\frac{(2m_\xi-1)!}{m_\xi!(m_{2\xi}!)^2(m_\xi-2m_{2\xi})!}q_\xi^{m_\xi}q_{2\xi}^{m_{2\xi}}-\sum\limits_{m_{2\xi}>0}\frac{(2m_{2\xi}-1)!}{(m_{2\xi}!)^2}q_{2\xi}^{m_{2\xi}},\\
\Omega_{2\xi}&=&\log q_{2\xi}+2\sum\limits_{m_{2\xi}>0}\frac{(2m_{2\xi}-1)!}{(m_{2\xi}!)^2}q_{2\xi}^{m_{2\xi}},\\
\Omega_0&=&\log q_0-\sum\limits_{m_\xi>0,m_{2\xi}\geqslant 0}\frac{(2m_\xi-1)!}{m_\xi!(m_{2\xi}!)^2(m_\xi-2m_{2\xi})!}q_\xi^{m_\xi}q_{2\xi}^{m_{2\xi}}-f\sum\limits_{m_{2\xi}>0}\frac{(2m_{2\xi}-1)!}{(m_{2\xi}!)^2}q_{2\xi}^{m_{2\xi}},
\end{eqnarray*}
and the superpotential is:
\begin{eqnarray*}
W_{2\xi,2}&=&\sum\limits_{\substack{m_0\geqslant 1,\\m_\xi,m_{2\xi}\geqslant 0,\\(f+1)m_0-m_{2\xi}>0}}\frac{1}{\Gamma(1+m_\xi)\Gamma(1+m_{2\xi})}\\
&&\quad\cdot\frac{1}{\Gamma(1+m_0-2m_{\xi})}\cdot\frac{\Gamma(m_0(f+1)-m_{2\xi})}{\Gamma(1+m_0f+m_\xi-2m_{2\xi})}q_\xi^{m_\xi}(-q_{2\xi})^{m_{2\xi}}\frac{[(-1)^fq_0]^{m_0}}{m_0}.
\end{eqnarray*}

For the orbifold $\mathcal X=[\mathbb C^3/\mathbb Z_4(2,1,1)]$, note that the small part of the group consists of $\xi,2\xi$, with
\begin{eqnarray*}
(F_{\xi}^{(0)},F_{\xi}^{(1)},F_{\xi}^{(2)})&=&(\frac{1}{2},\frac{1}{4},\frac{1}{4}),\\
(F_{2\xi}^{(0)},F_{2\xi}^{(1)},F_{2\xi}^{(2)})&=&(0,\frac{1}{2},\frac{1}{2}).
\end{eqnarray*}
So the charge vectors for $\mathcal X$ are:
\begin{eqnarray*}
\begin{array}{ccccc}
(-2,&-1,&-1,&4,&0),\\
(0,&-1,&-1,&0,&2),
\end{array}
\end{eqnarray*}
which form a basis of $\mathbb L\otimes\mathbb Q$. Note that $\mathcal L_0$ is located on the $z_0$-axis, and the corresponding weight is $$(\lambda_0,\lambda_1,\lambda_2)=(\frac{1}{2},-a,a-\frac{1}{2}).$$ So the charge vectors for $(\mathcal X,\mathcal L_0)$ are:
\begin{eqnarray*}
\begin{array}{ccccccc|cc}
\hat l^{(\xi)}&=&(-2,&-1,&-1,&4,&0,&0,&0),\\
\hat l^{(2\xi)}&=&(0,&-1,&-1,&0,&2,&0,&0),\\
\hat l^{(0)}&=&(1,&-2a,&2a-1,&0,&0,&2,&-2).
\end{array}
\end{eqnarray*}
The corresponding variables are:
\begin{eqnarray*}
x_\xi&=&z_0^{-\frac{1}{2}}z_1^{-\frac{1}{4}}z_2^{-\frac{1}{4}}\cdot z_\xi,\\
x_{2\xi}&=&z_1^{-\frac{1}{2}}z_2^{-\frac{1}{2}}\cdot z_{2\xi},\\
x_0&=&z_0^{\frac{1}{2}}z_1^{-a}z_2^{a-\frac{1}{2}}\cdot\frac{z_+}{z_-},
\end{eqnarray*}
the orbifold open-close mirror map is given by:
\begin{eqnarray*}
X_\xi&=&-\sum\limits_{\substack{k_\xi,k_{2\xi}\geqslant 0,\\4\mid k_\xi+2k_{2\xi}-1}}\frac{\Gamma(\frac{k_\xi}{2})}{\Gamma(\frac{1}{2})}\bigg(\frac{\Gamma(\frac{k_\xi}{4}+\frac{k_{2\xi}}{2})}{\Gamma(\frac{1}{4})}\bigg)^2\cdot\frac{(-x_\xi)^{k_\xi}}{k_{\xi}!}\cdot\frac{(-x_{2\xi})^{k_{2\xi}}}{k_{2\xi}!},\\
X_{2\xi}&=&-\sum\limits_{\substack{k_{2\xi}\geqslant 0,\\2\mid k_{2\xi}-1}}\bigg(\frac{\Gamma(\frac{k_{2\xi}}{2})}{\Gamma(\frac{1}{2})}\bigg)^2\cdot\frac{(-x_{2\xi})^{k_{2\xi}}}{k_{2\xi}},\\
X_0&=&(-1)^{1+\lambda_2}x_0,
\end{eqnarray*}
and the orbifold disc potential is given by:
\begin{eqnarray*}
&&\mathcal F_{0,1}^{(\mathcal X,\mathcal L_0)}(X(x),X_0(x_0);a)\\
&=&\frac{1}{2}\sum\limits_{\substack{d\geqslant 1,k_\xi,k_{2\xi}\geqslant 0,\\\langle\frac{k_\xi}{2}\rangle=\langle\frac{d}{2}\rangle}}(-1)^{\lfloor\frac{k_\xi}{4}+\frac{k_{2\xi}}{2}\rfloor}\frac{1}{\Gamma(1-\frac{k_\xi}{2}+\lambda_0d)}\cdot\frac{\Gamma(\frac{k_\xi}{4}+\frac{k_{2\xi}}{2}-\lambda_2d)}{\Gamma(1-\frac{k_\xi}{4}-\frac{k_{2\xi}}{2}+\lambda_1d)}\\
&&\quad\quad\cdot\frac{x_{\xi}^{k_{\xi}}}{k_{\xi}!}\cdot\frac{x_{2\xi}^{k_{2\xi}}}{k_{2\xi}!}\cdot\frac{((-1)^{1+\lambda_2}x_0)^d}{d}.
\end{eqnarray*}

To compare $W$ with $\mathcal F_{0,1}^{(\mathcal X,\mathcal L_0)}$, first note that $\{\tilde l^{(\xi)},\tilde l^{(2\xi)},\tilde l^{(0)}\}$ and $\{\hat l^{(\xi)},\hat l^{(2\xi)},\hat l^{(0)}\}$ span the same linear subspace in $\mathbb Q^{7}$ if and only if $$a=-\frac{f}{2}-\frac{1}{4},$$ which gives the correspondence between $a$ and $f$, and so we obtain the change of variables: 
\begin{eqnarray*}
x_\xi&=&q_\xi^{-\frac{1}{2}}q_{2\xi}^{-\frac{1}{4}},\\
x_{2\xi}&=&q_{2\xi}^{-\frac{1}{2}},\\
x_0&=&(-1)^{1+\lambda_2+f}q_0q_\xi^{\frac{1}{2}}q_{2\xi}^{\frac{f}{2}+\frac{1}{4}},
\end{eqnarray*}
where the phase factor $(-1)^{1+\lambda_2+f}$ is included for convenience of comparison. Via the above identification, we have:
\begin{eqnarray*}
&&\mathcal F_{0,1}^{(\mathcal X,\mathcal L_0)}(X(x(\vec q)),X_0(x_0(\vec q,q_0));a)\\
&=&\frac{1}{2}\sum\limits_{\substack{d\geqslant 1,k_\xi,k_{2\xi}\geqslant 0,\\\frac{d}{2}-\frac{k_\xi}{2}\in\mathbb Z_{\geqslant 0}}}(-1)^{\lfloor\frac{k_\xi}{4}+\frac{k_{2\xi}}{2}\rfloor}\frac{1}{\Gamma(1-\frac{k_\xi}{2}+\frac{d}{2})}\cdot\frac{\Gamma\bigg(\frac{k_\xi}{4}+\frac{k_{2\xi}}{2}-(a-\frac{1}{2})d\bigg)}{\Gamma(1-\frac{k_\xi}{4}-\frac{k_{2\xi}}{2}-ad)}\\
&&\quad\quad\cdot\frac{q_\xi^{\frac{d-k_\xi}{2}}}{k_\xi!}\cdot\frac{q_{2\xi}^{(\frac{f}{2}+\frac{1}{4})d-\frac{k_\xi}{4}-\frac{k_{2\xi}}{2}}}{k_{2\xi}!}\cdot\frac{((-1)^{f}q_0)^d}{d}\\
&=&\frac{(-1)^f}{2}q_0\sum\limits_{k_{2\xi}\geqslant 0}(-1)^{\lfloor\frac{1}{4}+\frac{k_{2\xi}}{2}\rfloor}\frac{\Gamma(1+\frac{f}{2}+\frac{k_{2\xi}}{2})}{\Gamma(1+\frac{f}{2}-\frac{k_{2\xi}}{2})}\cdot\frac{q_{2\xi}^{\frac{f}{2}-\frac{k_{2\xi}}{2}}}{k_{2\xi}!}\\
&&\quad\quad+\frac{q_0^2q_\xi}{4}\sum\limits_{k_{2\xi}\geqslant 0}(-1)^{\lfloor\frac{k_{2\xi}}{2}\rfloor}\frac{\Gamma(\frac{3}{2}+f+\frac{k_{2\xi}}{2})}{\Gamma(\frac{3}{2}-f-\frac{k_{2\xi}}{2})}\cdot\frac{q_{2\xi}^{\frac{1}{2}+f-\frac{k_{2\xi}}{2}}}{k_{2\xi}!}\\
&&\quad\quad+\frac{q_0^2}{8}\sum\limits_{k_{2\xi}\geqslant 0}(-1)^{\lfloor\frac{1}{2}+\frac{k_{2\xi}}{2}\rfloor}\frac{\Gamma(2+f+\frac{k_{2\xi}}{2})}{\Gamma(1+f-\frac{k_{2\xi}}{2})}\cdot\frac{q_{2\xi}^{f-\frac{k_{2\xi}}{2}}}{k_{2\xi}!}+\cdots
\end{eqnarray*}
Set 
\begin{eqnarray*}
\left\{\begin{array}{ccl}m_0&=&d\in\mathbb Z_{\geqslant 1},\\\\m_\xi&=&\frac{d-k_\xi}{2}\in\mathbb Z_{\geqslant 0},\\\\m_{2\xi}&=&(\frac{f}{2}+\frac{1}{4})d-\frac{k_\xi}{4}-\frac{k_{2\xi}}{2},\end{array}\right.
\end{eqnarray*}
and then we have
\begin{eqnarray*}
&&\mathcal F_{0,1}^{(\mathcal X,\mathcal L_0)}(X(x(\vec q)),X_0(x_0(\vec q,q_0));a)\\
&=&\frac{1}{2}\sum\limits_{\substack{m_0\geqslant 1,m_\xi\geqslant 0,\\\cdots}}(-1)^{\lfloor(\frac{f}{2}+\frac{1}{4})m_0-m_{2\xi}\rfloor}\frac{1}{\Gamma(1+m_\xi)}\cdot\frac{\Gamma\bigg((f+1)m_0-m_{2\xi}\bigg)}{\Gamma\bigg(1+m_{2\xi}\bigg)}\\
&&\quad\quad\cdot\frac{q_\xi^{m_\xi}}{(m_0-2m_\xi)!}\cdot\frac{q_{2\xi}^{m_{2\xi}}}{(fm_0+m_\xi-2m_{2\xi})!}\cdot\frac{((-1)^{f}q_0)^{m_0}}{m_0}.
\end{eqnarray*}
We observe that $m_{2k_{2\xi}}$'s are not necessarily nonnegative integers. We choose to ignore the bad $m_{2\xi}$'s and consider only the analytic part of $\mathcal F_{0,1}^{(\mathcal X,\mathcal L_0)}(X(x(\vec q)),X_0(x_0(\vec q,q_0));a)$:
\begin{eqnarray*}
&&F(\vec q,q_0;f)\\
&=&\frac{1}{2}\sum\limits_{\substack{m_0\geqslant 1,\\m_\xi,m_{2\xi}\geqslant 0,\\(f+1)m_0-m_{2\xi}>0}}(-1)^{\lfloor(\frac{f}{2}+\frac{1}{4})m_0-m_{2\xi}\rfloor}\frac{1}{\Gamma(1+m_\xi)}\cdot\frac{\Gamma\bigg((f+1)m_0-m_{2\xi}\bigg)}{\Gamma\bigg(1+m_{2\xi}\bigg)}\\
&&\quad\quad\cdot\frac{q_\xi^{m_\xi}}{(m_0-2m_\xi)!}\cdot\frac{q_{2\xi}^{m_{2\xi}}}{(fm_0+m_\xi-2m_{2\xi})!}\cdot\frac{((-1)^{f}q_0)^{m_0}}{m_0}\\
&=&\frac{1}{2}W(q,q_0;f).
\end{eqnarray*}

The authors of \cite{BC} has shown a correspondence for disc potentials for $f=1,a=\frac{1}{2}$. Their choice of $a$ is different from ours. It is interesting to understand this discrepancy.

\subsection{Example: $\mathbb{C}^3/\mathbb{Z}_5(3,1,1)$}
Let $\xi$ be the generator of $\mathbb{Z}_5$. The action  is given by $$\xi.(z_0,z_1,z_2)=(\xi_5^3\cdot z_0,\xi_5\cdot z_1,\xi_5\cdot z_2).$$ Hence the small part of $\mathbb{Z}_5$ consists of $\xi,2\xi$, with
\begin{eqnarray*}
(F_{\xi}^{(0)},F_{\xi}^{(1)},F_{\xi}^{(2)})&=&(\frac{3}{5},\frac{1}{5},\frac{1}{5}),\\
(F_{2\xi}^{(0)},F_{2\xi}^{(1)},F_{2\xi}^{(2)})&=&(\frac{1}{5},\frac{2}{5},\frac{2}{5}).
\end{eqnarray*} An integral basis of the lattice of invariants is
\ben
[\varepsilon_0,\varepsilon_1,\varepsilon_2]
=[e_0,e_1,e_2]
\begin{bmatrix} 0&0&1 \\ 0 &-1&1 \\ 5 & 1 & 1
\end{bmatrix}
\een
So let
\begin{eqnarray*}
\tilde{v}_0=\left[\begin{array}{c}0\\0\\1\end{array}\right],\tilde{v}_1=\left[\begin{array}{c}0\\-1\\1\end{array}\right],\tilde{v}_2=\left[\begin{array}{c}5\\1\\1\end{array}\right],\tilde{v}_{\xi}=\left[\begin{array}{c}1\\0\\1\end{array}\right],\tilde{v}_{2\xi}=\left[\begin{array}{c}2\\0\\1\end{array}\right].
\end{eqnarray*}
and the toric resolution $X$ is given by Figure 4.

 $D_i$ be the corresponding toric divisor for $i\in\{0,1,2,\xi,2\xi\}$. Then the intersection numbers are:
\begin{displaymath}
\begin{array}{c|ccccc}
&D_0&D_1&D_2&D_{\xi}&D_{2\xi}\\
\hline
D_\xi\cdot D_0&3&1&1&-5&0\\
D_\xi\cdot D_1&1&0&0&-2&1\\
D_\xi\cdot D_2&1&0&0&-2&1\\
D_\xi\cdot D_{2\xi}&0&1&1&1&-3\\
D_{2\xi}\cdot D_1&0&1&1&1&-3\\
D_{2\xi}\cdot D_2&0&1&1&1&-3
\end{array}
\end{displaymath}
From the dual graph,
\begin{center}
\setlength{\unitlength}{1cm}
\begin{picture}(0,15)(0,-5)
\drawline(-1,0)(0,0)(1,-1)(2,-3)(1,0)(1,-1)
\drawline(2,-3)(2.5,-4.25)
\drawline(0,0)(0,4)(1,0)
\drawline(0,4)(-0.5,6.5)
\dottedline[.]{0.2}(-2,0)(-1,0)
\dottedline[.]{0.2}(-1,9)(-0.5,6.5)
\dottedline[.]{0.2}(3,-5.5)(2.5,-4.25)
\put(-1,1){$D_0$}
\put(-1,-1){$D_1$}
\put(2,0){$D_2$}
\put(0.2,0){$D_\xi$}
\put(1,-1){$D_{2\xi}$}
\put(-1,-3){Figure 14}
\end{picture}
\end{center}
we can choose $C_{2\xi}= D_{2\xi} \cdot D_1$ or $D_{2\xi} \cdot D_2$ or $D_{2\xi} \cdot D_\xi$,
and $C_\xi = D_\xi \cdot D_1$ or $D_\xi \cdot D_2$. Then the charge vectors are:
\begin{displaymath}
\begin{array}{ccccccc}
l^{(\xi)}&=&(1,&0,&0,&-2,&1),\\
l^{(2\xi)}&=&(0,&1,&1,&1,&-3),
\end{array}
\end{displaymath}
which form a basis of the lattice of invariants $$\mathbb{L}=\{(l_0,l_1,l_2,l_\xi,l_{2\xi})\in\mathbb{Z}^5\mid \sum\limits_{j=0}^2l_j\tilde{v}_j+l_\xi\tilde{v}_\xi+l_{2\xi}\tilde{v}_{2\xi}=0\}.$$

Suppose that $L_0$ intersects the noncompact toric curve given by $\overline{v_1v_2}$ with framing $f\in\mathbb{Z}_{\geqslant 0}$. The charge vectors for $(X,L_0)$ are:
\begin{displaymath}
\begin{array}{ccccccc|cc}
\tilde l^{(\xi)}&=&(1,&0,&0,&-2,&1,&0,&0),\\
\tilde l^{(2\xi)}&=&(0,&1,&1,&1,&-3,&0,&0),\\
\tilde l^{(0)}&=&(0,&f,&-f-1,&0,&1,&1,&-1).
\end{array}
\end{displaymath}
Then the associated variables are:
\begin{eqnarray*}
q_\xi&=&z_0\cdot z_\xi^{-2}z_{2\xi},\\
q_{2\xi}&=&z_1z_2\cdot z_\xi z_{2\xi}^{-3},\\
q_0&=&z_1^fz_2^{-f-1}\cdot z_{2\xi}\cdot\frac{z_+}{z_-},
\end{eqnarray*}
and the corresponding differential operators are:
\begin{eqnarray*}
\mathcal D_1&=&\Theta_{q_\xi}\bigg(\Theta_{q_\xi}-3\Theta_{q_{2\xi}}+\Theta_{q_0}\bigg)-q_\xi\bigg(-2\Theta_{q_\xi}+\Theta_{q_{2\xi}}\bigg)_2,\\
\mathcal D_2&=&\bigg(\Theta_{q_{2\xi}}+f\Theta_{q_0}\bigg)\bigg(\Theta_{q_{2\xi}}-(f+1)\Theta_{q_0}\bigg)\bigg(-2\Theta_{q_\xi}+\Theta_{q_{2\xi}}\bigg)-q_{2\xi}\bigg(\Theta_{q_\xi}-3\Theta_{q_{2\xi}}+\Theta_{q_0}\bigg)_3,\\
\mathcal D_0&=&\Theta_{q_0}\bigg(\Theta_{q_{2\xi}}+f\Theta_{q_0}\bigg)_f\bigg(\Theta_{q_\xi}-3\Theta_{q_{2\xi}}+\Theta_{q_0}\bigg)+q_0\Theta_{q_0}\bigg(\Theta_{q_{2\xi}}-(f+1)\Theta_{q_0}\bigg)_{f+1}.
\end{eqnarray*}
So the open-close mirror map is given by:
\begin{eqnarray*}
\Omega_\xi&=&\log q_\xi+2\sum\limits_{2m_\xi>m_{2\xi}\geqslant 0}\frac{(2m_\xi-m_{2\xi}-1)!}{m_\xi!(m_{2\xi})^2(m_\xi-3m_{2\xi})!}q_\xi^{m_\xi}(-q_{2\xi})^{m_{2\xi}}\\
&&\quad\quad-\sum\limits_{3m_{2\xi}>m_\xi\geqslant 0}\frac{(3m_{2\xi}-m_\xi-1)!}{m_\xi!(m_{2\xi})^2(m_{2\xi}-2m_\xi)!}(-q_\xi)^{m_\xi}(-q_{2\xi})^{m_{2\xi}},\\
\Omega_{2\xi}&=&\log q_{2\xi}-\sum\limits_{2m_\xi>m_{2\xi}\geqslant 0}\frac{(2m_\xi-m_{2\xi}-1)!}{m_\xi!(m_{2\xi})^2(m_\xi-3m_{2\xi})!}q_\xi^{m_\xi}(-q_{2\xi})^{m_{2\xi}}\\
&&\quad\quad+3\sum\limits_{3m_{2\xi}>m_\xi\geqslant 0}\frac{(3m_{2\xi}-m_\xi-1)!}{m_\xi!(m_{2\xi})^2(m_{2\xi}-2m_\xi)!}(-q_\xi)^{m_\xi}(-q_{2\xi})^{m_{2\xi}},\\
\Omega_0&=&\log q_0-\sum\limits_{3m_{2\xi}>m_\xi\geqslant 0}\frac{(3m_{2\xi}-m_\xi-1)!}{m_\xi!(m_{2\xi})^2(m_{2\xi}-2m_\xi)!}(-q_\xi)^{m_\xi}(-q_{2\xi})^{m_{2\xi}},
\end{eqnarray*}
and the superpotential is:
\begin{eqnarray*}
W&=&\sum\limits_{\substack{m_0\geqslant 1,m_\xi,m_{2\xi}\geqslant 0,\\(f+1)m_0>m_{2\xi}}}\frac{1}{\Gamma(1+m_\xi)\Gamma(1-2m_\xi+m_{2\xi})}\\
&&\quad\quad\cdot\frac{1}{\Gamma(1+m_0+m_\xi-3m_{2\xi})}\cdot\frac{\Gamma\bigg(m_0(f+1)-m_{2\xi})\bigg)}{\Gamma(1+m_0f+m_{2\xi})}\cdot q_\xi^{m_\xi}(-q_{2\xi})^{m_{2\xi}}\frac{[(-1)^fq_0]^{m_0}}{m_0}.
\end{eqnarray*}

For the orbifold $\mathcal X=[\mathbb C^3/\mathbb Z_5(3,1,1)]$, the small part of the group consists of $\xi,2\xi$, with
\begin{eqnarray*}
(F_{\xi}^{(0)},F_{\xi}^{(1)},F_{\xi}^{(2)})&=&(\frac{3}{5},\frac{1}{5},\frac{1}{5}),\\
(F_{2\xi}^{(0)},F_{2\xi}^{(1)},F_{2\xi}^{(2)})&=&(\frac{1}{5},\frac{2}{5},\frac{2}{5}).
\end{eqnarray*}
So the charge vectors for $\mathcal X$ are:
\begin{eqnarray*}
\begin{array}{ccccc}
(-3,&-1,&-1,&5,&0),\\
(-1,&-2,&-2,&0,&5),
\end{array}
\end{eqnarray*}
which form a basis of $\mathbb L\otimes\mathbb Q$. Note that $\mathcal L_0$ is located on the $z_0$-axis, and the corresponding weight is $$(\lambda_0,\lambda_1,\lambda_2)=(\frac{1}{5},-a,a-\frac{1}{5}).$$ So the charge vectors for $(\mathcal X,\mathcal L_0)$ are:
\begin{eqnarray*}
\begin{array}{ccccccc|cc}
\hat l^{(\xi)}&=&(-3,&-1,&-1,&5,&0,&0,&0),\\
\hat l^{(2\xi)}&=&(-1,&-2,&-2,&0,&5,&0,&0),\\
\hat l^{(0)}&=&(1,&-5a,&5a-1,&0,&0,&5,&-5).
\end{array}
\end{eqnarray*}
The corresponding variables are:
\begin{eqnarray*}
x_\xi&=&z_0^{-\frac{3}{5}}z_1^{-\frac{1}{5}}z_2^{-\frac{1}{5}}\cdot z_\xi,\\
x_{2\xi}&=&z_0^{-\frac{1}{5}}z_1^{-\frac{2}{5}}z_2^{-\frac{2}{5}}\cdot z_{2\xi},\\
x_0&=&z_0^{\frac{1}{5}}z_1^{-a}z_2^{a-\frac{1}{5}}\cdot\frac{z_+}{z_-},
\end{eqnarray*}
the orbifold open-close mirror map is given by:
\begin{eqnarray*}
X_\xi&=&-\sum\limits_{\substack{k_\xi,k_{2\xi}\geqslant 0,\\5\mid k_\xi+2k_{2\xi}-1}}\frac{\Gamma(\frac{3}{5}k_\xi+\frac{1}{5}k_{2\xi})}{\Gamma(\frac{3}{5})}\bigg(\frac{\Gamma(\frac{1}{5}k_\xi+\frac{2}{5}k_{2\xi})}{\Gamma(\frac{1}{5})}\bigg)^2\cdot\frac{(-x_\xi)^{k_\xi}}{k_{\xi}!}\cdot\frac{(-x_{2\xi})^{k_{2\xi}}}{k_{2\xi}!},\\
X_{2\xi}&=&-\sum\limits_{\substack{k_\xi,k_{2\xi}\geqslant 0,\\5\mid k_\xi+2k_{2\xi}-2}}\frac{\Gamma(\frac{3}{5}k_\xi+\frac{1}{5}k_{2\xi})}{\Gamma(\frac{1}{5})}\bigg(\frac{\Gamma(\frac{1}{5}k_\xi+\frac{2}{5}k_{2\xi})}{\Gamma(\frac{2}{5})}\bigg)^2\cdot\frac{(-x_\xi)^{k_\xi}}{k_{\xi}!}\cdot\frac{(-x_{2\xi})^{k_{2\xi}}}{k_{2\xi}!},\\
X_0&=&(-1)^{1+\lambda_2}x_0,
\end{eqnarray*}
and the orbifold disc potential is given by:
\begin{eqnarray*}
&&\mathcal F_{0,1}^{(\mathcal X,\mathcal L_0)}(X(x),X_0(x_0);a)\\
&=&\sum\limits_{\substack{d\geqslant 1,k_\xi,k_{2\xi}\geqslant 0,\\\langle\frac{3}{5}k_\xi+\frac{1}{5}k_{2\xi}\rangle=\langle\frac{d}{5}\rangle}}(-1)^{\lfloor\frac{1}{5}k_\xi+\frac{2}{5}k_{2\xi}\rfloor}\frac{1}{\Gamma(1-\frac{3}{5}k_\xi-\frac{1}{5}k_{2\xi}+\lambda_0d)}\cdot\frac{\Gamma(\frac{1}{5}k_\xi+\frac{2}{5}k_{2\xi}-\lambda_2d)}{\Gamma(1-\frac{1}{5}k_\xi-\frac{2}{5}k_{2\xi}+\lambda_1d)}\\
&&\quad\quad\cdot\frac{x_{\xi}^{k_{\xi}}}{k_{\xi}!}\cdot\frac{x_{2\xi}^{k_{2\xi}}}{k_{2\xi}!}\cdot\frac{((-1)^{1+\lambda_2}x_0)^d}{d}.
\end{eqnarray*}

To compare $W$ with $\mathcal F_{0,1}^{(\mathcal X,\mathcal L_0)}$, first note that $\{\tilde l^{(\xi)},\{\tilde l^{(2\xi)},\tilde l^{(0)}\}$ and $\{\hat l^{(\xi)},\hat l^{(2\xi)},\hat l^{(0)}\}$ span the same linear subspace in $\mathbb Q^{7}$ if and only if $$a=-f-\frac{2}{5},$$ which gives the correspondence between $a$ and $f$, and so we obtain the change of variables: 
\begin{eqnarray*}
x_\xi&=&q_\xi^{-\frac{3}{5}}q_{2\xi}^{-\frac{1}{5}},\\
x_{2\xi}&=&q_\xi^{-\frac{1}{5}}q_{2\xi}^{-\frac{2}{5}},\\
x_0&=&(-1)^{1+\lambda_2+f}q_0q_\xi^{\frac{1}{5}}q_{2\xi}^{\frac{2}{5}},
\end{eqnarray*}
where the phase factor $(-1)^{1+\lambda_2+f}$ is included for convenience of comparison. Via the above identification, we have:
\begin{eqnarray*}
&&\mathcal F_{0,1}^{(\mathcal X,\mathcal L_0)}(X(x(\vec q)),X_0(x_0(\vec q,q_0));a)\\
&=&\sum\limits_{\substack{d\geqslant 1,k_\xi,k_{2\xi}\geqslant 0,\\\frac{d}{5}-\frac{3}{5}k_\xi-\frac{1}{5}k_{2\xi}\in\mathbb Z_{\geqslant 0}}}(-1)^{\lfloor\frac{1}{5}k_\xi+\frac{2}{5}k_{2\xi}\rfloor}\frac{1}{\Gamma(1-\frac{3}{5}k_\xi-\frac{1}{5}k_{2\xi}+\frac{d}{5})}\cdot\frac{\Gamma\bigg(\frac{1}{5}k_\xi+\frac{2}{5}k_{2\xi}-(a-\frac{1}{5})d\bigg)}{\Gamma(1-\frac{1}{5}k_\xi-\frac{2}{5}k_{2\xi}-ad)}\\
&&\quad\quad\cdot\frac{q_\xi^{\frac{d}{5}-\frac{3}{5}k_\xi-\frac{1}{5}k_{2\xi}}}{k_\xi!}\cdot\frac{q_{2\xi}^{\frac{2}{5}d-\frac{1}{5}k_\xi-\frac{2}{5}k_{2\xi}}}{k_{2\xi}!}\cdot\frac{((-1)^{f}q_0)^d}{d}.
\end{eqnarray*}
Set 
\begin{eqnarray*}
\left\{\begin{array}{ccl}m_0&=&d\in\mathbb Z_{\geqslant 1},\\\\m_\xi&=&\frac{d}{5}-\frac{3}{5}k_\xi-\frac{1}{5}k_{2\xi}\in\mathbb Z_{\geqslant 0},\\\\m_{2\xi}&=&\frac{2d}{5}-\frac{1}{5}k_\xi-\frac{2}{5}k_{2\xi}=2m_\xi+k_\xi\in\mathbb Z_{\geqslant 0},\end{array}\right.
\end{eqnarray*}
and then we have
\begin{eqnarray*}
&&\mathcal F_{0,1}^{(\mathcal X,\mathcal L_0)}(X(x(\vec q)),X_0(x_0(\vec q,q_0));a)\\
&=&\sum\limits_{m_0\geqslant 1,m_{2\xi}\geqslant 2m_\xi\geqslant 0}(-1)^{\lfloor\frac{2}{5}m_0-m_{2\xi}\rfloor}\frac{1}{\Gamma(1+m_\xi)}\cdot\frac{\Gamma\bigg((f+1)m_0-m_{2\xi}\bigg)}{\Gamma(1+fm_0+m_{2\xi})}\\
&&\quad\quad\cdot\frac{q_\xi^{m_\xi}}{(m_{2\xi}-2m_\xi)!}\cdot\frac{q_{2\xi}^{m_{2\xi}}}{(m_0+m_\xi-3m_{2\xi})!}\cdot\frac{((-1)^fq_0)^{m_0}}{m_0}\\
&=&W(q,q_0;f).
\end{eqnarray*}

\subsection{Example: $\mathbb{C}^3/\mathbb{Z}_6(1,2,3)$} 
In this example, the group acts on the $z_0$-axis effectively, and we test our conjecture by showing the corresponding inverse matrices.

Let $\xi$ be the generator of $\mathbb{Z}_6$. The action is given by $$\xi.(z_0,z_1,z_2)=(\xi_6\cdot z_0,\xi_6^2\cdot z_1,\xi_6^3\cdot z_2).$$ Hence the small part of $\mathbb{Z}_6$ consists of $\xi,2\xi,3\xi,4\xi$,with
\begin{eqnarray*}
(F_{\xi}^{(0)},F_{\xi}^{(1)},F_{\xi}^{(2)})&=&(\frac{1}{6},\frac{1}{3},\frac{1}{2}),\\
(F_{2\xi}^{(0)},F_{2\xi}^{(1)},F_{2\xi}^{(2)})&=&(\frac{1}{3},\frac{2}{3},0).\\
(F_{3\xi}^{(0)},F_{3\xi}^{(1)},F_{3\xi}^{(2)})&=&(\frac{1}{2},0,\frac{1}{2}).\\
(F_{4\xi}^{(0)},F_{4\xi}^{(1)},F_{4\xi}^{(2)})&=&(\frac{2}{3},\frac{1}{3},0).
\end{eqnarray*} 
An integral basis of the lattice of invariants is
\ben
[\varepsilon_0,\varepsilon_1,\varepsilon_2]
=[e_0,e_1,e_2]
\begin{bmatrix} 0&0&1 \\ 3 &0&1 \\ 0 & 2 &1
\end{bmatrix}
\een
This is not a basis obtained from our procedure. We invite the readers to apply our procedure and compare the resulting charge vectors with those demonstrated here.

We set
\begin{eqnarray*}
\tilde{w}_0=\left[\begin{array}{c}0\\0\\1\end{array}\right],\tilde{w}_1=\left[\begin{array}{c}3\\0\\1\end{array}\right],\tilde{w}_2=\left[\begin{array}{c}0\\2\\1\end{array}\right],\tilde{v}_{\xi}=\left[\begin{array}{c}1\\1\\1\end{array}\right],\tilde{v}_{2\xi}=\left[\begin{array}{c}2\\0\\1\end{array}\right],\tilde{v}_{3\xi}=\left[\begin{array}{c}0\\1\\1\end{array}\right],\tilde{v}_{4\xi}=\left[\begin{array}{c}1\\0\\1\end{array}\right].
\end{eqnarray*}
There are five different toric resolutions of $\mathbb{C}^3/\mathbb{Z}_6(1,2,3)$:
\begin{center}
\setlength{\unitlength}{1cm}
\begin{picture}(13,10)(0,-4.5)
\drawline(0,0)(1,1)(1,0)(2,0)(1,1)(3,0)(0,0)(0,2)(3,0)(1,1)(0,2)(0,1)(1,1)
\multiput(0,0)(0,1){3}{\circle*{0.15}}
\multiput(0,0)(1,0){4}{\circle*{0.15}}
\put(1,1){\circle*{0.15}}
\put(2,1){$I$}
\put(-0.2,-0.3){$v_0$}
\put(2.8,-0.3){$v_1$}
\put(-0.4,1.9){$v_2$}
\put(1,1){$v_\xi$}
\put(1.8,-0.3){$v_{2\xi}$}
\put(-0.5,0.9){$v_{3\xi}$}
\put(0.8,-0.3){$v_{4\xi}$}
\put(3.5,1){\vector(1,0){1}} \put(4.5,1){\vector(-1,0){1}}

\drawline(5,0)(8,0)(5,2)(5,0)(6,0)(5,1)(6,1)(6,0)(7,0)(5,2)(6,1)(8,0)
\put(7,1){$II$}
\multiput(5,0)(1,0){4}{\circle*{0.15}}
\multiput(5,0)(0,1){3}{\circle*{0.15}}
\put(6,1){\circle*{0.15}}
\put(8.5,1){\vector(1,0){1}} \put(9.5,1){\vector(-1,0){1}}
\put(8.5,2){\vector(1,1){1}} \put(9.5,3){\vector(-1,-1){1}}
\drawline(10,3)(13,3)(10,5)(10,3)(10,4)(11,3)(10,5)(12,3)(13,3)(11,4)(11,3)
\put(12,4){$III$}
\multiput(10,3)(1,0){4}{\circle*{0.15}}
\multiput(10,3)(0,1){3}{\circle*{0.15}}
\put(11,4){\circle*{0.15}}
\drawline(10,0)(13,0)(10,2)(10,0)(11,0)(10,1)(12,0)(10,2)(10,1)(11,1)(13,0)
\put(12,1){$IV$}
\multiput(10,0)(1,0){4}{\circle*{0.15}}
\multiput(10,0)(0,1){3}{\circle*{0.15}}
\put(11,1){\circle*{0.15}}
\put(11,-0.5){\vector(0,-1){1}} \put(11,-1.5){\vector(0,1){1}}

\drawline(10,-4)(13,-4)(10,-2)(10,-4)(11,-4)(10,-3)(12,-4)(13,-4)(10,-3)(11,-3)(13,-4)(11,-3)(10,-2)
\put(12,-3){$V$}
\multiput(10,-4)(1,0){4}{\circle*{0.15}}
\multiput(10,-4)(0,1){3}{\circle*{0.15}}
\put(11,-3){\circle*{0.15}}
\put(3,-3){Figure 15}
\end{picture}
\end{center}
For each toric resolution, 
the corresponding charge vectors form a basis of the lattice of invariants 
$$\mathbb{L}=\{(l_{0},l_{1},l_{2},l_{\xi},l_{2\xi},l_{3\xi},l_{4\xi})\in\mathbb{Z}^7
\mid\sum\limits_{j=0}^2l_{j}\tilde{w}_j+\sum\limits_{j=1}^4l_{j\xi}\tilde{v}_{j\xi}=0\}.$$ 
Let $D_i$ be the corresponding toric divisor for $i\in\{0,1,2,\xi,2\xi,3\xi,4\xi\}$. 
Then we may calculate the intersection numbers and the charge vectors as follows:
\begin{description}
\item[Phase $I$] The intersection numbers are:
\begin{displaymath}
\begin{array}{c|ccccccc}
&D_0&D_1&D_2&D_{\xi}&D_{2\xi}&D_{3\xi}&D_{4\xi}\\
\hline
D_\xi.D_0&-1&0&0&-1&0&1&1\\
D_\xi.D_1&0&0&1&-2&1&0&0\\
D_\xi.D_2&0&1&1&-3&0&1&0\\
D_\xi.D_{2\xi}&0&1&0&0&-2&0&1\\
D_\xi.D_{3\xi}&1&0&1&0&0&-2&0\\
D_\xi.D_{4\xi}&1&0&0&0&1&0&-2
\end{array}
\end{displaymath}
From the dual graph,
\begin{center}
\setlength{\unitlength}{1cm}
\begin{picture}(0,10)(0,-1.5)
\drawline(0,-1)(0,0)(1,0)(2,1)(3,6)(-1,2)(-1,1)(0,0)
\drawline(1,-1)(1,0)
\drawline(2,0)(2,1)
\drawline(3,6)(4,7.5)
\drawline(-2,2)(-1,2)
\drawline(-2,1)(-1,1)
\dottedline[.]{0.2}(0,-1)(0,-1.5)
\dottedline[.]{0.2}(1,-1)(1,-1.5)
\dottedline[.]{0.2}(2,0)(2,-1)
\dottedline[.]{0.2}(-2,1)(-2.5,1)
\dottedline[.]{0.2}(-2,2)(-2.5,2)
\dottedline[.]{0.2}(4,7.5)(4.5,8.25)
\put(0.2,1.2){$D_\xi$}
\put(0.2,-0.5){$D_{4\xi}$}
\put(-1.8,1.2){$D_{3\xi}$}
\put(1.2,-0.3){$D_{2\xi}$}
\put(-1.5,0.2){$D_0$}
\put(2.5,0.7){$D_1$}
\put(-1.5,2.5){$D_2$}
\put(-2,-1){Figure 16}
\end{picture}
\end{center}
we can choose $C_{2\xi}=D_\xi\cdot D_{2\xi},C_{3\xi}=D_\xi\cdot D_{3\xi},C_{4\xi}=D_{4\xi}\cdot D_{4\xi}$ and $C_\xi=D_\xi\cdot D_0$. Then the charge vectors are:
\begin{displaymath}
\begin{array}{ccccccccc}
l^{(\xi)}&=&(-1,&0,&0,&-1,&0,&1,&1),\\
l^{(2\xi)}&=&(0,&1,&0,&0,&-2,&0,&1),\\
l^{(3\xi)}&=&(1,&0,&1,&0,&0,&-2,&0),\\
l^{(4\xi)}&=&(1,&0,&0,&0,&1,&0,&-2).
\end{array}
\end{displaymath}
and our conjecture is tested by:
\ben
\begin{bmatrix} -1&0&1&1&-1 \\ 0 &-2&0&1&0 \\ 0 & 0 & -2&0&1\\0&1&0&-2&1\\1&0&0&0&0
\end{bmatrix}^{-1}=\begin{bmatrix}0&0&0&0&1\\2&0&1&1&2\\3&1&1&2&3\\4&1&2&2&4\\6&2&3&4&6\end{bmatrix}
\een

\item[Phase $II$] The intersection numbers are:
\begin{displaymath}
\begin{array}{c|ccccccc}
&D_0&D_1&D_2&D_\xi&D_{2\xi}&D_{3\xi}&D_{4\xi}\\
\hline
D_{3\xi}.D_{4\xi}&1&0&0&1&0&-1&-1\\
D_\xi.D_1&0&0&1&-2&1&0&0\\
D_\xi.D_2&0&1&1&-3&0&1&0\\
D_\xi.D_{2\xi}&0&1&0&0&-2&0&1\\
D_\xi.D_{3\xi}&0&0&1&-1&0&-1&1\\
D_\xi.D_{4\xi}&0&0&0&-1&1&1&-1
\end{array}
\end{displaymath}
From the dual graph,
\begin{center}
\setlength{\unitlength}{1cm}
\begin{picture}(0,9)(0,-2.5)
\drawline(-1,-2)(-1,-1)(0,0)(1,0)(2,1)(4,5)(0,1)(0,0)
\drawline(-2,-1)(-1,-1)
\drawline(-1,1)(0,1)
\drawline(4.5,5.75)(4,5)
\drawline(2,1)(2,0)
\drawline(1,0)(1,-1)
\dottedline[.]{0.2}(-1,-2)(-1,-2.5)
\dottedline[.]{0.2}(1,-1)(1,-1.5)
\dottedline[.]{0.2}(2,0)(2,-0.5)
\dottedline[.]{0.2}(5,6.5)(4.5,5.75)
\dottedline[.]{0.2}(-1,1)(-1.5,1)
\dottedline[.]{0.2}(-2,-1)(-2.5,-1)
\put(0.2,0.2){$D_\xi$}
\put(-0.5,-0.8){$D_{4\xi}$}
\put(1.2,-0.3){$D_{2\xi}$}
\put(2.2,0.7){$D_{1}$}
\put(-1.7,-1.8){$D_{0}$}
\put(-1.5,0.2){$D_{3\xi}$}
\put(-1,1.2){$D_{2}$}
\put(-3,-2.5){Figure 17}
\end{picture}
\end{center}
we can choose 
\begin{eqnarray*}
&&C_\xi=D_\xi\cdot D_{3\xi},C_{2\xi}=D_\xi\cdot D_{2\xi},C_{3\xi}=D_{3\xi}\cdot D_{4\xi},C_{4\xi}=D_{4\xi}\cdot D_\xi,\\
&\textrm{or,}&\\
&&C_\xi=D_\xi\cdot D_{4\xi},C_{2\xi}=D_\xi\cdot D_{2\xi},C_{3\xi}=D_{3\xi}\cdot D_\xi,C_{4\xi}=D_{4\xi}\cdot D_{3\xi}.
\end{eqnarray*}
We can see that the two choices give the same set of charge vectors. With respect to the first choice, the charge vectors are indexed by:
\begin{displaymath}
\begin{array}{ccccccccc}
l^{(\xi)}&=&(0,&0,&1,&-1,&0,&-1,&1),\\
l^{(2\xi)}&=&(0,&1,&0,&0,&-2,&0,&1),\\
l^{(3\xi)}&=&(1,&0,&0,&1,&0,&-1,&-1),\\
l^{(4\xi)}&=&(0,&0,&0,&-1,&1,&1,&-1).
\end{array}
\end{displaymath}
and our conjecture is tested by:
\ben
\begin{bmatrix} -1&0&-1&1&0 \\ 0 &-2&0&1&0 \\ 1 & 0 & -1&-1&1\\-1&1&1&-1&0\\1&0&0&0&0
\end{bmatrix}^{-1}=\begin{bmatrix}0&0&0&0&1\\1&0&0&1&2\\1&1&0&2&3\\2&1&0&2&4\\3&2&1&4&6\end{bmatrix}
\een

\item[Phase $III$] The intersection numbers are:
\begin{displaymath}
\begin{array}{c|ccccccc}
&D_0&D_1&D_2&D_\xi&D_{2\xi}&D_{3\xi}&D_{4\xi}\\
\hline
D_{3\xi}.D_{4\xi}&1&0&1&0&0&-2&0\\
D_\xi.D_1&0&0&1&-2&1&0&0\\
D_\xi.D_2&0&1&2&-4&0&0&1\\
D_\xi.D_{2\xi}&0&1&0&0&-2&0&1\\
D_{4\xi}.D_{2}&0&0&-1&1&0&1&-1\\
D_\xi.D_{4\xi}&0&0&1&-2&1&0&0
\end{array}
\end{displaymath}
From the dual graph,
\begin{center}
\setlength{\unitlength}{1cm}
\begin{picture}(0,9)(0,-3.5)
\drawline(-3,-3)(-3,-2)(-2,-1)(0,0)(1,0)(2,1)(3,3)(0,0)
\drawline(-4,-2)(-3,-2)
\drawline(-3,-1)(-2,-1)
\drawline(1,0)(1,-1)
\drawline(2,1)(2,0)
\drawline(3,3)(4,4.5)
\dottedline[.]{0.2}(-3,-3)(-3,-3.5)
\dottedline[.]{0.2}(1,-1)(1,-1.5)
\dottedline[.]{0.2}(2,0)(2,-0.5)
\dottedline[.]{0.2}(4,4.5)(4.5,5.25)
\dottedline[.]{0.2}(-3,-1)(-3.5,-1)
\dottedline[.]{0.2}(-4,-2)(-4.5,-2)
\put(0.5,0.2){$D_\xi$}
\put(-1,-0.8){$D_{4\xi}$}
\put(-4,-2.5){$D_0$}
\put(-3.5,-1.7){$D_{3\xi}$}
\put(-2,0.2){$D_2$}
\put(1.2,-0.3){$D_{2\xi}$}
\put(2.2,0.7){$D_1$}
\put(0,-3){Figure 18}
\end{picture}
\end{center}
we can choose $$C_\xi=D_\xi\cdot D_{4\xi},C_{2\xi}=D_{\xi}\cdot D_{2\xi},C_{3\xi}=D_{3\xi}\cdot D_{4\xi},C_{4\xi}=D_{4\xi}\cdot D_2.$$ Then the charge vectors are:
\begin{displaymath}
\begin{array}{ccccccccc}
l^{(\xi)}&=&(0,&0,&1,&-2,&1,&0,&0),\\
l^{(2\xi)}&=&(0,&1,&0,&0,&-2,&0,&1),\\
l^{(3\xi)}&=&(1,&0,&1,&0,&0,&-2,&0),\\
l^{(4\xi)}&=&(0,&0,&-1,&1,&0,&1,&-1).
\end{array}
\end{displaymath}
and our conjecture is tested by:
\ben
\begin{bmatrix} -2&1&0&0&0 \\ 0 &-2&0&1&0 \\ 0 & 0 & -2&0&1\\1&0&1&-1&0\\1&0&0&0&0
\end{bmatrix}^{-1}=\begin{bmatrix}0&0&0&0&1\\1&0&0&0&2\\2&1&0&1&3\\2&1&0&0&4\\4&2&1&2&6\end{bmatrix}
\een

\item[Phase $IV$] The intersection numbers are:
\begin{displaymath}
\begin{array}{c|ccccccc}
&D_0&D_1&D_2&D_\xi&D_{2\xi}&D_{3\xi}&D_{4\xi}\\
\hline
D_{3\xi}.D_{4\xi}&1&0&0&0&1&0&-2\\
D_\xi.D_1&0&0&1&-2&1&0&0\\
D_\xi.D_2&0&1&1&-3&0&1&0\\
D_\xi.D_{2\xi}&0&1&0&-1&-1&1&0\\
D_\xi.D_{3\xi}&0&0&1&-2&1&0&0\\
D_{2\xi}.D_{3\xi}&0&0&0&1&-1&-1&1
\end{array}
\end{displaymath}
From the dual graph,
\begin{center}
\setlength{\unitlength}{1cm}
\begin{picture}(0,10)(0,-4.5)
\drawline(-2,-4)(-2,-3)(-1,-2)(0,0)(1,1)(2,3)(0,1)(0,0)
\drawline(-3,-3)(-2,-3)
\drawline(-1,-2)(-1,-3)
\drawline(1,1)(1,0)
\drawline(2,3)(2.5,3.75)
\drawline(0,1)(-1,1)
\dottedline[.]{0.2}(-2,-4)(-2,-4.5)
\dottedline[.]{0.2}(-1,-3)(-1,-3.5)
\dottedline[.]{0.2}(1,0)(1,-0.5)
\dottedline[.]{0.2}(2.5,3.75)(3,4.5)
\dottedline[.]{0.2}(-1,1)(-1.5,1)
\dottedline[.]{0.2}(-3,-3)(-3.5,-3)
\put(0.1,0.7){$D_\xi$}
\put(1.5,1.8){$D_1$}
\put(-0.5,-1.5){$D_{2\xi}$}
\put(-1.8,-3.5){$D_{4\xi}$}
\put(-3,-3.5){$D_0$}
\put(-2,-1.5){$D_{3\xi}$}
\put(-1,1.8){$D_2$}
\put(0,-4){Figure 19}
\end{picture}
\end{center}
we can choose 
\begin{eqnarray*}
&&C_\xi=D_\xi\cdot D_{2\xi},C_{2\xi}=D_{2\xi}\cdot D_{3\xi},C_{3\xi}=D_{3\xi}\cdot D_\xi,C_{4\xi}=D_{4\xi}\cdot D_{3\xi},\\
&\textrm{or,}&\\
&&C_\xi=D_\xi\cdot D_{3\xi},C_{2\xi}=D_{2\xi}\cdot D_\xi,C_{3\xi}=D_{3\xi}\cdot D_{2\xi},C_{4\xi}=D_{4\xi}\cdot D_{3\xi}.
\end{eqnarray*}
With respect to the first choice, the charge vectors are indexed by:
\begin{displaymath}
\begin{array}{ccccccccc}
l^{(\xi)}&=&(0,&1,&0,&-1,&-1,&1,&0),\\
l^{(2\xi)}&=&(0,&0,&0,&1,&-1,&-1,&1),\\
l^{(3\xi)}&=&(0,&0,&1,&-2,&1,&0,&0),\\
l^{(4\xi)}&=&(1,&0,&0,&0,&1,&0,&-2).
\end{array}
\end{displaymath}
and our conjecture is tested by:
\ben
\begin{bmatrix} -1&-1&1&0&0 \\1 &-1&-1&1&0 \\ -2 & 1 & 0&0&0\\0&1&0&-2&1\\1&0&0&0&0
\end{bmatrix}^{-1}=\begin{bmatrix}0&0&0&0&1\\0&0&1&0&2\\1&0&1&0&3\\1&1&2&0&4\\2&2&3&1&6\end{bmatrix}
\een

\item[Phase $V$] The intersection numbers are:
\begin{displaymath}
\begin{array}{c|ccccccc}
&D_0&D_1&D_2&D_\xi&D_{2\xi}&D_{3\xi}&D_{4\xi}\\
\hline
D_{3\xi}.D_{4\xi}&1&0&0&0&1&0&-2\\
D_\xi.D_1&0&1&1&-3&0&1&0\\
D_\xi.D_2&0&1&1&-3&0&1&0\\
D_{3\xi}.D_1&0&-1&0&1&1&-1&0\\
D_\xi.D_{3\xi}&0&1&1&-3&0&1&0\\
D_{2\xi}.D_{3\xi}&0&1&0&0&-2&0&1
\end{array}
\end{displaymath}
From the dual graph,
\begin{center}
\setlength{\unitlength}{1cm}
\begin{picture}(0,11)(0,-4.5)
\drawline(-2,-4)(-2,-3)(-1,-2)(0,0)(1,3)(2,5)(1,4)(1,3)
\drawline(-1,-3)(-1,-2)
\drawline(0,-1)(0,0)
\drawline(2.5,5.75)(2,5)
\drawline(0,4)(1,4)
\drawline(-3,-3)(-2,-3)
\dottedline[.]{0.2}(-2,-4)(-2,-4.5)
\dottedline[.]{0.2}(-1,-3)(-1,-3.5)
\dottedline[.]{0.2}(0,-1)(0,-1.5)
\dottedline[.]{0.2}(2.5,5.75)(3,6.5)
\dottedline[.]{0.2}(0,4)(-0.5,4)
\dottedline[.]{0.2}(-3,-3)(-3.5,-3)
\put(1,0){$D_1$}
\put(-0.7,-1.8){$D_{2\xi}$}
\put(-1.7,-3.8){$D_{4\xi}$}
\put(-2.7,-3.8){$D_0$}
\put(-2,0){$D_{3\xi}$}
\put(1.1,3.8){$D_\xi$}
\put(0,-4){Figure 20}
\end{picture}
\end{center}
we can choose 
$$C_\xi=D_\xi\cdot D_{3\xi},C_{2\xi}=D_{2\xi}\cdot D_{3\xi},C_{3\xi}=D_{3\xi}\cdot D_1,C_{4\xi}=D_{4\xi}\cdot D_{3\xi}.$$ Then the charge vectors are:
\begin{displaymath}
\begin{array}{ccccccccc}
l^{(\xi)}&=&(0,&1,&1,&-3,&0,&1,&0),\\
l^{(2\xi)}&=&(0,&1,&0,&0,&-2,&0,&1),\\
l^{(3\xi)}&=&(0,&-1,&0,&1,&1,&-1,&0),\\
l^{(4\xi)}&=&(1,&0,&0,&0,&1,&0,&-2).
\end{array}
\end{displaymath}
\end{description}
and our conjecture is tested by:
\ben
\begin{bmatrix} -3&0&1&0&0 \\ 0 &-2&0&1&0 \\ 1 & 1 & -1&0&0\\0&1&0&-2&1\\1&0&0&0&0
\end{bmatrix}^{-1}=\begin{bmatrix}0&0&0&0&1\\1&0&1&0&2\\1&0&0&0&3\\2&1&2&0&4\\3&2&3&1&6\end{bmatrix}
\een

\paragraph{\textbf{Acknowledgement}} The first author would like to thank Renzo Cavalieri for helpful discussions. The second author is partially supported by NSFC grant 11171174.

\end{document}